\documentclass[12pt]{article}

\usepackage[margin=1.25in]{geometry}
\usepackage[labelfont=bf,labelsep=space, margin = 1cm]{caption}
\usepackage{graphicx}
\usepackage{authblk}
\usepackage{subcaption}
\captionsetup{labelsep=period}
\usepackage{amssymb}
\usepackage{amsmath}
\usepackage{mathrsfs}
\usepackage{mathtools}
\usepackage{amsthm}
\usepackage{amsmath}
\usepackage{booktabs}
\usepackage{accents}
\usepackage[shortlabels]{enumitem}
\usepackage[hyphens]{url}
\usepackage{float}
\usepackage{setspace}
\usepackage{multirow}
\usepackage{comment}
\usepackage{tikz}
\usepackage{tikz-3dplot}
\usepackage{pgfplots}
\usepackage{listings}
\usepackage{dsfont}
\usepackage{natbib}

\makeatletter
\def\@mb@citenamelist{citep,citet,citetalias,citeauthor,citeyear}
\makeatother

\usepackage[resetlabels]{multibib}
\usepackage[colorlinks=true,linkcolor=black!50!blue,citecolor=black!50!blue]{hyperref}

\bibliographystyle{ecta}
\numberwithin{equation}{section}
\newcites{appendix}{References (Appendix)}

\usetikzlibrary{shapes.multipart,matrix,calc,positioning,arrows,trees,patterns,intersections,shapes.geometric,decorations.pathreplacing}


\newcommand{\beq}{\begin{equation}}
\newcommand{\eeq}{\end{equation}}
\newcommand{\ba}{\begin{enumerate}[(a)]}
\newcommand{\bi}{\begin{enumerate}[(i)]}
\newcommand{\bn}{\begin{enumerate}[(1)]}
\newcommand{\ee}{\end{enumerate}}
\newcommand{\ve}{\varepsilon}

\DeclareMathOperator*{\argmin}{argmin}

\theoremstyle{definition}

\newtheorem{assumptiona}{Assumption}
\newtheorem{proposition}{Proposition}
\newtheorem{remark}{Remark}

\newtheorem{corollary}{Corollary}

\newlist{steps}{enumerate}{1}
\setlist[steps,1]{leftmargin=*, label = \textbf{Step \arabic*:}}
\newlist{groups}{enumerate}{1}
\setlist[groups,1]{leftmargin=*, label = \textbf{Group \arabic*:}}

\defcitealias{blundell-wp}{Blundell, Kristensen, and Matzkin}

\makeatletter
\newcommand*{\hyperlinkcite}[1]{\hyper@link{cite}{cite.#1\@extra@b@citeb}}
\makeatother

\begin{document}

\title{Consumer Theory with Non-Parametric \\ Taste Uncertainty and Individual Heterogeneity\thanks{The second author gratefully acknowledges financial support from the ACPR Chair (Regulation and Systemic Risks), and the ERC DYSMOIA. We thank Martin Burda, JoonHwan Cho, Rahul Deb, Jiaying Gu, Joann Jasiak, Kyoo il Kim, Yao Luo, Etienne Masson-Makdissi, Ismael Mourifi\'{e}, Jesse Shapiro, Eduardo Souza-Rodrigues, and participants at the \href{http://www.cfenetwork.org/CFE2020/}{$14^{th}$ International Conference on Computational and Financial Econometrics (2020)} and a seminar at the University of Toronto for useful comments. This research was enabled by support from Sci-Net (\href{www.scinethpc.ca}{www.scinethpc.ca}) and Compute Canada (\href{www.computecanada.ca}{www.computecanada.ca}). Researcher(s) own analyses calculated (or derived) based in part on data from The Nielsen Company (US), LLC and marketing databases provided through the Nielsen Datasets at the Kilts Center for Marketing Data Center at The University of Chicago Booth School of Business. The conclusions drawn from the Nielsen data are those of the researcher(s) and do not reflect the views of Nielsen. Nielsen is not responsible for, had no role in, and was not involved in analyzing and preparing the results reported herein. The ACPR is not responsible for any of our results. The conclusions in this paper do not reflect the \mbox{views of the ACPR.}}}
\author{Christopher Dobronyi\thanks{Dobronyi: University of Toronto; christopher.dobronyi@mail.utoronto.ca.} \ \quad and \quad Christian Gouri\'{e}roux\thanks{Gouri\'{e}roux: University of Toronto, Toulouse School of Economics, and CREST.}}

\maketitle

\newpage

\begin{center}
	\phantom{1}\vspace{1cm} \\ \LARGE Consumer Theory with Non-Parametric \\ Taste Uncertainty and Individual Heterogeneity\vspace{0.35cm}
\end{center}

\begin{abstract}
\vspace{0.35cm}
\noindent We introduce two models of non-parametric random utility for demand systems: the stochastic absolute risk aversion (SARA) model, and the stochastic safety-first (SSF) model. In each model, individual-level heterogeneity is characterized by a distribution $\pi\in\Pi$ of taste parameters, and heterogeneity across consumers is introduced using a distribution $F$ over the distributions in $\Pi$. Demand is non-separable and heterogeneity is infinite-dimensional. Both models admit corner solutions. We consider two frameworks for estimation: a Bayesian framework in which $F$ is known, and a hyperparametric (or empirical Bayesian) framework in which $F$ is a member of a known parametric family. Our methods are illustrated by an application to a large U.S. panel of scanner data on alcohol consumption.
\end{abstract}
\vspace{0.5cm}
\noindent \textbf{Keywords:} Consumer Theory, Scanner Data, Stochastic Demand, Taste Heteroge- neity, Non-Parametric Model, Bayesian Approach.

\newpage

\section{Introduction}
\label{sec:1}

The recent availability of databases containing all dated purchases made by a \mbox{large nu-} mber of consumers (28,036 in our application) presents a modern challenge \mbox{for the eco-} nometrics of demand systems, requiring new models and estimation approaches (see, for example, \citeauthor{burda-2008}, \citeyear{burda-2008}, \citeyear{burda-2012}, for discrete choice, and \citeauthor{guha-ng}, \citeyear{guha-ng}, \citeauthor{cher-newey}, \citeyear{cher-newey}, and \citeauthor{dobronyi}, \citeyear{dobronyi}, for the first analyses of such data in the demand literature). This type of data is commonly called \emph{scanner data} because its collection involves retailers or households scanning each purchased good on the date of purchase. This paper introduces two models of random utility for scanner data: the stochastic absolute risk aversion (SARA) model, and the stochastic safety-first (SSF) model. These models have the following advantages in comparison with the existing literature:
\bi
	\item Both models are \emph{consistent with consumer theory}: Every consumer maximizes a strictly increasing and strictly quasi-concave utility function. The \mbox{latter prop-} erty is not accommodated by existing approximations of the utility function like the quadratic approximation of the utility function (\citeauthor{theil-neudecker}, \citeyear{theil-neudecker}; \citeauthor{barten}, \citeyear{barten}), the translog utility model (\citeauthor{johansen-relationship}, \citeyear{johansen-relationship}; \citeauthor{transcendental}, \citeyear{transcendental}), or the Almost Ideal Demand System \citep{aids} and its extensions (\citeauthor{bbl}, \citeyear{bbl}; \citeauthor{moschini}, \citeyear{moschini}).

	\item Both models are non-parametric. In each model, the utility function is indexed by a functional parameter characterizing the individual heterogeneity, allowing for \emph{infinite-dimensional heterogeneity}. In this respect, our paper \mbox{differs from the} existing literature when finite-dimensional heterogeneity is considered (see \citeauthor{beckert-blundell}, \citeyear{beckert-blundell}, \citeauthor{blom-kumar-more}, \citeyear{blom-kumar-more}, \hyperlinkcite{blundell}{Blundell, Horowitz, and Parey}, \hyperlinkcite{blundell}{2017}, and \hyperlinkcite{blundell-wp}{Blundell, Kristensen, and Matzkin}, \hyperlinkcite{blundell-wp}{2017}, for some examples of finite-dimensional restrictions). Our approach is in line with \citet{dette} who write, ``in general the multivariate demand function is a non-monotonic function of an infinite-dimensional unobservable---the individual's preference ordering.''

	\item Both models yield demand functions with \emph{non-separable heterogeneity} (see the discussions in \citeauthor{brown}, \citeyear{brown}, \citeauthor{beckert-blundell}, \citeyear{beckert-blundell}, and \citeauthor{dette}, \citeyear{dette}). They are also endowed with precise \emph{structural interpretations}, as heterogeneity is introduced by means of a distribution $\pi$ of taste parameters, so that we can imagine consumers facing \emph{taste uncertainty}, which they eliminate using expected utility.

	\item Both models are \emph{identified} under weak restrictions. Identification follows from the use of panel data. Without such data, we lose identification (\citeauthor{hn-individual-het}, \citeyear{hn-individual-het}). Of course, the structure of scanner data is extremely important.
\ee

Each model is characterized by a basis of functions. This basis is used to generate a family of utility functions. A distribution is, then, placed over this family. To be precise, we start with a basis of increasing and concave functions. Let $U(x;a)$ denote an element of this basis, where $x$ is a bundle and $a\in \mathscr{A}$ is a finite-dimensional vector of taste parameters. A family of utility functions is generated by taking the convex hull of the basis. Let $U(x;\pi)=\mathbb{E}_{\pi}\big[U(x;a)\big]$ denote an element of this family, where $\pi\in \Pi$ is a distribution on $\mathscr{A}$. This family is indexed by a functional parameter $\pi$, which can be structurally interpreted as taste uncertainty (resolved \mbox{after the consumer} makes her decisions). The heterogeneity across consumers is introduced using a distribution $F$ on the set $\Pi$ of probability distributions $\pi$ on $\mathscr{A}$. Therefore, each model combines uncertainty and heterogeneity: the uncertainty in taste for a given consumer is represented by $\pi$, and the heterogeneity across consumers is captured by $F$.

The paper considers a two-good framework with continuous support for $x$. It is organized as follows: Section \ref{sec:2} introduces the stochastic absolute risk aversion (SARA) model and Section \ref{sec:3} introduces the stochastic safety-first (SSF) model. For each mo- del, we derive conditions on $\Pi$ under which there exists a unique demand system, for each $\pi\in \Pi$. \mbox{In Section \ref{sec:indhetero},} the distribution of heterogeneity $F$ is introduced. When $F$ is known, we obtain a Bayesian framework in which the functional parameter $\pi\in \Pi$ has to be estimated. When $F$ is a member of a known parametric family, indexed by $\theta$, we obtain an empirical Bayesian framework with a hyperparameter $\theta$ that has to be estimated, and a stochastic functional parameter $\pi$ that has to be filtered. In Section \ref{sec:np.ident}, we consider the identification of the taste distribution $\pi$ \emph{within} each model. Next, \mbox{we examine if it is} possible to distinguish \emph{between} stochastic risk aversion and stochastic safety-first. In Section \ref{sec:illustration}, we use the Nielsen Homescan Consumer Panel to illustrate our methodology in an application to the consumption of alcohol. Section \ref{sec:conclusion} concludes. The details of the Dirichlet process are in Appendix \ref{app:dir}; integrability is discussed in Appendix \ref{app:int}; an optimization procedure for filtering the taste distributions $\pi$ after estimating $F$ is in Appendix \ref{app:filter}; details of the data are placed in \mbox{Appendix \ref{app:d}.}

\section{A Model with Stochastic Risk Aversion}
\label{sec:2}

This section introduces the first utility specification that we consider. It first describes the set of utility functions, then derives conditions under which there exists a unique demand system. The taste uncertainty is introduced using risk aversion parameters.

\subsection{The Set of Utility Functions}

There are two goods, denoted $1$ and $2$. Let $\bar{R}=\mathbb{R}_{+}^2$ denote the non-negative orthant with interior $R$. A consumer has preferences over the bundles in $\bar{R}$. Her preferences are summarized by a utility function of the form:
\beq
\label{utility}
	U(x;\pi)=-\mathbb{E}_{\pi}\big[\exp(-A'x)\big],
\eeq
for every $x$ such that $x_1,x_2\geq 0$, where $A=(A_1,A_2)$ is a positive stochastic parameter characterizing the consumer's degrees of absolute risk aversion with respect \mbox{to goods 1} and 2, and $\pi$ is a joint distribution for this pair of stochastic taste parameters. Her preferences are, as a result, contained in a broad family of utility functions, indexed by a functional parameter $\pi$. There are two interpretations of specification \mbox{\eqref{utility}: (i) the} preferences are summarized by a deterministic utility function in the \mbox{convex hull gen-} erated by a parametric family, or (ii) the consumer faces ``taste uncertainty'' and she resolves this uncertainty by using expected utility. We call these preferences \emph{stochastic absolute risk aversion} (SARA) preferences.\footnote{These preferences differ from those used to describe consumer behaviour when facing ambiguity or uncertainty, as in, say, \citet{halevy-feltkamp}.}

If $\pi$ is a point mass at $a=(a_1,a_2)$ such that $a_1,a_2>0$, the stochastic parameters are constant, and $U(x;\pi)$ reduces to $U(x;a)=-\exp(-a'x)$. This function is strictly increasing because we have:\footnote{Here, $>0$ means each component is strictly larger than $0$.}
\beq
	\frac{\partial U(x;a)}{\partial x}=\left[\begin{array}{c}a_1\exp(-a'x) \\ a_2\exp(-a'x)\end{array}\right]>0,
\eeq
at each $x$ such that $x_1,x_2>0$, and concave (although not necessarily strictly concave) because the Hessian associated with the utility function:
\beq
\label{hessian}
	\frac{\partial^2 U(x;a)}{\partial x\partial x'}=-\exp(-a'x)\left(\begin{array}{cc} a_1^2 & a_1a_2 \\ a_1a_2 & a_2^2\end{array}\right),
\eeq
is negative semi-definite, at each $x$ such that $x_1,x_2>0$. This matrix is \mbox{related to a} bivariate measure of absolute risk aversion\footnote{Such a measure can be defined as:
\[
	-\left(\text{diag}\frac{\partial U(x;\pi)}{\partial x}\right)^{-1/2}\frac{\partial^2 U(x;a)}{\partial x\partial x'}\left(\text{diag}\frac{\partial U(x;\pi)}{\partial x}\right)^{-1/2},
\]
where $\text{diag}\frac{\partial U(x;\pi)}{\partial x}$ is the diagonal matrix whose diagonal elements are the first derivatives of $U(x;\pi)$.} (\citeauthor{richard-1975}, \citeyear{richard-1975}; \citeauthor{karni-1979}, \citeyear{karni-1979}, \citeyear{karni-1983}; \citeauthor{grant}, \citeyear{grant}). These properties translate into properties of the more general function: $U(x;\pi)$.

\begin{proposition}
\label{prop:1}
	If preferences are SARA and the consumer's taste distribution $\pi$ is not the mixture of point masses $a,a'\in R$ where $a$ is proportional to $a'$, then the utility \makebox[\textwidth][s]{function $U(x;\pi)$ is strictly increasing with a negative definite Hessian everywhere on $R$.}
\end{proposition}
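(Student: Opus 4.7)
The plan is to compute the gradient and Hessian of $U(x;\pi)$ by differentiating under the expectation, represent the Hessian's quadratic form as a (negated) expectation of a non-negative integrand, and then argue that vanishing of that quadratic form is incompatible with the hypothesis on $\pi$. To justify the exchange of derivatives and expectation, I would fix $x\in R=(0,\infty)^2$ and, on a small neighborhood of $x$ whose coordinates are bounded below by some $\delta>0$, dominate the relevant integrands by $(A_1+A_2)^2\exp(-\delta(A_1+A_2))$, which is uniformly bounded in $A\in R$ and therefore $\pi$-integrable (the required integrability being exactly of the kind discussed in Appendix \ref{app:int}). Dominated convergence then yields $\partial U(x;\pi)/\partial x_i=\mathbb{E}_\pi[A_i\exp(-A'x)]$ and $\partial^2 U(x;\pi)/\partial x_i\partial x_j=-\mathbb{E}_\pi[A_iA_j\exp(-A'x)]$.

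Strict monotonicity is then immediate, since $A\in R$ $\pi$-a.s.\ makes each $\partial U/\partial x_i$ the expectation of a strictly positive random variable. For the Hessian, I would pick any nonzero $v\in\mathbb{R}^2$ and observe that
\[
v'\frac{\partial^2 U(x;\pi)}{\partial x\,\partial x'}v\;=\;-\mathbb{E}_\pi\!\bigl[(v'A)^2\exp(-A'x)\bigr]\;\le\;0,
\]
so the Hessian is always negative semi-definite. It fails to be strictly negative definite at $x$ exactly when this expectation vanishes for some nonzero $v$; because $\exp(-A'x)>0$ everywhere, vanishing forces $v'A=0$ $\pi$-almost surely, i.e.\ $\pi$ is supported on the line $L_v=\{a\in\mathbb{R}^2:v'a=0\}$ through the origin.

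The main step is then to translate this degeneracy condition into the excluded case. The intersection $L_v\cap R$ is either empty or a single open ray $\{ta^{*}:t>0\}$ emanating from the origin, so $\pi$-a.s.\ vanishing of $v'A$ means all of $\pi$'s mass lies on one such ray, making any two atoms $a,a'$ of its support proportional---precisely the configuration excluded by the hypothesis. Contrapositively, whenever $\pi$ is not of the excluded form, $v'H(x;\pi)v<0$ for every nonzero $v$ and every $x\in R$, completing the proof. I expect the main difficulty to be this last translation: recognizing the quadratic form as a strictly positively weighted $L^2(\pi)$-type functional of the linear map $a\mapsto v'a$, so that its vanishing is equivalent to support on a one-dimensional subspace through the origin, which in the open positive orthant collapses to a single ray.
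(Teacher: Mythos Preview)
Your argument is correct and is essentially the same as the paper's, just carried out more carefully: the paper's one-line Hessian argument (``the sum of two $2\times 2$ matrices of rank~1, whose columns are not proportional, has full rank'') is exactly your quadratic-form observation that $v'H(x;\pi)v=-\mathbb{E}_\pi[(v'A)^2\exp(-A'x)]$ vanishes only when the support of $\pi$ lies on a single ray through the origin. Your version adds the dominated-convergence justification and treats general (not just discrete) $\pi$ explicitly, which the paper leaves implicit.
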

\begin{proof}
	The utility function $U(x;\pi)$ is strictly increasing on $R$ because:
	\beq
		\frac{\partial}{\partial x}\mathbb{E}_{\pi}\left[U(x;A)\right]=\mathbb{E}_{\pi}\left[\frac{\partial U(x;A)}{\partial x}\right]>0,
	\eeq
	at every $x$ such that $x_1,x_2>0$. Its Hessian is negative definite on $R$ because the sum of two 2-by-2 matrices of rank 1, whose columns are not proportional, \mbox{has full rank.}
\end{proof}

Proposition \ref{prop:1} implies that we have effectively constructed a family of well-behaved utility functions $\{U(x;\pi):\pi\in\Pi\}$ indexed by a functional parameter $\pi$, describing the taste uncertainty, instead of the standard finite-dimensional parameter \mbox{usually con-} sidered in the literature.

Let $g_{\pi}(\cdot)$ denote the function defined by the implicit equation:
\beq
	U(x_1,g_{\pi}(x_1,u);\pi)=u,
\eeq
for every $x_1\geq 0$, and each (attainable) level of utility $u<0$. This implicit equation has a unique solution because $U(x;\pi)$ is strictly increasing on $\bar{R}$. The function $g_{\pi}(\cdot,u)$ is the \emph{indifference curve} associated with the functional parameter $\pi$ and a utility level of $u$---$g_{\pi}(\cdot,u)$ maps every value of $x_1$ to a value of $x_2$ for which $(x_1,x_2)$ attains a utility level of $u$ given $\pi$. The implicit function theorem implies that $g_{\pi}(\cdot)$ is twice-continuo- usly-differentiable with respect to $x_1$ and:
\beq
\label{ode0}
      \frac{\partial g_{\pi}(x_1,u)}{\partial x_1}=-\text{MRS}(x_1,g_{\pi}(x_1,u);\pi),
\eeq
on $R$ where $\text{MRS}(x;\pi)\equiv \frac{\partial U(x;\pi)/\partial x_1}{\partial U(x;\pi)/\partial x_2}$ denotes the \emph{marginal rate of substitution} at $x$---the rate at which the consumer is willing to exchange good 1 for good 2 \mbox{given $x$ and $\pi$.} The indifference curve $g_{\pi}(\cdot,u)$ is strictly convex such that:
\beq
	\frac{\partial^2 g_{\pi}(x_1,u)}{\partial x_1^2}>0,
\eeq
at every $x_1>0$, since the Hessian of $U(x;\pi)$ is negative definite \mbox{everywhere on $R$ (see} Lemma 1 in \citeauthor{dobronyi}, \citeyear{dobronyi}). This property is stronger than the sta- ndard assumption of strict quasi-concavity, which allows this derivative to be zero on a nowhere dense set (\citeauthor{katzner}, \citeyear{katzner}). This distinction is important for what follows.

Note that, after integrating out the taste uncertainty, the absolute risk aversions will depend on the consumption level. For instance, when $A_1$ and $A_2$ are independent with distributions $\pi_1$ and $\pi_2$, the risk aversion for good 1 becomes:
\beq
\label{ara1}
	A_1(x_1)=-\frac{d^2U_1(x_1;\pi_1)/dx_1^2}{dU_1(x_1;\pi_1)/dx_1}=\frac{\mathbb{E}_{\pi_1}[A_1^2\exp(-A_1x_1)]}{\mathbb{E}_{\pi_1}[A_1\exp(-A_1x_1)]},
\eeq
where $U_1(x_1;\pi_1)$ denotes $\mathbb{E}_{\pi_1}[\exp(-A_1x_1)]$, the portion of the utility function $U(x;\pi)$ corresponding to good 1. Clearly, $A_1(x_1)$ depends on $x_1$. Indeed, it is the average of $A_1$ given the following modified density:
\beq
	\frac{A_1\exp(-A_1x_1)}{\mathbb{E}_{\pi_1}[A_1\exp(-A_1x_1)]},
\eeq
with respect to $\pi_1$.

\subsection{The Demand Function}

Let $z\in R$ denote a pair $z=(y,p)$ in which $y$ denotes expenditure and $p$ denotes the price of good 1, both normalized by the price of good 2. The consumer can purchase a bundle $x\in \bar{R}$ if, and only if, $px_1+x_2\leq y$. She chooses a bundle $x\in\bar{R}$ that solves:
\beq
\label{umax}
	\max_{x\in \bar{R}} \; U(x;\pi) \; \; \text{subject to} \; \; px_1+x_2\leq y.
\eeq

Let $X^*(z;\pi)$ denote the solution to:
\beq
\label{unconstrained}
	\max_{x\in\mathbb{R}^2} \; -\mathbb{E}_{\pi}\big[\exp(-A'x)\big] \; \; \text{subject to} \; \; px_1+x_2\leq y.
\eeq
While \eqref{umax} is restricted to bundles in the non-negative orthant, \eqref{unconstrained} allows for neg- ative values. The solution to \eqref{unconstrained} is characterized by the following \mbox{system of first-} order conditions:
\beq
\label{foc}
	\text{MRS}(x;\pi)\equiv\frac{\mathbb{E}_{\pi}\big[A_1\exp(-A'x)\big]}{\mathbb{E}_{\pi}\big[A_2\exp(-A'x)\big]}=p \; \; \text{and} \; \; px_1+x_2-y=0.
\eeq
The first equality says that the marginal rate of substitution equals the relative price $p$. The second equality says that the budget constraint holds with equality. Equivalently, we can solve the equality:
\beq
\label{opt}
	\mathbb{E}_{\pi}\big[(A_1-pA_2)\exp(-(A_1-pA_2)x_1)\exp(-A_2y)\big]=0,
\eeq
for the first component $X_1^*(z;\pi)$, and then use the  budget constraint in \eqref{foc} to solve for $X_2^*(z;\pi)$. As long as $A_1-pA_2$ is not almost surely equal to zero, the first-order partial derivative of the left side of this equality with respect to $x_1$ is strictly negative:
\beq
	-\mathbb{E}_{\pi}\big[(A_1-pA_2)^2\exp(-(A_1-pA_2)x_1)\exp(-A_2y)\big]<0.
\eeq
The function on the left side of \eqref{opt} is, therefore, strictly decreasing in $x_1$, implying that there exists a unique solution $X_1^*(z;\pi)$ to \eqref{opt}, and a unique solution $X^*(z;\pi)$ to \eqref{unconstrained}. If $X^*(z;\pi)$ is in $\bar{R}$, then $X^*(z;\pi)$ coincides with the solution to \eqref{umax}. Else, the solution to \eqref{umax} is on the boundary of $\bar{R}$. Let $X(z;\pi)$ denote the solution to \eqref{umax} given both $z$ and $\pi$. There are three \emph{regimes} of demand in the design space:
\beq
\label{cases}
	X(z;\pi)=
	\begin{cases}
		(0,y)', & \text{if $X_1^*(z;\pi)\leq 0$}, \\
		X^*(z;\pi), & \text{if $0\leq X_1^*(z;\pi)\leq y/p$}, \\
		(y/p,0)', & \text{if $y/p\leq X_1^*(z;\pi)$}.
	\end{cases}
\eeq
Because the utility function $U(x;\pi)$ has strictly convex indifference curves everywhere on $R$, the demand function $X(z;\pi)$ is invertible in the second regime (see Proposition 2 in \citeauthor{dobronyi}, \citeyear{dobronyi}).

\begin{proposition}
\label{prop:2}
	If preferences are SARA and the consumer's taste distribution $\pi$ is not the mixture of point masses $a,a'\in R$ where $a$ is proportional to $a'$, then there exists a unique solution $X(z;\pi)$ to the maximization problem in \eqref{umax} given $z$ and $\pi$, for every $z\in R$, almost surely, for every $\pi$. There are three regimes of demand defined by \eqref{cases}. The resulting demand function $X(z;\pi)$ is invertible in the \mbox{second regime.}
\end{proposition}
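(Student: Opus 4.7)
My plan is to reduce the constrained problem \eqref{umax} to an analysis of the one-dimensional equation \eqref{opt}, and then assemble existence, uniqueness, the regime characterization, and invertibility in four short steps.

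First, I would note that existence of a maximizer for \eqref{umax} is immediate from compactness of the budget set $\{x\in\bar R:\, px_1+x_2\leq y\}$ together with continuity of $U(\cdot;\pi)$ on $\bar R$. Uniqueness on the interior follows from Proposition \ref{prop:1}: the Hessian is negative definite on $R$, so $U(\cdot;\pi)$ is strictly concave there. On the two coordinate axes, $U$ reduces to $-\mathbb{E}_\pi[\exp(-A_k x_k)]$, which is strictly increasing in $x_k$, so the restricted maxima are attained at the unique endpoints $(0,y)$ and $(y/p,0)$.

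Second, for the unconstrained problem \eqref{unconstrained}, I would substitute the binding budget equation into the first-order condition \eqref{foc} to obtain the scalar equation \eqref{opt}. Differentiating its left-hand side under the integral sign---a step I would justify by dominated convergence, using that $A_1,A_2$ are positive and the exponential factors are locally bounded in $x_1$---yields the strictly negative derivative displayed in the text, \emph{provided} $\Pr_{\pi}(A_1=pA_2)<1$. For a fixed admissible $\pi$, the set of relative prices $p$ at which $\pi$ assigns full mass to the line $\{a_1=pa_2\}$ is at most a singleton, so this condition fails only on a Lebesgue-null set of $z$, which I take to be the content of the ``almost surely'' qualifier. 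On the complement, the LHS of \eqref{opt} is strictly monotone, and its limits as $x_1\to\pm\infty$ (interpreted as the appropriate one-sided limits when $A_1-pA_2$ does not change sign under $\pi$) pin down a unique root $X_1^{*}(z;\pi)$, with $X_2^{*}(z;\pi)=y-pX_1^{*}(z;\pi)$ obtained from the budget constraint.

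Third, the three regimes of \eqref{cases} emerge from comparing $X^{*}(z;\pi)$ to the feasible segment. If $X_1^{*}(z;\pi)\in[0,y/p]$, then $X^{*}(z;\pi)\in\bar R$ satisfies the KKT conditions and, by strict concavity, coincides with the unique constrained maximizer. If $X_1^{*}(z;\pi)<0$, the unconstrained gradient field points away from $\bar R$ across the axis $\{x_1=0\}$, so strict quasi-concavity together with the first step force the maximum onto that axis, yielding $(0,y)$; the case $X_1^{*}(z;\pi)>y/p$ is symmetric and gives $(y/p,0)$. Invertibility in the second regime is then immediate from Proposition \ref{prop:1} via the strict convexity of indifference curves and Proposition 2 of \citet{dobronyi}. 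The main obstacle will be Step 2---carefully delineating the degenerate set of $z$ on which \eqref{opt} fails to have a unique root, and matching it to the ``almost surely'' clause; everything else is standard concave-maximization bookkeeping once Proposition \ref{prop:1} is in hand.
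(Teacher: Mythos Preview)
Your proposal is correct and follows essentially the same route as the paper, which presents its argument in the text immediately preceding Proposition~\ref{prop:2} rather than in a separate proof environment: reduce to the scalar equation \eqref{opt}, use the strictly negative derivative to get uniqueness of $X_1^{*}$, compare $X^{*}$ to $\bar R$ to obtain the three regimes in \eqref{cases}, and invoke the strict convexity of indifference curves (Proposition~\ref{prop:1}) together with Proposition~2 of \citet{dobronyi} for invertibility. Your Steps~1 and~3 supply some bookkeeping (compactness/continuity for existence, KKT reasoning for corners) that the paper leaves implicit, and your attention in Step~2 to the degenerate case where $A_1-pA_2$ fails to change sign under $\pi$ is a point the paper glosses over---its claim that strict monotonicity alone ``implies'' a root of \eqref{opt} tacitly assumes that case away---so your caution there is warranted rather than excessive.
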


As a final remark, let us consider a \emph{risk-neutral} consumer. In particular, \mbox{let us ass-} ume that $A_1$ and $A_2$ tend stochastically to zero, with means that tend to zero so that $\mathbb{E}_{\pi}[A_1]/\mathbb{E}_{\pi}[A_2]$ converges to a non-degenerate $a_0$. By considering the Taylor expansion of utility, it can be shown that these preferences are \mbox{represented by:}
\beq
	x_1+\frac{x_2}{a_0}.
\eeq
This representation is unique up to an increasing transformation. For this risk-neutral consumer, goods are considered to be \emph{perfect substitutes}. It is known that such a consumer will consume only good 1 whenever $p<a_0$, and only \mbox{good 2 whenever $p>a_0$.}

\subsection{Gamma Taste Uncertainty}
\label{example:1}

As an illustration, let us assume that $A_1$ and $A_2$ are independent and that $A_j$ has a Gamma distribution $\gamma(\nu_j,\alpha_j)$ with \emph{degree of freedom} $\nu_j>0$ and \emph{scale factor} $\alpha_j>0$, for $j=1,2$. Under this specification, $\pi=\gamma(\nu_1,\alpha_1)\otimes\gamma(\nu_2,\alpha_2)$, where $\otimes$ denotes the tensor product of distributions. By the Laplace transform of the Gamma distribution:
\beq
	U(x;\pi)=-\left(\frac{\alpha_1}{\alpha_1+x_1}\right)^{\nu_1}\left(\frac{\alpha_2}{\alpha_2+x_2}\right)^{\nu_2}.
\eeq
Under this specification, the absolute risk aversion for good 1 in \eqref{ara1} becomes:
\beq
	A_1(x_1)=\frac{\nu_1}{\alpha_1+x_1},
\eeq
which is hyperbolic in $x_1$. The indifference curve $g_{\pi}(\cdot)$ associated with utility \mbox{level $u$ is:}
\beq
	x_2=g_{\pi}(x_1,u)\equiv
		\alpha_2\left\{\left[-\frac{1}{u}\left(\frac{\alpha_1}{\alpha_1+x_1}\right)^{\nu_1}\right]^{\frac{1}{\nu_2}}-1\right\},
\eeq
for every $x_1\geq 0$ and $u\in(-1,0)$ such that:
\beq
\label{gamma:conditionx1}
	x_1<\alpha_1\left[\left(-\frac{1}{u}\right)^{\frac{1}{\nu_1}}-1\right].
\eeq
It is easily shown that the second derivative of the indifference curve $g_{\pi}(\cdot,u)$ equals:
\beq
	\frac{d^2g_{\pi}(x_1,u)}{dx_1^2}=c\left(\frac{\alpha_1}{\alpha_1+x_1}\right)^{\frac{\nu_1}{\nu_2}+2}>0,
\eeq
for some $c>0$. This inequality confirms that the indifference curve $g_{\pi}(\cdot,u)$ is strictly convex. Furthermore, the MRS is equal to:
\beq
	\text{MRS}(x;\pi)=\frac{\nu_1}{\nu_2}\frac{\alpha_2+x_2}{\alpha_1+x_1}.
\eeq
The unconstrained solution $X_1^*(z;\pi)$ to the first-order condition in \eqref{foc} is equal to:
\beq
	X_1^*(z;\pi)=\frac{\nu_1}{\nu_1+\nu_2}\cdot \frac{y}{p}+\frac{\nu_1\alpha_2}{\nu_1+\nu_2}\cdot\frac{1}{p}-\frac{\nu_2\alpha_1}{\nu_1+\nu_2}.
\eeq
The second component $X_2^*(z;\pi)$ is deduced from the budget constraint in \eqref{foc}. By equation \eqref{cases}, the demand function $X(z;\pi)$ coincides with $X^*(z;\pi)$ over the set $\mathcal{Z}$ of pairs $z$ such that:
\beq
	\min\big\{\nu_1y-p\nu_2\alpha_1+\nu_1\alpha_2,\, \nu_2y+p\nu_2\alpha_1-\nu_1\alpha_2\big\}>0.
\eeq
The three regimes of demand are illustrated in Figure \ref{fig:gammaregimes} in the design space. The strict convexity of the indifference curve $g_{\pi}(\cdot,u)$ on $\mathcal{Z}$ implies that the demand function $X(\cdot;\pi)$ associated with this utility function is invertible on $\mathcal{Z}$.

\begin{figure}
      \centering
      \begin{tikzpicture}[very thick,
                  level 1/.style={sibling distance=20mm},
                  level 2/.style={sibling distance=25mm},
                  level 3/.style={sibling distance=15mm},
                  every circle node/.style={minimum size=1.25mm,inner sep=0mm}]
                  \scriptsize

                  \draw[fill=red!20!white, draw=none] (0,0) -- (2,0) -- (0,2) -- cycle;
                  \draw[fill=blue!20!white, draw=none] (2,0) -- (5,0) -- (5,4) -- (4,4) -- cycle;
          		  \draw[fill=green!20!white, draw=none] (0,2) -- (2,0) -- (4,4) -- (0,4) -- cycle;
                  \draw (0,2) node[left] {$\frac{\nu_1\alpha_2}{\nu_2}$} -- (2,0) node[below] {$\frac{\nu_1\alpha_2}{\nu_2\alpha_1}$} -- (4,4);

                  \node at (0,0) (x1) {};
                  \node at (5.2,0) (x2) {};
                  \draw[->] (x1.center) -- (x2.center) node[right] {$p$};

                  \node at (0,0) (y1) {};
                  \node at (0,4.2) (y2) {};
                  \draw[->] (y1.center) -- (y2.center) node[above] {$y$};

            \end{tikzpicture}
      \caption{\textbf{Regimes for Gamma Taste Uncertainty.} The red region contains all designs $z$ for which $X_1(z;\pi)>0$ and $X_2(z;\pi)=0$; the blue region contains all designs $z$ for which $X_1(z;\pi)=0$ and $X_2(z;\pi)>0$; the green region contains all designs $z$ for which $X_1(z;\pi)>0$ and $X_2(z;\pi)>0$.}
      \label{fig:gammaregimes}
\end{figure}
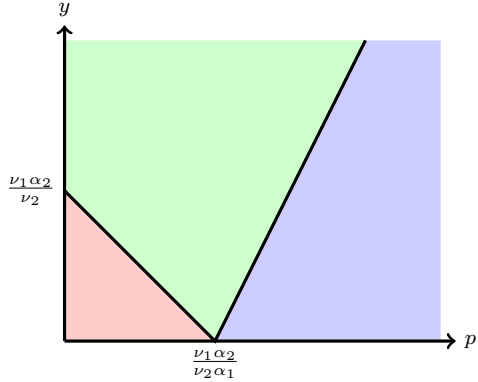

\section{A Model with Stochastic Safety-First}
\label{sec:3}

We now consider a model with taste parameters that have a \mbox{\emph{safety-first} interpretation.}

\subsection{The Set of Utility Functions}

In Section \ref{sec:2}, we constructed a family of well-behaved utility functions by taking the convex hull generated by a particular basis. In this section, we consider another basis, consisting of functions with the form:
\beq
\label{safety:1}
	U(x;a)=(x_1+a_1x_2)-(x_1+a_1x_2-a_2)^+=\min\big\{x_1+a_1x_2,\, a_2\big\},
\eeq
for every $x_1,x_2\geq 0$, where $x^+=\max\{0,x\}$ and $a_1,a_2>0$. This function corresponds to the ``safety-first'' criterion, introduced into the literature on portfolio management by \citet{roy-safety}. In order to illustrate, let us consider the consumption of alcohol, as in \citet{dobronyi}. Suppose that there are two groups of goods: group 1 consisting of drinks with low alcohol by volume such as beers and ciders, and group 2 consisting of drinks with high alcohol by volume such as wines and liquors. Assume that the quantities are measured in identical units such as volume of alcohol---\mbox{that is,} the total volume of the drink in litres multiplied by the alcohol by volume of the drink.\footnote{Quantities could be, alternatively, measured in calories.} We can, then, add these volumes to aggregate two drinks with different sizes and/or percentages of alcohol. Here, $a_1$ is the consumer's relative preference between the two groups of drinks, and $a_2$ is a ``control'' parameter, specifying her attempt to limit her intake of alcohol.

Now, let us introduce a distribution $\pi$ such that $\mathbb{E}_{\pi}[A_j]<\infty$, $j=1,2$, and define:
\beq
	U(x;\pi)=\mathbb{E}_{\pi}\big[(x_1+A_1x_2)-(x_1+A_1x_2-A_2)^+\big].
\eeq
By the law of iterated expectations, we obtain:
\beq
	U(x;\pi)=x_1+\mathbb{E}_{\pi}\big[A_1\big]x_2-\mathbb{E}_{\pi}\mathbb{E}_{\pi}\big[(x_1+A_1x_2-A_2)^+\big|A_1\big].
\eeq
We call these preferences \emph{stochastic safety-first} (SSF) preferences.

Under mild regularity conditions:
\begin{align}
	\frac{\partial U(x;\pi)}{\partial x_1}&=1-\mathbb{E}_{\pi}\mathbb{E}_{\pi}\big[\mathds{1}\{x_1+A_1x_2-A_2> 0\}\big|A_1\big], \\
	&=\mathbb{E}_{\pi}\big[\mathds{1}\{x_1+A_1x_2-A_2< 0\}\big], \\
	\frac{\partial U(x;\pi)}{\partial x_2}&=\mathbb{E}_{\pi}\big[A_1\big]-\mathbb{E}_{\pi}\big[A_1\mathbb{E}_{\pi}\big[\mathds{1}\{x_1+A_1x_2-A_2> 0\}\big|A_1\big]\big], \\ 
	&=\mathbb{E}_{\pi}\big[A_1\mathds{1}\{x_1+A_1x_2-A_2< 0\}\big],
\end{align}
for every $x$ such that $x_1,x_2>0$. These partial derivatives are strictly positive when $\pi$ has full support: $\pi(a_1,a_2)>0$, for $a_1,a_2>0$. By taking the second-order derivatives:
\beq
	\frac{\partial^2 U(x;\pi)}{\partial x\partial x'}=-
	\left(\begin{array}{cc}
		\mathbb{E}_{\pi}[\pi_0]    & \mathbb{E}_{\pi}[A_1\pi_0] \\
		\mathbb{E}_{\pi}[A_1\pi_0] & \mathbb{E}_{\pi}[A_1^2\pi_0]
	\end{array}\right),
\eeq
for every $x$ such that $x_1,x_2>0$, where $\pi_0\equiv\pi(x_1+A_1x_2|A_1)$ in which $\pi(\cdot|A_1)$ denotes the conditional density of $A_2$ given $A_1$, assuming that such a density exists. This matrix is both symmetric and negative definite when $\pi(\cdot|A_1)$ is continuous and $A_1$ is not constant. This result follows from the positivity of $\mathbb{E}_{\pi}[\pi_0]$ and the \mbox{following equality:}
\beq
	\text{det}\frac{\partial^2 U(x;\pi)}{\partial x\partial x'}=\mathbb{E}_{\pi}[\pi_0]V_{\tilde{\pi}}(A_1)>0,
\eeq
which holds for every $x$ such that $x_1,x_2>0$, in which $\tilde{\pi}$ denotes the modified density:
\beq
	\tilde{\pi}(a)=\frac{\pi(x_1+a_2x_2|a_1)\pi(a)}{\mathbb{E}_{\pi}[\pi_0]}.
\eeq

\begin{proposition}
\label{prop:3}
	If preferences are SSF and the consumer's taste distribution \mbox{$\pi$ is con-} tinuous with full support given $A_1$, then the utility function $U(x;\pi)$ is strictly increasing with a negative definite Hessian everywhere on $R$.
\end{proposition}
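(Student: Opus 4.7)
The plan is to leverage the derivative formulas already displayed just before the proposition statement, so that the proof reduces to two positivity checks. In particular, the text has already differentiated under the expectation to obtain
\[
   \frac{\partial U(x;\pi)}{\partial x_1}=\mathbb{E}_{\pi}\big[\mathds{1}\{x_1+A_1x_2<A_2\}\big],\qquad
   \frac{\partial U(x;\pi)}{\partial x_2}=\mathbb{E}_{\pi}\big[A_1\mathds{1}\{x_1+A_1x_2<A_2\}\big],
\]
and has written the Hessian in terms of $\pi_0\equiv\pi(x_1+A_1x_2\mid A_1)$. So I would simply verify the interchange of differentiation and integration (routine, using dominated convergence and the fact that $(\cdot)^+$ is Lipschitz with a.e.\ derivative $\mathds{1}\{\cdot>0\}$) and then pass to the two remaining positivity arguments.

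For strict monotonicity, I would note that because $\pi(\cdot\mid A_1)$ has full support on $\mathbb{R}_+$, the event $\{A_2>x_1+A_1x_2\}$ has strictly positive $\pi$-probability for every interior $x\in R$. Since $A_1>0$ almost surely under $\pi$, this forces both partial derivatives above to be strictly positive on $R$.

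For the Hessian, I would use the standard $2\times 2$ criterion: a symmetric matrix is negative definite iff its $(1,1)$ entry is negative and its determinant is positive. The $(1,1)$ entry is $-\mathbb{E}_\pi[\pi_0]$, which is strictly negative because the continuous conditional density $\pi(\cdot\mid A_1)$ is strictly positive at $x_1+A_1x_2$ (full support). For the determinant I would expand
\[
   \det\frac{\partial^2U(x;\pi)}{\partial x\,\partial x'}
   =\mathbb{E}_\pi[\pi_0]\,\mathbb{E}_\pi[A_1^2\pi_0]-\mathbb{E}_\pi[A_1\pi_0]^2,
\]
and recognize this as a (positive multiple of the) variance of $A_1$ under the tilted probability $\tilde\pi(a)=\pi_0(a)\pi(a)/\mathbb{E}_\pi[\pi_0]$, exactly as displayed in the text. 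It is therefore non-negative by Cauchy--Schwarz, and strict positivity reduces to $A_1$ being non-degenerate under $\tilde\pi$.

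The main obstacle is this last non-degeneracy claim, because $\tilde\pi$ reweights $\pi$ by $\pi_0$, which depends on $A_1$. I would argue that the equality case of Cauchy--Schwarz forces $A_1$ to be $\tilde\pi$-a.s.\ constant; but $\tilde\pi$ is mutually absolutely continuous with $\pi$ on the set $\{\pi_0>0\}$, which under full support of $\pi(\cdot\mid A_1)$ is the whole support of $\pi$. Hence $A_1$ would have to be $\pi$-a.s.\ constant, contradicting the full-support hypothesis on the joint distribution. This yields $V_{\tilde\pi}(A_1)>0$, closes the determinant argument, and completes the proof.
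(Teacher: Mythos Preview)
Your proposal is correct and follows the paper's approach exactly: the first- and second-order derivative formulas, the tilted-measure identity $\det\frac{\partial^2U}{\partial x\,\partial x'}=\mathbb{E}_\pi[\pi_0]\,V_{\tilde\pi}(A_1)$, and the $2\times 2$ sign test are precisely what the paper displays in the paragraphs immediately preceding the proposition. Your mutual-absolute-continuity argument for $V_{\tilde\pi}(A_1)>0$ is a touch more explicit than the paper, which simply asserts strict positivity ``when $\pi(\cdot\mid A_1)$ is continuous and $A_1$ is not constant''; note that both you and the paper are tacitly reading the hypothesis as forcing the marginal of $A_1$ to be non-degenerate, whereas the phrase ``full support given $A_1$'' literally constrains only the conditional law of $A_2$.
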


Consequently, we have constructed another family of well-behaved utility \mbox{functions} $\{U(x;\pi):\pi\in\Pi\}$ indexed by a functional parameter $\pi$, describing taste uncertainty.

\subsection{The Demand Function}

Let us revisit the utility maximization problem in \eqref{umax}. Under the safety-first spec- ification, the analogue of the unconstrained first-order condition in \eqref{opt} \mbox{is given by:}
\beq
\label{foc:ssf}
	\mathbb{E}_{\pi}[(1-pA_1)\mathds{1}\{x_1(1-pA_1)+A_1y-A_2< 0\}]=0.
\eeq
We obtain this equality by equating the marginal rate of substitution with the relative price $p$, and then using the budget constraint to replace $x_2$ with $y-px_1$. Under the regularity conditions from above, the left-hand side is strictly monotone in $x_1$ given $\pi$, so that there exists a unique solution to the first-order condition. As in Section \ref{sec:2}, we let $X_1^*(z;\pi)$ denote this solution, and let $X_2^*(z;\pi)$ denote the quantity $y-pX_1^*(z;\pi)$.

\begin{proposition}
\label{prop:4}
	If preferences are SSF and the consumer's taste \mbox{distribution $\pi$ is con-} tinuous with full support given $A_1$, then there exists a unique solution $X(z;\pi)$ to the maximization problem in \eqref{umax} given $z$ and $\pi$, for every $z\in R$, almost surely, for every $\pi$. There are three regimes of demand defined by \eqref{cases}. The resulting demand function $X(z;\pi)$ is invertible in the second regime.
\end{proposition}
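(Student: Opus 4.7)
The plan is to mirror the proof of Proposition \ref{prop:2}, substituting Proposition \ref{prop:3} for Proposition \ref{prop:1} at each step. First, I would establish existence and uniqueness of the constrained maximizer. The utility function $U(x;\pi)$ is continuous on $\bar R$ (write $U$ as an expectation of the continuous bounded-above function $\min\{x_1+A_1x_2,A_2\}$ and apply dominated convergence using $\mathbb{E}_{\pi}[A_j]<\infty$), the budget set $\{x\in\bar R:px_1+x_2\leq y\}$ is compact, so a maximizer exists by Weierstrass. Proposition \ref{prop:3} gives that the Hessian of $U(\cdot;\pi)$ is negative definite on $R$, so $U(\cdot;\pi)$ is strictly concave on $R$, and any maximizer is unique.

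Next, I would justify the three-regime decomposition in \eqref{cases}. Strict monotonicity of $U(\cdot;\pi)$ (also from Proposition \ref{prop:3}) forces the budget constraint to bind, so I can substitute $x_2=y-px_1$ and search for the unconstrained optimum along the budget line; the resulting first-order condition is exactly \eqref{foc:ssf}. The key technical check is that the left-hand side of \eqref{foc:ssf}, as a function of $x_1$, is strictly decreasing. Writing it as
\beq
h(x_1)\;=\;\mathbb{E}_{\pi}\bigl[(1-pA_1)\,\Pr\!\bigl(A_2>x_1(1-pA_1)+A_1 y\,\big|\,A_1\bigr)\bigr],
\eeq
and differentiating under the integral sign (justified because $\pi(\cdot\mid A_1)$ is continuous with full support), one obtains
\beq
h'(x_1)\;=\;-\mathbb{E}_{\pi}\bigl[(1-pA_1)^2\,\pi\bigl(x_1(1-pA_1)+A_1 y\,\big|\,A_1\bigr)\bigr]\;<\;0,
\eeq
since $(1-pA_1)^2>0$ on a set of positive $\pi$-measure (the continuity of $A_1$ excludes the degenerate case $A_1\equiv 1/p$) and $\pi(\cdot\mid A_1)>0$ on its full support. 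Hence \eqref{foc:ssf} admits a unique solution $X_1^*(z;\pi)$. If $X_1^*(z;\pi)\in[0,y/p]$, this is the constrained optimum; otherwise strict concavity on the compact budget segment forces the maximum to the nearest endpoint, giving $(0,y)'$ or $(y/p,0)'$ as in \eqref{cases}.

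Finally, for invertibility in the interior regime I would appeal, as the authors do for the SARA case, to Proposition 2 of \citet{dobronyi}: that result requires only that the indifference curves of $U(\cdot;\pi)$ be strictly convex in $R$, a property that follows from the negative definiteness of the Hessian (Proposition \ref{prop:3}) via the standard implicit-function argument used after \eqref{ode0}. Given this, the demand $X^*(z;\pi)$ can be recovered from $z$ by inverting the map $(y,p)\mapsto X^*(z;\pi)$: $p$ is identified as the MRS at $X^*(z;\pi)$ and $y$ from the binding budget line.

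The main obstacle is the strict-monotonicity step for $h$: unlike the SARA case, where differentiation produces a manifestly negative expectation of a squared exponential, here the integrand involves an indicator, and one must differentiate through it. The full-support and continuity assumptions on $\pi(\cdot\mid A_1)$ are precisely what make this differentiation legitimate and the resulting density term strictly positive, so the analysis boils down to checking this regularity carefully; everything else follows the SARA template.
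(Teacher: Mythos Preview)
Your proposal is correct and follows essentially the same route as the paper: the paper does not give a separate formal proof of Proposition \ref{prop:4} but simply notes, in the paragraph preceding it, that the left-hand side of \eqref{foc:ssf} is strictly monotone in $x_1$ ``under the regularity conditions from above'' and then invokes the same three-regime template and the same invertibility argument (strictly convex indifference curves $\Rightarrow$ Proposition 2 of \citeauthor{dobronyi}, \citeyear{dobronyi}) used for SARA. Your explicit computation of $h'(x_1)=-\mathbb{E}_{\pi}[(1-pA_1)^2\,\pi(\,\cdot\mid A_1)]<0$ is precisely the monotonicity check the paper leaves implicit. One small inaccuracy: the hypothesis ``$\pi$ continuous with full support given $A_1$'' constrains the conditional law of $A_2$, not the marginal of $A_1$, so the exclusion of $A_1\equiv 1/p$ comes from the assumption (used already in Proposition \ref{prop:3}) that $A_1$ is not constant, not from continuity of $A_1$.
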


When the consumer's preferences are SSF, the MRS has the form:
\beq
	\text{MRS}(x;\pi)\equiv \frac{\mathbb{E}_{\pi}[\mathds{1}\{x_1+A_1x_2-A_2< 0\}]}{\mathbb{E}_{\pi}[A_1\mathds{1}\{x_1+A_1x_2-A_2< 0\}]}=\frac{1}{\mathbb{E}_{\pi}[A_1|x_1+A_1x_2-A_2<0]}.
\eeq
Thus, the rate at which the consumer is willing to exchange good 1 for good 2 given $x$ and $\pi$  is equal to the inverse of the expectation of her relative preference between goods $A_1$, conditional on not surpassing her control parameter $A_2$.

Some functionals of the distribution $\pi$ can be especially interesting. For instance, in an application to the consumption of alcohol, we might expect the conditional distribution of $A_2$ given $A_1=a_1$ to be concentrated around a single mode, characterizing an implicit alcohol limit for this consumer. Then, we can ask the following questions:
\bi
	\item Is this limit positively correlated with $A_1$? In other words, is there a positive relationship between this limit and a preference for \emph{strong} alcoholic beverages?
	\item Does a change in the maximum blood alcohol level for driving affect this limit?
\ee
These are questions that cannot be answered using classical demand systems like the Almost Ideal Demand System \citep{aids}. In fact, tests based on the Almost Ideal Demand System have rejected rationality in applications to alcohol consumption \citep{all-ferg-stew}. Clearly, it is possible that the Almost Ideal Demand System is misspecified.

\subsection{Exponential Threshold Taste Uncertainty}
\label{example:2}

In general, the first-order condition in \eqref{foc:ssf} has no closed-form solution. \mbox{However, its} expression can be simplified for some taste distributions $\pi$. As an illustration, let us assume that:
\bi
	\item $A_1$ and $A_2$ are independent.
	\item $A_2$ follows an exponential distribution $\gamma(1,\lambda)$ with survival function:
	\beq
		P(A_2>a_2) = \exp(-\lambda a_2).
	\eeq
	\item $A_1$ follows a distribution with Laplace transform: $\Psi(v)=\mathbb{E}[\exp(-vA_1)]$, $v\geq 0$.
\ee
Under this specification, we can first integrate with respect to $A_2$ within the expecta- tion  in \eqref{foc:ssf} in order to obtain the following condition:
\beq
	\mathbb{E}_{\pi}\big[(1-pA_1)\exp\{-\lambda(x_1+(y-x_1p)A_1\}\big]=0.
\eeq
Equivalently, we obtain:
\beq
	\mathbb{E}_{\pi}\big[\exp\{-\lambda (y-x_1p)A_1\}\big]-p\mathbb{E}_{\pi}\big[A_1\exp\{-\lambda (y-x_1p)A_1\}\big]=0.
\eeq
This equation can be written in terms of the Laplace transform $\Psi$ for $A_1$. This yields:
\beq
	\Psi\big[\lambda(y-x_1p)\big]+p\frac{d\Psi}{dv}\big[\lambda(y-x_1p)\big]=0,
\eeq
which can also be written as:
\beq
	\frac{d\log \Psi}{dv}\big[\lambda(y-x_1p)\big]=-\frac{1}{p}.
\eeq
Finally, by inverting this expression and rearranging the terms, we get:
\beq
	X_1^*(z;\pi)=\frac{1}{p}\left[y-\frac{1}{\lambda}\left(\frac{d\log \Psi}{dv}\right)^{-1}\left(-\frac{1}{p}\right)\right],
\eeq
The second component $X_2^*(z;\pi)$ of the unconstrained solution in \eqref{foc:ssf} is deduced from the budget constraint. It follows from equation \eqref{cases} that the demand function $X(z;\pi)$ coincides with $X^*(z;\pi)$ if, and only if:
\beq
	0\leq \frac{1}{\lambda}\left(\frac{d\log \Psi}{dv}\right)^{-1}\left(-\frac{1}{p}\right)\leq y.
\eeq
For instance, if $A_1$ follows a gamma distribution $\gamma(\nu,\alpha)$, then $\log \Psi(v)=-\nu\log(1+v/\alpha)$, and we obtain:
\beq
	\frac{d\log \Psi(v)}{dv}=-\frac{\nu}{\alpha+v},
\eeq
for $v\geq 0$. Moreover, by inverting this function, we get:
\beq
	\left(\frac{d\log \Psi}{dv}\right)^{-1}(\xi)=-\left(\frac{\nu}{\xi}+\alpha\right).
\eeq
Therefore, the solution $X_1^*(z;\pi)$ has the form:
\beq
	X_1^*(z;\pi)=\frac{1}{p}\left[y+\frac{1}{\lambda}(\alpha-\nu p)\right],
\eeq
and demand $X(z;\pi)$ coincides with $X^*(z;\pi)$ if, and only if:
\beq
	0\leq \frac{\nu p-\alpha}{\lambda}\leq y.
\eeq
The regimes of demand are illustrated in Figure \ref{fig:expregimes} in the design space. Note, we can also verify that the \emph{Slutsky coefficient} is strictly negative\footnote{This property holds for any Laplace transform $\Psi$ of $A_1$ (see Appendix \ref{app:int}).} such that:
\beq
	\Delta_x(z)\equiv \frac{\partial X_1(z;\pi)}{\partial p} + X_1(z;\pi)\frac{\partial X_1(z;\pi)}{\partial y}=-\frac{\nu}{\lambda p}<0,
\eeq
ensuring that the demand function $X(\cdot;\pi)$ is invertible over the set $\mathcal{Z}$ of pairs $z$ on which demand is strictly positive (see Section 2 in \citeauthor{dobronyi}, \citeyear{dobronyi}).

\begin{figure}
      \centering
      \begin{tikzpicture}[very thick,
                  level 1/.style={sibling distance=20mm},
                  level 2/.style={sibling distance=25mm},
                  level 3/.style={sibling distance=15mm},
                  every circle node/.style={minimum size=1.25mm,inner sep=0mm}]
                  \scriptsize

                  \draw[fill=red!20!white, draw=none] (0,0) -- (1.5,0) -- (1.5,4) -- (0,4) -- cycle;
                  \draw[fill=blue!20!white, draw=none] (1.5,0) -- (5,0) -- (5,3) -- cycle;
          		  \draw[fill=green!20!white, draw=none] (1.5,0) -- (1.5,4) -- (5,4) -- (5,3) -- cycle;
                  \draw (1.5,4) -- (1.5,0) node[below] {$\frac{\alpha}{\nu}$};
                  \draw (1.5,0) -- (5,3);
                  \node at (4.4,3) {$\frac{\nu p -\alpha}{\lambda}$};

                  \node at (0,0) (x1) {};
                  \node at (5.2,0) (x2) {};
                  \draw[->] (x1.center) -- (x2.center) node[right] {$p$};

                  \node at (0,0) (y1) {};
                  \node at (0,4.2) (y2) {};
                  \draw[->] (y1.center) -- (y2.center) node[above] {$y$};

            \end{tikzpicture}
      \caption{\textbf{Regimes for Exponential Threshold Taste \mbox{Uncertainty.}} The red region contains all designs $z$ for which $X_1(z;\pi) >0$ and $X_2(z;\pi)=0$; the blue region contains all designs $z$ for which $X_1(z;\pi)=0$ and \mbox{$X_2(z;\pi)>0$;} the green region contains all designs $z$ for which $X_1(z;\pi)>0$ and $X_2(z;\pi)>0$.}
      \label{fig:expregimes}
\end{figure}

\section{Individual Heterogeneity}
\label{sec:indhetero}

Sections \ref{sec:2} and \ref{sec:3} introduced two utility specifications, both indexed by the functional parameter $\pi$. Of course, different consumers can have different functional parameters. This individual heterogeneity is introduced in a second layer, by specifying a distribution $F$ over the set $\Pi$ of distributions on $R$, such as the Dirichlet process (see, for example, \citeauthor{navarro}, \citeyear{navarro}, for an application of the Dirichlet process in modelling individual differences). More precisely, we make the following theoretical assumption:

\begin{assumptiona}[Latent Stochastic Model]
	\phantom{1}
	\label{ass:1}
	\bi
		\item There are $n\geq 1$ consumers.
		\item Consumers are segmented into $M$ homogeneous groups.
		\item Consumers in group $m$ have the utility function $U(x;\pi_m)$, for all $m=1,\dots,M$.
		\item The taste parameters $(\pi_m)$ are independently drawn from a Dirichlet \mbox{process $F$.}
	\ee
\end{assumptiona}

Assumption \ref{ass:1} introduces a distribution $F$ over the functional taste parameter $\pi$. This distribution $F$ characterizes the heterogeneity across homogeneous groups. It can encompass, for example, regional or demographic differences in preferences. This infinite-dimensional heterogeneity is non-separable in the stochastic demand equation.

The Dirichlet process can be constructed in three steps:
\begin{steps}
\item Consider the set of (Bernoulli) distributions on $\{0,1\}$. This set is characterized by $q\in \bar{R}$ such that $q_1+q_2=1$. A distribution defined on this set of distributions is a distribution defined on this parameter set. We can, for instance, introduce a \emph{beta distribution}, denoted $B(\alpha_1,\alpha_2)$. The distribution $B(\alpha_1,\alpha_2)$ has a continuous density:
	\beq
		f(q)=\frac{\Gamma(\alpha_1+\alpha_2)q_1^{\alpha_1}q_2^{\alpha_2}}{\Gamma(\alpha_1)\Gamma(\alpha_2)},
	\eeq
	with respect to the Lebesgue measure over the simplex $\{(q_1,q_2)\geq 0:q_1+q_2=1\}$, where $\Gamma$ denotes the gamma function,\footnote{The gamma function $\Gamma$ is defined by $\Gamma(\alpha)=\int_0^{\infty}\exp(-x)x^{\alpha-1}dx$, for each $\alpha>0$.} and $\alpha_1,\alpha_2>0$ are positive scalar parameters.

	\item The beta distribution can be extended to define a distribution on the set of discrete distributions with weights $q_j\geq 0$, $j=1,\dots,J$, such that $\sum_{j=1}^Jq_j=1$. This procedure leads to the \emph{Dirichlet distribution}, denoted $D(\alpha)$. \mbox{The res-} ulting distribution $D(\alpha)$ has continuous density:
	\beq
		f(q)=\frac{\Gamma\big(\sum_{j=1}^J\alpha_j\big)\prod_{j=1}^J q_j^{\alpha_j}}{\prod_{j=1}^J\Gamma(\alpha_j)},
	\eeq
	with respect to the Lebesgue measure over the simplex:
	\beq
		\left\{q\in\mathbb{R}_{+}^J:\text{$\sum_{j=1}^Jq_j=1$ and $q_j\geq 0$, $\forall j$}\right\},
	\eeq
	(see, for example, \citeauthor{kotz}, \citeyear{kotz}, page 485, and \citeauthor{lin-dirichlet}, \citeyear{lin-dirichlet}, for details).

	\item Then, the Dirichlet distribution can be extended to define a distribution on a large set of distributions\footnote{The realizations of a Dirichlet process are, almost surely, discrete distributions. Although we assumed continuity to prove the existence of a unique demand system in Section \ref{sec:3}, these realizations can approximate any continuous distribution. This discrepancy has no practical implications.} defined on $\bar{R}$ (see Appendix \ref{app:dir}). This procedure leads to the \emph{Dirichlet process}. The Dirichlet process is characterized by a distribution $\mu$ on $\bar{R}$ and a scaling parameter $c>0$. The distribution $\mu$ can be thought of as the mean of the Dirichlet process, while the parameter $c$ manages its degree of discretization (see Appendix \ref{app:dir}). This extension of the Dirichlet distribution is much more complicated than the Dirichlet distribution, especially because the notion of the Lebesgue measure on the set of distributions, and the notion of a density, no longer exist (see \citeauthor{ferguson}, \citeyear{ferguson}, \citeauthor{rolin}, \citeyear{rolin}, and \citeauthor{sethuraman}, \citeyear{sethuraman}).
\end{steps}

Let us now discuss implications of Assumption \ref{ass:1}: If the functional and scaling parameters of the Dirichlet process are known, then we are in a Bayesian framework (see, for example, \citeauthor{geweke}, \citeyear{geweke}, for a Bayesian analysis of revealed preference) in which the taste distribution $\pi\in \Pi$ has to be estimated. Otherwise, we can assume that the mean $\mu$ of our process $F$ is characterized by a finite-dimensional hyperparameter $\theta$. Naturally, the hyperparametric model has two types of parameters: the hyperparameter $\theta$ to be estimated, and the functional parameters $(\pi_m)$ to be filtered.

\section{Non-Parametric Identification}
\label{sec:np.ident}

In this section, we consider the identification of the functional parameter $\pi$ \mbox{\emph{within} each} model from the observation of a demand function. Then, we examine if we can distin- guish \emph{between} the SARA and SSF models.

Intuitively, a consumer's demand function is identified if we observe her making a lot of consumption decisions at a variety of designs $z$. Clearly, we can \mbox{identify her} demand function if (i) her preferences are constant over time and we observe a large panel or experiment,\footnote{In this case, when the number of dates $T$ is large, we can have a segment $m$ for each \mbox{consumer $i$.}} or (ii) she belongs to a large homogeneous \mbox{segment of consum-} ers with identical preferences. This explains the form of Assumption \ref{ass:1} (as it allows for either interpretation). Later, we apply the segmented approach to scanner data in the application to the consumption of alcohol in Section \ref{sec:illustration}.

\begin{figure}
      \centering
      \begin{subfigure}[b]{0.48\linewidth}
      \centering
      \begin{tikzpicture}[very thick,
                  level 1/.style={sibling distance=20mm},
                  level 2/.style={sibling distance=25mm},
                  level 3/.style={sibling distance=15mm},
                  every circle node/.style={minimum size=1.25mm,inner sep=0mm}]
                  \scriptsize

                  \draw[red] plot [smooth, tension=0.9] coordinates {(0,0) (1,1) (3,0.5) (5,2)};
                  \draw plot [smooth, tension=0.9] coordinates {(0,0) (1,1.5) (3,1) (5,3)};
                  \draw[blue] plot [smooth, tension=0.9] coordinates {(0,0) (1,2) (3,1.5) (5,4)};

                  \node at (0,0) (x1) {};
                  \node at (5.2,0) (x2) {};
                  \draw[->] (x1.center) -- (x2.center) node[right] {$y$};

                  \node at (0,0) (y1) {};
                  \node at (0,4.2) (y2) {};
                  \draw[->] (y1.center) -- (y2.center) node[above] {$x$};

            \end{tikzpicture}
      \end{subfigure}
      \begin{subfigure}[b]{0.48\linewidth}
      \centering
      \begin{tikzpicture}[very thick,
                  level 1/.style={sibling distance=20mm},
                  level 2/.style={sibling distance=25mm},
                  level 3/.style={sibling distance=15mm},
                  every circle node/.style={minimum size=1.25mm,inner sep=0mm}]
                  \scriptsize

                  \draw[red] plot [smooth, tension=0.9] coordinates {(0,0) (1,1) (3,0.5) (5,2)};
                  \draw plot [smooth, tension=0.9] coordinates {(0,0) (1,1.5) (3,1) (5,3)};
                  \draw[blue] plot [smooth, tension=0.9] coordinates {(0,0) (1,0.5) (3,2.5) (5,3.5)};

                  \node at (0,0) (x1) {};
                  \node at (5.2,0) (x2) {};
                  \draw[->] (x1.center) -- (x2.center) node[right] {$y$};

                  \node at (0,0) (y1) {};
                  \node at (0,4.2) (y2) {};
                  \draw[->] (y1.center) -- (y2.center) node[above] {$x$};

            \end{tikzpicture}
      \end{subfigure}
      \caption{\textbf{Monotonicity in Heterogeneity.} Each figure displays the Engel curves for three consumers at a fixed price. Monotonicity is satisfied on the left. Monotonicity is violated on the right because the curves cross.}
      \label{fig:engel}
\end{figure}
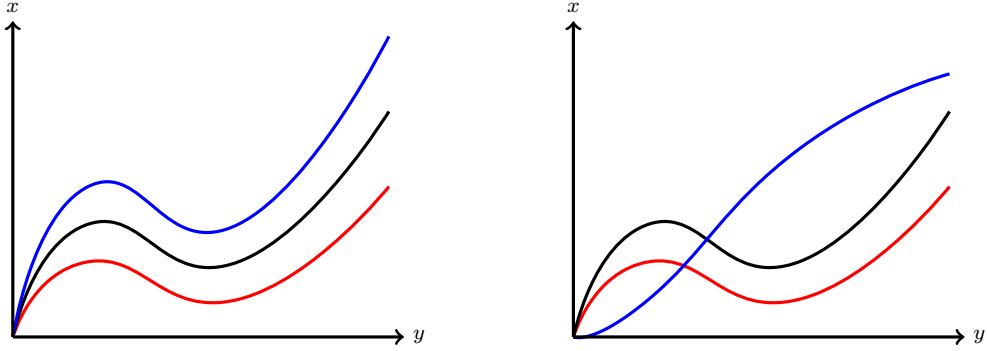

With panel data, one no longer requires the assumption that demand is monotonic with respect to unobserved heterogeneity in order to achieve identification (see Figure \ref{fig:engel}, and the role of this assumption in \citeauthor{brown-matzkin}, \citeyear{brown-matzkin}, \citeauthor{matzkin-nonadd}, \citeyear{matzkin-nonadd}, and \citeauthor{hn-individual-het}, \citeyear{hn-individual-het}).

\subsection{Within Model Identification}
\label{sec:within}

In the models introduced in Sections \ref{sec:2} and \ref{sec:3}, and for any $\pi$ such that \mbox{demand is inv-} ertible, we can derive the inverse demand function, whose second component coincides with the MRS which can be integrated to obtain a unique preference ordering. Indeed, by construction, the integrability conditions (needed to recover a \emph{unique} well-behaved preference ordering) are satisfied, implying that preferences are recoverable (see \citeauthor{samuelson1948}, \citeyear{samuelson1948}, for a seminal discussion of integrability in the case of two goods, and \citeauthor{samuelson}, \citeyear{samuelson}, \citeauthor{hurwicz-uzawa}, \citeyear{hurwicz-uzawa}, and \citeauthor{hosoya2016}, \citeyear{hosoya2016}, for general approaches). However, the possibility to recover preferences from a consumer's demand function does \mbox{not imply that} the distribution of taste uncertainty $\pi$ is identified. Indeed, two distinct taste distributions could produce an identical MRS.

For identification, we only consider the information contained in the demand function $X(\cdot;\pi)$ on the set $\mathcal{Z}$ of designs $z$ for which the components of the demand function are strictly positive. This restriction disregards some information that may be available in the first or third regimes of \eqref{cases}. In most datasets, when a component of the demand function equals zero, the price $p$ is not observed.

\subsubsection{Stochastic Absolute Risk Aversion}
\label{within:SARA}

In the stochastic absolute risk aversion (SARA) model, the identification \mbox{condition is:}
\beq
	\left\{\frac{\mathbb{E}_{\pi}[A_1\exp(-A'x)]}{\mathbb{E}_{\pi}[A_2\exp(-A'x)]}=\frac{\mathbb{E}_{\pi'}[A_1\exp(-A'x)]}{\mathbb{E}_{\pi'}[A_2\exp(-A'x)]}, \; \forall x\in R\right\} \; \; \Rightarrow \; \; \pi=\pi'.
\eeq
In the degenerate case in which $A$ is deterministic and equal to $(a_1,a_2)$, \mbox{the MRS red-} uces to $a_1/a_2$. Thus, in this special case, the two-dimensional parameter $a=(a_1,a_2)$ is identified up to a positive factor. This reasoning leads us to a question: Does this lack of identification also exist in an extended setting?

Let us first remark that the utility function $U(x;\pi)$ is equal to the moment generating function for $\pi$ with a negative sign: $\Phi(x;\pi)=-U(x;\pi)$. Because this moment generating function characterizes $\pi$ when the stochastic parameter $A$ is non-negative (see Theorem 1a in Chapter 13 on Tauberian Theorems in \citeauthor{feller-1968}, \citeyear{feller-1968}), it is equivalent to consider the identification of either $\pi$, or $\Phi(x;\pi)$.\footnote{Note, the existence of the moment generating function does not imply the existence of all power moments and, even if all power moments exist, they do not necessarily characterize the distribution. A known example is the log-normal distribution used in the application \citep{heyde}.} As mentioned, we can always integrate the MRS to recover a unique preference ordering. That is, we can recover $U(x;\pi)$ up to a monotonic transformation. We still need to discern the conditions on $\pi$ under which we can recover $\Phi(x;\pi)$. Indeed, moment generating functions have properties that are not necessarily preserved under monotonic transformations.

We obtain the following result:

\begin{proposition}
	If preferences are SARA, then $\Phi(x;\pi)$ and $\Phi(x;\pi)^{\nu}$ lead to the same preference ordering, for all positive scalars $\nu>0$.
\end{proposition}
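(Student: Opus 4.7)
The plan is straightforward because raising to a positive power is a strictly monotonic transformation on the positive reals, and preference orderings are invariant under strictly monotonic transformations of their utility representations. The main task is to verify that all relevant evaluations of $\Phi$ are strictly positive so that the map $t \mapsto t^{\nu}$ is well defined, and to be careful about signs in translating between $\Phi$ and $U$.

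First I would observe that $\Phi(x;\pi) = \mathbb{E}_{\pi}[\exp(-A'x)]$ is strictly positive for every $x \in R$, since the integrand $\exp(-A'x)$ is strictly positive almost surely under $\pi$ and $\pi$ is a probability measure. Next, because $U(x;\pi) = -\Phi(x;\pi)$, the preference ordering induced by $U$ ranks $x$ above $y$ if and only if $\Phi(x;\pi) \leq \Phi(y;\pi)$. The same reasoning applied to $-\Phi(\cdot;\pi)^{\nu}$ shows that the ordering induced by $\Phi(x;\pi)^{\nu}$ ranks $x$ above $y$ if and only if $\Phi(x;\pi)^{\nu} \leq \Phi(y;\pi)^{\nu}$. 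Since $t \mapsto t^{\nu}$ is strictly increasing on $(0,\infty)$ whenever $\nu > 0$, these two inequalities are equivalent, and therefore the two orderings coincide.

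The only potential obstacle is pedantic: the statement conflates the moment-generating function $\Phi(x;\pi)$ with the utility it represents (namely $U = -\Phi$), so one must fix the orientation convention before invoking monotonicity. Once that is done, the claim reduces to a one-line appeal to strict monotonicity of the power function on the positive half-line.
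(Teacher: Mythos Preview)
Your proof is correct and follows essentially the same route as the paper: the paper writes $\tilde{U}(x;\pi)=-\Phi(x;\pi)^{\nu}=\phi_{\nu}(U(x;\pi))$ with $\phi_{\nu}(u)=-(-u)^{\nu}$ strictly increasing for $u<0$, which is just your argument rephrased on the $U$-side rather than the $\Phi$-side.
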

\begin{proof}
	Let $U(x;\pi)=-\Phi(x;\pi)$ and $\tilde{U}(x;\pi)=-\Phi(x;\pi)^{\nu}$ denote the utility functions associated with $\Phi(x;\pi)$ and $\Phi(x;\pi)^{\nu}$, respectively. Then, by definition, we must have:
	\beq
		\tilde{U}(x;\pi)=-\Phi(x;\pi)^{\nu}=-(-U(x;\pi))^{\nu}=\phi_{\nu}(U(x;\pi)),
	\eeq
	where $\phi_{\nu}(u)=-(-u)^{\nu}$ is strictly increasing for $u<0$. Since $\tilde{U}(x;\pi)$ is a monotonic transformation of $U(x;\pi)$, these utility functions yield the same preference ordering.
\end{proof}

This means that we can, at most, identify the class of moment generating functions $\mathscr{C}(\Phi)=\{\Phi^{\nu}:\nu>0\}$. Note that, for any moment generating function $\Phi$, the transf- ormed function $\Phi^{\nu}$ is also a moment generating function.

Let us now consider identification when $A_1$ and $A_2$ are independent:

\begin{proposition}
\label{prop:6}
	Let $\Phi_j$ denote the marginal moment generating function for $A_j$, for $j=1,2$. If preferences are SARA, and $A_1$ and $A_2$ are independent, then $(\Phi_1,\Phi_2)$ and $(\Phi_1^*,\Phi_2^*)$ lead to the same preference ordering if, and only if, for some $\nu >0$, we have:
	\[
		\Phi_1^*=\Phi_1^{\nu} \; \; \text{and} \; \; \Phi_2^*=\Phi_2^{\nu}.
	\]
\end{proposition}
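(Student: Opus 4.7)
The plan is to exploit independence to factor the moment generating function and then use the standard fact that two utility functions yield the same preference ordering on $R$ if and only if they induce the same marginal rate of substitution at every interior bundle. Under independence, $\Phi(x;\pi)=\Phi_1(x_1)\Phi_2(x_2)$, so $U(x;\pi)=-\Phi_1(x_1)\Phi_2(x_2)$, and a direct computation yields
\[
\text{MRS}(x;\pi)=\frac{(\log \Phi_1)'(x_1)}{(\log \Phi_2)'(x_2)},
\]
a ratio that is already separated across $x_1$ and $x_2$. Note that each $(\log \Phi_j)'$ is strictly negative on $\mathbb{R}_+$, because $\Phi_j$ is the Laplace transform of a non-degenerate distribution on $\mathbb{R}_+$ (if it were degenerate at zero, the corresponding good would not enter the utility, contradicting strict monotonicity).

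Sufficiency is immediate from the preceding proposition: if $\Phi_j^{*}=\Phi_j^{\nu}$ for $j=1,2$ and some $\nu>0$, then under independence $\Phi^{*}(x;\pi)=\Phi(x;\pi)^{\nu}$, so the induced utility is a monotonic transformation of $U(x;\pi)$ by the preceding result, and the preference ordering is unchanged.

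For necessity, suppose the two pairs induce the same preference ordering. Equating MRS at every $x\in R$ and rearranging gives
\[
\frac{(\log \Phi_1^{*})'(x_1)}{(\log \Phi_1)'(x_1)}=\frac{(\log \Phi_2^{*})'(x_2)}{(\log \Phi_2)'(x_2)}.
\]
The left-hand side depends only on $x_1$ and the right-hand side only on $x_2$, so each side must equal a common constant $\nu$. Integrating the resulting ODEs $(\log \Phi_j^{*})'=\nu(\log \Phi_j)'$ yields $\log \Phi_j^{*}(x_j)=\nu \log \Phi_j(x_j)+c_j$ for $j=1,2$.

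The constants are pinned down by the normalization $\Phi_j(0)=\Phi_j^{*}(0)=1$ inherent to any moment generating function on $\mathbb{R}_+$, which forces $c_j=0$ and hence $\Phi_j^{*}=\Phi_j^{\nu}$. Positivity of $\nu$ follows because both $(\log \Phi_j)'$ and $(\log \Phi_j^{*})'$ are strictly negative, so their ratio is strictly positive. I expect the only real subtlety to be the separation-of-variables step, where one must justify that the common value is constant in $x$; this is standard but requires the non-vanishing of $(\log \Phi_j)'$ that I flagged above. Uniqueness of the recovered distributions from their MGFs then follows from the Tauberian characterization already invoked in Section \ref{within:SARA}.
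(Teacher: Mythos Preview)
Your proof is correct and follows essentially the same route as the paper: factor the MGF under independence, compute the MRS as a ratio of log-derivatives, separate variables to obtain a common constant $\nu$, and integrate using the normalization $\Phi_j(0)=1$. You are in fact slightly more thorough than the paper's own argument, since you treat sufficiency explicitly (via the preceding proposition) and justify both the positivity of $\nu$ and the non-vanishing of $(\log\Phi_j)'$ needed for the separation-of-variables step.
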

\begin{proof}
	The identification criterion becomes:
	\[
		\left(\frac{\partial\Phi_1(x_1)}{\partial x_1}\Phi_2(x_2)\right)\left(\Phi_1(x_1)\frac{\partial\Phi_2(x_2)}{\partial x_2}\right)^{-1}=\left(\frac{\partial\Phi_1^*(x_1)}{\partial x_1}\Phi_2^*(x_2)\right)\left(\Phi_1^*(x_1)\frac{\partial\Phi_2^*(x_2)}{\partial x_2}\right)^{-1},
	\]
	for all $x\in R$. This criterion can, then, be written as:
	\[
		\frac{\partial\log \Phi_1(x_1)}{\partial x_1}\left(\frac{\partial\log \Phi_1^*(x_1)}{\partial x_1}\right)^{-1}=\frac{\partial\log \Phi_2(x_2)}{\partial x_2}\left(\frac{\partial\log \Phi_2^*(x_2)}{\partial x_2}\right)^{-1},
	\]
	for all $x\in R$. Thus, we deduce that, if these distributions yield the same MRS, then:
	\[
		\frac{\partial\log \Phi_j^*(x_j)}{\partial x_j}=\nu\frac{\partial\log \Phi_j(x_j)}{\partial x_j},
	\]
	for some $\nu >0$, at every $x_j\geq 0$, for both $j=1,2$. Because the log-transform of the moment generating function at zero equals zero, by integrating this equation, we get:
	\beq
		\log \Phi_j^*(x_j)=\nu\log\Phi_j(x_j),
	\eeq
	at every $x_j\geq 0$, for both $j=1,2$. Equivalently, $\Phi_1^*=\Phi_1^{\nu}$ and $\Phi_2^*=\Phi_2^{\nu}$.
\end{proof}

Proposition \ref{prop:6} implies that $\mathscr{C}(\Phi)$ is identified under the independence of $A_1$ and $A_2$. Indeed, we can recover the consumer's preference ordering using traditional methods, and use the fact that all admissible preference orderings map to a unique class $\mathscr{C}(\Phi)$.

Of course, independence is a strong restriction. In the SARA model, it is equivalent to the additive separability of the utility function.\footnote{In the case of two goods, additive separability is stronger than separability.} To see this \mbox{result, notice that,} under independence, we obtain:
\beq
	U(x;\pi)=-\mathbb{E}_{\pi}\big[\exp(-A_1x_1)]\cdot\mathbb{E}_{\pi}\big[\exp(-A_2x_2)].
\eeq
Since utility functions are unique up to strictly increasing transformations, this utility function is equivalent to:
\beq
	\tilde{U}(x;\pi)\equiv -\log(-U(x;\pi))=-\log \mathbb{E}_{\pi}\big[\exp(-A_1x_1)]-\log\mathbb{E}_{\pi}\big[\exp(-A_2x_2)],
\eeq
which is an additively separable utility function. In Appendix \ref{app:dependence}, we prove a generalization of Proposition \ref{prop:6} where stochastic taste parameters have a common component. 

\subsubsection{Stochastic Safety-First}

In the SSF model, the identification condition is:
\beq
\begin{gathered}
	\bigg\{\frac{\mathbb{E}_{\pi}\big[A_1\big|x_1+A_1x_2-A_2<0\big]}{\mathbb{E}_{\pi'}\big[A_1\big|x_1+A_1x_2-A_2<0\big]}=1, \; \forall x\in R\bigg\} \; \; \Rightarrow \; \; \pi=\pi'.
\end{gathered}
\eeq
Let us now consider the validity of this condition under an independence assumption. Note, in the SSF model, independence is no longer equivalent to additive separability.

\begin{proposition}
\label{prop:7}
	If preferences are SSF, $A_1$ and $A_2$ are independent, and the marginal distribution of $A_2$ is continuous, then $\mathbb{E}[A_1]$ is identified, and the marginal distribution of $A_2$ is identified up to some positive power transformation of its survival function.
\end{proposition}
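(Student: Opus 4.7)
The plan is to probe the MRS at the boundary $x_2 = 0$ with two operations---evaluation to recover $\mathbb{E}[A_1]$, and a single derivative in $x_2$ to recover the hazard function of $A_2$ up to a positive scalar, from which $S$ is pinned down up to a power. Under independence and continuity of $A_2$, the MRS expression from the main text collapses to
\begin{equation*}
\text{MRS}(x_1, x_2; \pi) = \frac{\mathbb{E}[S(x_1 + A_1 x_2)]}{\mathbb{E}[A_1\, S(x_1 + A_1 x_2)]},
\end{equation*}
where $S$ denotes the survival function of $A_2$ and the expectations are taken with respect to the marginal law of $A_1$.

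The first step is to evaluate (or take $x_2 \to 0^+$ and invoke dominated convergence): both integrands reduce to $S(x_1)$ times a factor in $A_1$, so $\text{MRS}(x_1, 0; \pi) = 1/\mathbb{E}[A_1]$, which is constant in $x_1$. This immediately identifies $\mathbb{E}[A_1]$ from the observed demand via inverse demand in the second regime. The second step is to differentiate in $x_2$ and evaluate at $x_2 = 0$. A standard quotient-rule computation gives
\begin{equation*}
\frac{\partial \text{MRS}}{\partial x_2}(x_1, 0; \pi) = \frac{\mathrm{Var}(A_1)}{\mathbb{E}[A_1]^2} \cdot \frac{-S'(x_1)}{S(x_1)},
\end{equation*}
which equals the hazard function of $A_2$ at $x_1$ multiplied by the positive scalar $c := \mathrm{Var}(A_1)/\mathbb{E}[A_1]^2$---positive as long as $A_1$ is non-degenerate (if $A_1$ were a point mass, the MRS would be identically $1/A_1$ and $S$ would be completely unidentified, so the statement implicitly requires non-degeneracy).

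The third and final step is integration: since $S(0) = 1$ and $-\log S(x) = \int_0^x (-S'/S)(t)\,dt$, identifying the hazard up to the unknown factor $c$ is equivalent to identifying $c(-\log S)$, i.e., $S^c$ for some $c > 0$. This is exactly identification of the distribution of $A_2$ up to a positive power transformation of its survival function, as claimed. The main technical obstacle I expect is the justification for exchanging $\partial_{x_2}$ with the expectation in step two, which requires integrability of $A_1^2\, |S'(x_1 + A_1 x_2)|$ against $\pi$ in a neighbourhood of $x_2 = 0$; this should follow from the continuity and integrability hypotheses already present in Proposition \ref{prop:3}, but needs to be checked carefully---especially given that we only observe $\text{MRS}$ on the interior set $\mathcal{Z}$ where demand is strictly positive, so one must also argue that $\mathcal{Z}$ contains a neighbourhood allowing the passage to the boundary $x_2 = 0$.
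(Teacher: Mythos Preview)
Your proposal is correct and follows essentially the same route as the paper's proof: evaluate the MRS at $x_2=0$ to extract $\mathbb{E}[A_1]$, then take one $x_2$-derivative at $x_2=0$ to isolate the hazard $-S'/S$ times a positive constant involving $\mathrm{Var}(A_1)$, and integrate to recover $S$ up to a positive power. The only cosmetic difference is that the paper differentiates the cross-multiplied identity $\mathbb{E}[A_1 S(x_1+A_1x_2)]=\text{MRS}\cdot\mathbb{E}[S(x_1+A_1x_2)]$ rather than the quotient directly, which yields $\partial_{x_2}\text{MRS}(x_1,0;\pi)=(S'/S)(x_1)\,\mathrm{Var}(A_1)$ without the $\mathbb{E}[A_1]^{-2}$ factor (the discrepancy is just the reciprocal convention for MRS); your observation that $A_1$ must be non-degenerate for the argument to bite is a point the paper's proof leaves implicit.
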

\begin{proof}
	In the SSF model, the MRS is identified, and it satisfies:
	\beq
	\label{eq:survival}
		\mathbb{E}_{\pi}[A_1S(x_1+A_1x_2)]=\text{MRS}(x;\pi)\mathbb{E}_{\pi}[S(x_1+A_1x_2)],
	\eeq
	where $S(\cdot)$ denotes the survival function of $A_2$.
	\bi
		\item The expectation $\mathbb{E}_{\pi}[A_1]$ is identified because $\text{MRS}(x_1,0;\pi)=\mathbb{E}_{\pi}[A_1]$.
		\item By differentiating \eqref{eq:survival} with respect to $x_2$, we get:
		\[
		\begin{gathered}
			\mathbb{E}_{\pi}[A_1^2S'(x_1+A_1x_2)]=\text{MRS}(x;\pi)\mathbb{E}_{\pi}[A_1S'(x_1+A_1x_2)] \\
			+\frac{\partial \text{MRS}}{\partial x_2}(x;\pi)\mathbb{E}_{\pi}[S(x_1+A_1x_2)].
		\end{gathered}
		\]
		When $x_2=0$, this equation becomes:
		\[
			S'(x_1)\mathbb{E}_{\pi}[A_1^2]=\text{MRS}(x_1,0;\pi)S'(x_1)\mathbb{E}_{\pi}[A_1]+\frac{\partial \text{MRS}}{\partial x_2}(x_1,0;\pi)S(x_1).
		\]
		By rearranging, we get:
		\[
			\frac{\partial \text{MRS}}{\partial x_2}(x_1,0;\pi)=\frac{S'(x_1)}{S(x_1)}\left(\mathbb{E}_{\pi}[A_1^2]-\text{MRS}(x_1,0;\pi)\mathbb{E}_{\pi}[A_1]\right)=\frac{S'(x_1)}{S(x_1)}V(A_1).
		\]
		Because the partial derivative of the MRS with respect to $x_2$ is identified, the hazard function $\lambda(x_1)=-S'(x_1)/S(x_1)$ of the distribution of $A_2$ is identified up to a positive factor. Since $S(x_1)=\exp\{-\Lambda(x_1)\}$, where $\Lambda(x_1)=\int_0^{x_1}\lambda(t)dt$ is the cumulative hazard function of the distribution of $A_2$, we can identify $S(\cdot)$ up to a positive power transformation.
	\ee
\end{proof}

Proposition \ref{prop:7} provides no information on the identifiability of the distribution of $A_1$ beyond its first moment. It seems difficult to obtain a general identification result, but insights into our identification problem can be obtained by considering the two primary families of distributions that are invariant to positive power transformations, that are, the exponential family and the Pareto family.
\bi
	\item \textbf{Exponential family:} Suppose that the marginal distribution of $A_2$ belongs to the exponential family, and that we have identified its survival function up to a positive power transformation such that $S(x)=\exp\{-cx\}$, for some unknown $c>0$. The MRS in \eqref{eq:survival} becomes:
	\beq
	\label{MRS:exp}
		\text{MRS}(x;\pi)=\frac{\mathbb{E}_{\pi}[A_1\exp\{-cx_2A_1\}]}{\mathbb{E}_{\pi}[\exp\{-cx_2A_1\}]}\equiv G_0(x_2).
	\eeq
	This expression does not depend on $x_1$. Now, let $\Psi(u)=\mathbb{E}_{\pi}[\exp\{-uA_1\}]$ denote the Laplace transform of $A_1$. Under this notation, the equality in \eqref{MRS:exp} implies:
	\[
		G_0(x_2)=\frac{d\log \Psi}{du}(cx_2).
	\]
	Or, equivalently, $G_0(u/c)=d\log \Psi(u)/du$. By integrating, we obtain:
	\[
		\log\Psi(u)=c\big[H(u/c)-H(0)\big],
	\]
	where $H(\cdot)$ is a primitive of the MRS. Therefore:

	\begin{corollary}
	\label{cor:exp}
		Under the conditions of Proposition \ref{prop:7}, if the marginal distribution of $A_2$ belongs to the exponential family, the following results hold:
		\ba
			\item The power transform $c$ is not identified.
			\item The distribution of $A_1$ is identified under an identification restriction on $c$.
		\ee
	\end{corollary}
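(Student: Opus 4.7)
The proof plan hinges on the identity
\[
\log \Psi(u) = c\bigl[H(u/c) - H(0)\bigr]
\]
derived in the discussion preceding the corollary, which summarises everything the identified MRS reveals about the pair $(c,\Psi)$. The primitive $H$ is pinned down by the MRS $G_0=H'$ (the additive constant $H(0)$ is immaterial because it cancels throughout), so the display leaves $c$ as a free parameter, to be recovered only under a separate restriction.

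For part (a), I would fix a candidate ``true'' pair $(c_0,\Psi_0)$ compatible with the identified $H$, and for any $\beta>0$ substitute $c=c_0\beta$ into the display to obtain
\[
\Psi_\beta(u)=\exp\!\Bigl(c_0\beta\bigl[H(u/(c_0\beta))-H(0)\bigr]\Bigr)=\Psi_0(u/\beta)^{\beta}.
\]
By construction, every such $\Psi_\beta$ reproduces the identified MRS through $G_0(x_2)=d\log\Psi_\beta/du\,(c_0\beta\, x_2)$, so the family $\{(c_0\beta,\Psi_\beta):\beta>0\}$ is observationally equivalent. Since $\Psi_\beta$ varies nontrivially with $\beta$, the parameter $c$ cannot be recovered from the MRS alone.

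For part (b), I would impose a scalar normalization fixing $c$ (for instance $c=1$, which amounts to fixing the scale of $A_2$, or equivalently requiring $\mathbb{E}_\pi[A_1]$ or some other functional to take a prescribed value). Once $c$ is fixed, the display determines $\log\Psi$ uniquely from the identified $H$, and hence determines $\Psi$. Because the Laplace transform of a non-negative random variable characterises its distribution (Feller 1968, Chapter 13, Theorem 1a, as already invoked in Section~\ref{within:SARA}), the distribution of $A_1$ is thereby identified.

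The main obstacle is to confirm, in part (a), that each alternative $\Psi_\beta$ is genuinely a Laplace transform, i.e., completely monotone on $[0,\infty)$, so that $(c_0\beta,\Psi_\beta)$ is an admissible model rather than a formal relabelling. A clean sufficient condition is infinite divisibility of the distribution of $A_1$, under which $\Psi_0^\beta$ is a Laplace transform for every $\beta>0$; absent such structure, one can still guarantee a local non-identification in a neighbourhood of $\beta=1$, which suffices to establish (a).
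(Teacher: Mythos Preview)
Your proposal is correct and follows essentially the same route as the paper: the corollary is presented there as an immediate consequence of the identity $\log\Psi(u)=c[H(u/c)-H(0)]$, from which (a) and (b) are read off because $c$ enters as a free scalar and, once fixed, uniquely determines $\Psi$ (hence the law of $A_1$). Your explicit construction of the observationally equivalent family $\Psi_\beta(u)=\Psi_0(u/\beta)^\beta$ and your caveat about complete monotonicity are refinements that go beyond the paper's own argument, which simply asserts non-identification from the formula without verifying that the alternative $(c,\Psi)$ pairs remain admissible Laplace transforms.
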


	We conclude that, under the conditions of Corollary \ref{cor:exp}, the distributions of $A_1$ and $A_2$ are non-parametrically identified up to a single scalar parameter $c>0$.

	\item \textbf{Pareto family:} Let us now examine whether a similar result can be obtained for the Pareto family, in which $S(x)=x^{-\alpha}$, for some $\alpha>0$. The parameter $\alpha$ characterizes the fat tails of the distribution of $A_2$ and the power transformation on the MRS. This survival function produces:
	\[
		\text{MRS}(x;\pi)=\frac{\mathbb{E}_{\pi}[A_1(x_1+A_1x_2)^{-\alpha}]}{\mathbb{E}_{\pi}[(x_1+A_1x_2)^{-\alpha}]}=\frac{\mathbb{E}_{\pi}[A_1(x_0+A_1)^{-\alpha}]}{\mathbb{E}_{\pi}[(x_0+A_1)^{-\alpha}]},
	\]
	where $x_0\equiv x_1/x_2$ denotes a ratio of quantities. Equivalently, we get:
	\beq
	\label{eq:gstar}
		\text{MRS}(x;\pi)=\frac{\mathbb{E}_{\pi}[(x_0+A_1)^{-\alpha+1}]}{\mathbb{E}_{\pi}[(x_0+A_1)^{-\alpha}]}-x_0\equiv G_0(x_0),
	\eeq
	which only depends on the ratio $x_0$. Therefore, we have constructed homothetic preferences. By equation \eqref{eq:gstar}:
	\[
		e(x)\equiv\frac{d}{dx}\log \mathbb{E}_{\pi}[(x+A_1)^{-\alpha+1}],
	\]
	is identified up to a multiplicative constant. Therefore, by integration, $\mathbb{E}_{\pi}[(x+A_1)^{-\alpha+1}]$ is identified up to $\alpha$ and a multiplicative constant $\kappa$. However, as $x$ tends to infinity, this expression is equivalent to $\kappa x^{-\alpha+1}\exp E(x)$, where $E(\cdot)$ is a primitive of $e(\cdot)$. This tail behaviour provides both the identification of $\alpha$ and $\kappa$. This analysis is summarized by the following result:

	\begin{corollary}
	\label{cor:pareto}
		Under the conditions of Proposition \ref{prop:7}, if the marginal distribution of $A_2$ belongs to the Pareto family, the distributions of $A_1$ and $A_2$ are both non-parametrically identified.
	\end{corollary}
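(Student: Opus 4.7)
The plan is to formalize the argument sketched in the paragraph immediately preceding the corollary. By Proposition \ref{prop:7} combined with the Pareto ansatz $S(x) = x^{-\alpha}$, the function $G_0$ in equation \eqref{eq:gstar} is identified on $(0,\infty)$, and the problem reduces to showing that $G_0$ determines both $\alpha$ and the distribution of $A_1$. The homotheticity observation—that the MRS depends on $x$ only through $x_0 = x_1/x_2$—does not obstruct this, since it only reflects the scale invariance of the Pareto tail.

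The first step is to convert \eqref{eq:gstar} into an ODE for $M(x) \equiv \mathbb{E}_{\pi}[(x+A_1)^{-\alpha+1}]$. Differentiating under the expectation gives $M'(x) = -(\alpha-1)\mathbb{E}_{\pi}[(x+A_1)^{-\alpha}]$, and \eqref{eq:gstar} then yields
\beq
e(x) \equiv \frac{d}{dx}\log M(x) = -\frac{\alpha-1}{G_0(x) + x},
\eeq
so $e$ is identified up to the unknown positive factor $(\alpha-1)$. Integrating recovers $\log M$ up to an affine transformation of the identified primitive $E(x) = \int (G_0(t)+t)^{-1}\, dt$, or equivalently, the data are consistent with the family $M(x) = \kappa \exp\{(\alpha-1)E(x)\}$ parameterized by the unknown $\alpha$ and $\kappa > 0$.

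The second step is to pin down $\alpha$ and $\kappa$ by tail asymptotics. Because $\mathbb{E}_{\pi}[A_1] < \infty$ and $(1+A_1/x)^{-\alpha+1} \to 1$ pointwise with a dominating envelope, dominated convergence gives $M(x) \sim x^{-\alpha+1}$ as $x \to \infty$. Matching this to $\kappa \exp\{(\alpha-1)E(x)\}$ fixes both the decay exponent $\alpha$ (from $E(x)/\log x$ as $x\to\infty$) and the constant $\kappa$ (from the multiplicative prefactor). Thus $S(x) = x^{-\alpha}$ is identified, and $M(\cdot)$ is now fully identified on $(0,\infty)$.

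The final step is to invert the map $\pi \mapsto M(\cdot)$ to recover the distribution of $A_1$. For fixed $\alpha$, the map $\mu \mapsto \int (x+a)^{-\alpha+1}\, d\mu(a)$ is a generalized Stieltjes transform of $\mu$ and is injective on probability measures on $[0,\infty)$ with finite mean, so the distribution of $A_1$ is identified. I expect the main obstacle to be this last injectivity step, together with the careful handling of the tail asymptotic when $\alpha - 1 \leq 0$; in that regime the argument is cleanest if one works directly with the identified derivative $M'(x) = -(\alpha-1)\mathbb{E}_{\pi}[(x+A_1)^{-\alpha}]$, recognizing it (up to the constant $-(\alpha-1)$) as a standard Stieltjes transform whose injectivity on the relevant class of measures is classical.
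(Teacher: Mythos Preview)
Your proposal tracks the paper's sketch faithfully: derive $e(x)=-(\alpha-1)/(G_0(x)+x)$, integrate to obtain $M(x)=\kappa\exp\{-(\alpha-1)E(x)\}$ with $E$ identified, and then invoke the tail $M(x)\sim x^{-\alpha+1}$ to fix $\alpha$ and $\kappa$. On the level of strategy you are aligned with the paper; the final Stieltjes-inversion step you add is a reasonable sharpening.

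There is, however, a genuine gap in the step ``matching fixes the decay exponent $\alpha$ from $E(x)/\log x$ as $x\to\infty$,'' and the paper's one-line argument shares it. The primitive $E(x)=\int (G_0(t)+t)^{-1}\,dt$ is a \emph{fully identified} object that does not depend on $\alpha$; since $G_0(x)\to\mathbb{E}_{\pi}[A_1]$ one has $(G_0(t)+t)^{-1}\sim 1/t$ and hence $E(x)/\log x\to 1$ for \emph{every} candidate $\alpha$. Thus the asymptotic $\kappa\exp\{-(\alpha-1)E(x)\}\sim x^{-(\alpha-1)}$ is automatically satisfied for each $\alpha$ once $\kappa$ is chosen to satisfy $\log\kappa=(\alpha-1)\lim_{x\to\infty}(E(x)-\log x)$. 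You obtain one equation in the two unknowns $(\alpha,\kappa)$, not two, and the limit you point to carries no information about $\alpha$. A concrete witness: if $A_1$ is a point mass at $a$ then $G_0\equiv a$ for \emph{all} $\alpha>0$, so the tail of $G_0$ (or of $E$) cannot possibly separate different $\alpha$'s; the non-degeneracy of $A_1$ must enter the argument elsewhere.

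What is actually needed to pin down $\alpha$ is a rigidity statement for the generalized Stieltjes family: if two candidates $(\alpha,\pi)$ and $(\alpha',\pi')$ give the same $G_0$, then $\mathbb{E}_{\pi'}[(x+A_1)^{-\alpha'+1}]=c\big(\mathbb{E}_{\pi}[(x+A_1)^{-\alpha+1}]\big)^{(\alpha'-1)/(\alpha-1)}$ for all $x>0$, and one must show that a non-trivial power of such a transform cannot again be a transform of a probability measure of the required order unless the exponent is $1$. Your Stieltjes-injectivity step addresses the inversion \emph{after} $\alpha$ is fixed; it is precisely a structural property of this same transform family that is required \emph{before} that, to rule out $\alpha'\neq\alpha$.
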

\ee

\subsection{Between Model Identification}
\label{sec:betweens}

Once the identification of the consumer's taste distribution $\pi$ within each model is solved, we still need to consider the identification between the models. This analysis is needed to test whether preferences are consistent with SARA, or SSF, or both. It is important to know whether these two classes of preferences are nested or non-nested. If they are non-nested, we need to characterize their intersection and define a general class encompassing both types of preferences.

To illustrate, suppose that the consumer has SSF preferences:
	\beq
	\label{ex:1:1}
		U(x;\pi)=\mathbb{E}_{\pi}\big[\min\big\{x_1+A_1x_2,\, A_2\big\}\big],
	\eeq
	where (i) $A_1$ and $A_2$ are independent, (ii) $A_1$ has distribution $\pi_2$, and (iii) $A_2$ follows an exponential distribution (with unit intensity). Under this specification, we obtain:
	\beq
	\label{ex:1:2}
		U(x;\pi)=1-\mathbb{E}_{\pi}\big[\exp(-x_1-A_1x_2)\big].
	\eeq
	To clarify this result, observe that, by conditioning on $A_1$, we are left with the expec- tation of the minimum of a set containing a constant and a random variable with an exponential distribution. This utility function is a strictly increasing transformation of a SARA utility function:
	\beq
		\tilde{U}(x;\pi)=-\mathbb{E}_{\pi}\big[\exp(-B'x)\big],
	\eeq
	where (i) $B_1$ follows a point mass at $1$, and (ii) $B_2$ has distribution $\pi_2$. Consequently, these utility functions, one SARA, and the other SSF, induce the same preference ordering over the consumption set.

\subsection{Discussion}

The possible lack of identification of each consumer's taste distribution $\pi_m$ has to be taken into account in the economic interpretation of the results. However, it has to be noted that it does not create difficulties for structural inference, where the (scalar or functional) parameters of interest are the parameters characterizing the MRS, rather than the parameters characterizing the utility function.

The lack of identification is due to the special structure of the cone of increasing and concave functions defined on $R$, and of the extremal elements of this cone. For finite increasing concave functions defined on $\mathbb{R}_+$, it is well-known that the extremal functions are of the type:
\beq
	h_0(x)=\min\big\{\alpha_1x+\beta_1, \, \alpha_2 x+\beta_2\big\},
\eeq
in which $(\alpha_j,\beta_j)\in \bar{R}$, for $j=1,2$ (see \citeauthor{blaschke-pick}, \citeyear{blaschke-pick}), and that any finite positive increasing concave function can be written as:
\beq
\label{extremalfunctions}
	b+\mathbb{E}_{\pi}\big[\min(A,x)\big],
\eeq
where $b$ is a positive scalar and $\pi$ is the distribution of $A$. Such functions are charac- terized by $b$ and $\pi$. The set of extremal functions in \eqref{extremalfunctions} is a minimal set of extremal points generating the cone.

Such a property no longer holds for finite positive increasing concave functions defined on $\bar{R}$. \citet{johansen} has described a large set of extremal \mbox{points of the type:}
\beq
	h_1(x)=\min\big\{\alpha_1x+\beta_1, \dots, \alpha_n x+\beta_n\big\},
\eeq
for which $h_1(\cdot)$ induces a covering with vertices of order 3 (see page 62 in \citeauthor{johansen}, \citeyear{johansen}), and has shown that this set is dense in the cone of finite continuous convex functions defined on a convex set in $R$ (see Theorem 2 in \citeauthor{johansen}, \citeyear{johansen}). A minimal set of extremal points generating this cone does not exist. This argument explains why Sections \ref{sec:2} and \ref{sec:3} consider specific convex subsets generated by parametric functions.

While we restrict our attention to SARA and SSF preferences (because the stochastic taste parameters have clear interpretations in these models), other \mbox{convex hulls co-} uld have been considered. For example:
\bi
	\item The convex hull generated by the union of the SARA and SSF models---that is, the smallest structural model containing both of the models in Sections \mbox{\ref{sec:2} and \ref{sec:3}.}
	\item The convex hull generated by a basis of the form:
	\beq
		U(x;a,\nu)=\frac{a_1}{\nu_1}x_1^{\nu_1}+\frac{a_2}{\nu_2}x_2^{\nu_2},
	\eeq
	for every $x\in\bar{R}$ in which $a\in R$ and $\nu\in(0,1)^2$. This basis corresponds to a first-order expansion of a utility function \citep[see][]{johansen-relationship}, and contains a Stone-Geary utility function as a limiting case. Indeed, as $\nu$ approaches zero, we obtain: $U(x;a)=a_1\log x_1+a_2\log x_2$. However, the convex hull generated by this basis is not flexible enough because it only contains weighted combinations of $x_1^{\nu_1}$ and $x_2^{\nu_2}$. Similarly, the convex hull generated by a Stone-Geary basis only contains Stone-Geary utility functions, where the weights are the means of the taste parameters:
	\beq
		U(x;\pi)=\mathbb{E}_{\pi}\big[A_1\log x_1+A_2\log x_2\big] = \mathbb{E}_{\pi}\big[A_1]\log x_1+\mathbb{E}_{\pi}\big[A_2]\log x_2.
	\eeq
\ee

The Stone-Geary basis $U(x;a)$ above can be adjusted to define \mbox{another parametric} basis. In particular, let us apply the transformation $\varphi(x)=-\exp(-x)$ to the Stone-Geary utility function. This transformation yields:
\beq
	\tilde{U}(x;a)\equiv \varphi(U(x;a))=-\frac{1}{x_1^{a_1}x_2^{a_2}}.
\eeq
This utility function forms a well-behaved basis because it is strictly increasing with a negative semi-definite Hessian. While $U(x;a)$ and $\tilde{U}(x;a)$ represent the same preference ordering, they will generate different families due to the strict \mbox{concavity of $\varphi(\cdot)$.} To illustrate, suppose that the stochastic parameters, $A_1$ and $A_2$, are independently distributed with respect to uniform distributions on $[0,1]$. This specification produces:
\beq
	U(x;\pi)=\frac{1}{2}\log x_1 +\frac{1}{2}\log x_2 \; \; \text{and} \; \; \tilde{U}(x;\pi)=-\left(\frac{x_1-1}{x_1 \log x_1 }\right)\left( \frac{x_2-1}{x_2 \log x_2 }\right).
\eeq
While $U(x;\pi)$ is a Stone-Geary utility function, $\tilde{U}(x;\pi)$ is a complicated non-linear function of $x_1$ and $x_2$. Consequently, an uninteresting basis has been \mbox{transformed into} an interesting one. This procedure can be completed for any increasing, concave, and twice-differentiable transformation $\varphi(\cdot)$.

\section{An Illustration}
\label{sec:illustration}

This section shows how to use the SARA and SSF models in a non-parametric framework. First, we specify the statistical model by introducing an assumption on the obs- ervations, and then we discuss statistical inference. The methodology is \mbox{illustrated in} an application to alcohol consumption using scanner data concerning individual purchase histories.

\subsection{Assumptions on Observations}

The behavioural models introduced in the previous sections can be completed with an assumption on the available observations. We consider panel data, indexed by the consumer $i$ and date $t$. After a preliminary treatment of the purchase histories, we have a large number $n$ of consumers and a fixed number $T$ of observed dates. In the preliminary treatment, the goods are aggregated into two groups using a common quantity unit and the dated purchases are aggregated by month (see Section \ref{sec:data}). Recall that, under Assumption \ref{ass:1}, we have $M$ segments of homogeneous consumers.

We introduce the following assumption on the observations:

\begin{assumptiona}[Observations]
\label{ass:2}
\phantom{1}
\bi
	\item We jointly observe $(x_{it},z_{it})$, for all $i=1,\dots,n$ and $t=1,\dots,T$, when $x_{it}>0$.
	\item The individual histories $(x_{it},z_{it})_{t=1}^T$ are independent given all $\pi_m$, $m=1,\dots,M$.
	\item Designs $(z_{it})$ are exogenous (independent of taste distributions $\pi_m$).
\ee
\end{assumptiona}

Assumption \ref{ass:2} describes the structure of the observations. It implies that we can imagine taste parameters $(\pi_m)$ being independently drawn from a Dirichlet process $F$, designs $(z_{it})$ being independently drawn from some distribution, and consumption $x_{it}$ satisfying $x_{it}=X(z_{it};\pi_{m_i})$, where $m_i$ is the group of consumer $i$. Many papers assume that consumption $x_{it}$ is positive (see Section IV.A in \hyperlinkcite{blundell}{Blundell, Horowitz, and Parey}, \hyperlinkcite{blundell}{2017}, for this assumption in an application to gasoline demand, as well as Assumption A5 in \citeauthor{dobronyi}, \citeyear{dobronyi}, for this assumption in an application to the consumption of alcohol); the SARA and SSF models \mbox{allow for corner} solutions. However, in many datasets (including the dataset used in the application in Section \ref{sec:illustration}), there is a problem of partial observability. Let $\tilde{y}$ denote the expenditure (prior to normalization), and let $\tilde{p}_j$ denote the price of good $j$ (prior to normalization). Usually, we only observe the price $\tilde{p}_j$ of a good $j$ when the consumer buys a positive quantity of good $j$. Then, we only observe (normalized) expenditure $y$ when the consumer buys a positive quantity of good 2, and we only observe the (normalized) price $p$ when the consumer buys a positive quantity of both goods (see \citeauthor{craw-pol}, \citeyear{craw-pol}, for an approach to revealed preference that deals with this partial observability problem).\, This problem explains the specific form of Assumption \hyperref[ass:2]{2(i)}.

For deriving the asymptotic properties of estimators, it is also necessary to specify the type of asymptotics to be considered:

\begin{assumptiona}
\label{ass:3}
	Let $n_m$ denote the size of the $m^{th}$ homogeneous group.
	\bi
		\item $n_mT\rightarrow\infty$, as $n\rightarrow\infty$, for all $m=1,\dots,M$.
		\item $n_mT\sim \lambda_m n$, for some $\lambda_m\in(\lambda_{\ell},\lambda_h)$, where $0<\lambda_{\ell}<\lambda_h<1$, for $m=1,\dots,M$.
		\item $M\rightarrow\infty$, as $n\rightarrow\infty$.
	\ee
\end{assumptiona}

Assumptions \hyperref[ass:3]{A3(i)} and \hyperref[ass:3]{A3(ii)} ensure that there are enough observations to non-parametrically estimate the demand function associated with the functional parameter $\pi_m$ on a sufficiently large subset $\mathcal{Z}_m$ of designs $z$. Assumption \hyperref[ass:3]{A3(iii)} guarantees enough filtered parameters $\hat{\pi}_m$ to estimate the underlying Dirichlet process $F$. In some special circumstances, $T$ is large, and Assumption \ref{ass:3} can be used with $m=i$ and $M=n$---that is, a single consumer per group. Otherwise, grouping of homogeneous consumers is needed to identify the demand functions on sufficiently \mbox{large subsets $\mathcal{Z}_m$.}

\subsection{Estimation Method}

The Dirichlet process is common in Bayesian estimation (see, for instance, \citeauthor{ferguson}, \citeyear{ferguson}, and \citeauthor{li-2019}, \citeyear{li-2019}). This process is useful because it is flexible and, if observations are independently and identically drawn from an unknown distribution, the posterior distribution of this distribution has a closed-form expression. However, our framework is much more complicated for two reasons:
\bi
	\item The observed consumption choices $(X_{ijt})$ are not identically distributed because consumers make decisions at \emph{different expenditures and prices}.
	\item It is difficult to derive a closed-form expression for the demand, as a \mbox{function of} the expenditure, the price, and the functional parameter characterizing taste uncertainty $\pi$. It is, therefore, difficult to derive a closed-form \mbox{expression for the} distribution of $X_{ijt}$ conditional on $Z_{it}$.
\ee
These features of our model explain why estimation requires specific numerical algor- ithms. These specific algorithms have to be able to deal with the non-linear \mbox{and high-} dimensional features of the models. In the Bayesian framework, the Dirichlet process is fixed. In the hyperparametric framework, it is parameterized by a vector $\theta$. These parameters have to be estimated and the functional parameters $(\pi_m)$ have to be filter- ed. These estimation approaches are described below.

\subsubsection{Bayesian Framework}
\label{bayesian}

In a pure Bayesian framework, a Dirichlet process is fixed by selecting a mean distribution $\mu$ and a scaling parameter $c$ (see Appendix \ref{app:dir}). This distribution defines the common prior for the functional taste parameters $(\pi_m)$. After, the data are used to compute the posterior distribution for the functional parameters $(\pi_m)$. Under Assu- mption \ref{ass:2}, the posterior distribution can be computed separately for each homogeneous group of consumers:
\[
	\ell(\pi_m|x_{it},z_{it},x_{it}>0,i\in\Lambda_m, t=1,\dots,T),
\]
where $\Lambda_m$ denotes the group of consumers with preferences characterized by the taste parameter $\pi_m$. This approach does not have to account for the potential identification problem discussed in Section \ref{sec:np.ident}. If a specific characteristic of $\pi_m$ is weakly identified, its posterior distribution will be close to the prior distribution.

In our framework, the observations $(x_{it},z_{it})$, conditional on $x_{it}>0$, \mbox{must satisfy} the deterministic first-order conditions implied by the model. These conditions have the following form:
\beq
\label{mrsrestrictions}
	\text{MRS}(x_{it};\pi)=p_{it},
\eeq
for any observed pair $(x_{it},z_{it})$. Equivalently:
\beq
\label{mrsrestrictions2}
	\mathbb{E}_{\pi}\left[\frac{\partial U(x_{it};A)}{\partial x_1}\right]=p_{it}\, \mathbb{E}_{\pi}\left[\frac{\partial U(x_{it};A)}{\partial x_2}\right],
\eeq
for any observed pair $(x_{it},z_{it})$. These conditions are moment restrictions, called \emph{MRS restrictions}. In our big data framework, the number of MRS restrictions is very large, typically several hundred to a thousand. The posterior of $\pi_m$ is simply the distribution of $\pi_m$ given these deterministic restrictions on $\pi_m$. If the taste parameters, $A_1$ and $A_2$, are independent with marginal distributions, $\pi_1$ and $\pi_2$, respectively, then the MRS restrictions are bilinear in $\pi_1$ and $\pi_2$---specifically, these restrictions are linear in $\pi_1$ given $\pi_2$, and linear in $\pi_2$ given $\pi_1$. Later, this property is used to construct a numerically efficient optimization algorithm for filtering all the $\pi_m$ \mbox{(see Appendix \ref{app:filter}).}

\subsubsection{Hyperparametric (or Empirical Bayesian) Framework}
\label{sec:hyper}

The hyperparametric (or empirical Bayesian) framework is a complicated non-linear state-space model with two layers of latent state variables. Such a framework can be characterized as follows:
\bi
	\item \textbf{Deep layer:} Functional parameters $(\pi_m)$, drawn from $F$ (parameterized by $\theta$);
	\item \textbf{Surface layer:} Demand functions $X(\cdot;\pi_m)$ deduced from $\pi_m$;
	\item \textbf{Measurement equations:} Observed pairs $(x_{it},z_{it})$, given $x_{it}>0$.
\ee
We have partial observability of the demand function because the value of demand $X(z;\pi_m)$ is observed at finitely many designs $z$. Furthermore, unlike most state-space models, the state variables are infinite-dimensional.

\subsubsection{Estimating the Hyperparameter}

While it is difficult to derive analytically the distribution of $X_{it}$ given $Z_{it}$, it is easy to simulate its distribution for a given value of $\theta$ (see Appendix \ref{app:dir} for simulations from the Dirichlet distribution). Therefore, $\theta$ can be estimated by the \mbox{method of simulated} moments (MSM), or indirect inference (see \citeauthor{mcfadden-msm}, \citeyear{mcfadden-msm}, \citeauthor{pakes-optimization}, \citeyear{pakes-optimization}, and \citeauthor{gou-mon}, \citeyear{gou-mon}). That is, $\theta$ is estimated by matching some sample and simulated moments of the pair $(X_{it},Z_{it})$.

To illustrate, consider a pure panel such that $M=n$.\footnote{When $M<n$, we simulate $n_mT$ observations for the $m^{th}$ draw from the Dirichlet process.} The steps are the following:
\begin{steps}
	\item Simulate $s=1,\dots,n$ draws from a Dirichlet process given the \mbox{parameter $\theta$.} Each draw $\pi^s(\theta)$ is associated with an individual consumer $i$ such that $s=i$.
	\item Compute simulated consumption $x_{it}^s(\theta)$ by solving the first-order condition in \eqref{foc} with respect to $x_1$ and applying the transformation in \eqref{cases} given $z_{it}=(y_{it},p_{it})$ and $\pi^i(\theta)$.
	\item Construct a collection of $K$ moments from the observed and simulated data:
	\[
		m\equiv\left[\frac{1}{nT}\sum_{i=1}^n\sum_{t=1}^T m_k(x_{it},z_{it})\right]_k \; \; \text{and} \; \; m(\theta)\equiv\left[\frac{1}{nT}\sum_{i=1}^n\sum_{t=1}^T m_k(x_{it}^s(\theta),z_{it})\right]_k.
	\]
	Then, numerically solve the following problem:
	\beq
		\argmin_{\theta} \; \big|\big|m-m(\theta)\big|\big|,
	\eeq
	in which $||\cdot||$ is a Euclidean norm with the form $||m||^2=m'\Omega m$, for some positive-definite $K\times K$ matrix $\Omega$.
\end{steps}

\noindent Given the estimated hyperparameter $\hat{\theta}$, the taste distributions $(\pi_m)$ must be filtered. This step is equivalent to applying the Bayesian approach with the estimated Dirichlet distribution as the prior distribution (see Appendix \ref{app:filter}).

Under Assumptions \ref{ass:1} to \ref{ass:3}, the estimator for $\theta$ is consistent and asymptotically normal, and it converges at a speed of $1/\sqrt{nT}$. The derivation of the asymptotic pro- perties of the filtered functional parameter $\hat{\pi}_m$ is out of the scope of this paper and left for future research.

\subsubsection{Filtering the Taste Distributions}
\label{sec:filtration}

Once the hyperparameter $\theta$ is estimated, we can filter $\pi_m$ by using the following steps:
\begin{steps}
	\item Draw a taste distribution $\tilde{\pi}_m$ from the Dirichlet process given $\hat{\theta}$. Then, by construction, the taste distribution $\tilde{\pi}_m$ is a draw from the \mbox{prior distribution.}
	\item Discretize $\tilde{\pi}_m$ on a grid of values for the taste parameters, $A_1$ and $A_2$. Let $\bar{\pi}_m$ denote the result. The aim of this step is to put $\tilde{\pi}_m$ on a \mbox{grid for optimization.}
	\item Solve the minimization problem:
	\[
		\min_{\pi} \; ||\pi-\bar{\pi}_m|| \; \; \text{s.t. MRS restictions \eqref{mrsrestrictions} and unit mass restrictions.}
	\]
	Let $\hat{\pi}_m^*$ denote the solution. This solution approximates a drawing from the posterior.
	\item Replicate these steps to obtain a sequence of solutions: $\hat{\pi}_{m,s}^*$, $s=1,\dots,S$, where $S$ is the number of replications. The filtered $\hat{\pi}_m$ is obtained by averaging over all simulations such that:
	\[
		\hat{\pi}_m=\frac{1}{S}\sum_{s=1}^S\hat{\pi}_{m,s}^*
	\]
\end{steps}
This procedure involves a high-dimensional argument $\pi_m$, and a very large number of MRS restrictions. Indeed, we need several hundred grid points for $\pi_m$, and, in the application, we have about one-thousand MRS restrictions, for each $m=1,\dots,M$. If the taste parameters, $A_1$ and $A_2$, are independent, this procedure can be numerically simplified by using the fact that these restrictions are bilinear (see Section \ref{bayesian} and Appendix \ref{app:filter}).

\subsection{The Data}
\label{sec:data}

We use the Nielsen Homescan Consumer Panel (NHCP). Nielsen provides a sample of households with barcode scanners. Households are asked to scan all purchased goods on the date of each purchase. The prices are entered by the households or linked to retailer data by The Nielsen Company. The households that agree to participate are compensated through benefits and lotteries.

We focus on the consumption of alcoholic drinks (see \citeauthor{manning-et-al}, \citeyear{manning-et-al}, for an application to alcohol consumption in economics). We classify drinks by type. Good 1 contains beers and ciders.\footnote{The NHCP classifies ciders as wine, by default. We reclassify these beverages using \mbox{product desc-} riptions because most ciders have a low alcohol by volume (ABV).} Good 2 contains wines and liquors. We disregard all non-alcoholic beers, ciders, and wines. We are left with 30,635 beers and ciders, and 108,439 wines and liquors, for a total of 139,074 drinks. We convert all measurement units to litres of alcohol by first converting all units to litres and then multiplying by the standard alcohol by volume (ABV) in each subgroup---specifically, 4.5\% for beer and cider, 11.6\% for wine, and 37\% for liquor. For example, if a household buys two packs of six bottles of beer and each bottle contains 355 millilitres of beer, then the household buys 4.26 litres of this beer, or $4.26\times 0.045=0.231$ litres of alcohol. We use the standard ABV in each subgroup as a result of data limitations. \mbox{Our sample only} contains purchases made at stores, not purchases made at bars, or restaurants.

Measuring quantities in litres of alcohol has at least three advantages: (i) it can account for a quality effect, (ii) it is appropriate for analyzing most relevant structural objects (e.g. the effect of a change in taxation on alcohol consumption), and \mbox{(iii) it yi-} elds continuous quantities, permitting the application of standard tools in consumer theory (which could not be used if quantities were measured in, for example, bottles), and avoiding some common identification issues in the literature.

We restrict our sample to purchases made from August to November in 2016. This relatively short window is used to diminish the impact of changing tastes and product availability, and to avoid most federal holidays in the United States that are often associated with alcohol consumption such as Independence Day, Christmas Day, and New Year's Eve. Our sample contains 28,036 households. Some additional details of this restricted sample are placed in Appendix \ref{app:d}.

\begin{table}
      \centering
      \caption{Mean $m$, standard deviation $\sigma$, the ratio $\sigma/m$, \mbox{and quantiles for} expenditure $\tilde{y}$, prices $\tilde{p}_j$, normalized expenditure $y$, and normalized price $p$. Normalized expenditures $y$ and prices $p$ are conditional on being defined.}
      \begin{tabular}{crrrrrrrrr}
            \midrule \midrule
                          &        &           & & \multicolumn{5}{c}{Quantiles}       & \\ \cmidrule{5-9}
            Var.      & $m$   & $\sigma$ & $\sigma/m$ & \; 0\%  & \; 25\% & \; 50\% & \; 75\% & \; 100\%  & $N$ \\ \midrule
            $\tilde{y}$   & 52.29 & 70.62    & 1.35 & 0.00 & 13.28 & 28.17 & 62.97 & 2,767.76 & 63,972 \\
            $\tilde{p}_1$ & 70.36 & 44.19 & 0.62 & 0.05 & 46.41 & 62.55 & 83.41 & 2,893.65 & 33,077 \\
            $\tilde{p}_2$ & 61.33  & 326.93      & 5.33 & 0.03  & 29.29 & 48.56 & 75.14 & 39,900.85 & 45,518 \\ \midrule
            $y$           & 1.60 & 3.48 & 2.17 & 0.00 & 0.27 & 0.74 & 1.83 & 228.57 & 45,518 \\
            $p$           & 2.20 & 4.06 & 1.54 & 0.00 & 0.89 & 1.37 & 2.29 & 139.76 & 14,659 \\ \midrule \midrule
      \end{tabular}
      \label{table:data:1}
\end{table}

The dated purchases are aggregated by month. For each household and \mbox{month, the} prices are constructed by dividing the total expenditure for each aggregate good (after accounting for the value of coupons) by the amount of alcohol of that aggregate good purchased by the household, when this amount is strictly positive. \mbox{Then, we norm-} alize by the price of good 2. This procedure yields four monthly \mbox{observations per hou-} sehold for a total of 112,144. A total of 63,936 observations have positive consumption such that $x_{it}> 0$. Table \ref{table:data:1} gives summary statistics conditional on $x_{it}\neq 0$. \mbox{The prices} $(p_{it})$ are conditional on being well-defined (see the discussion of partial observability on pages 23 and 24). For the interpretation of the results, recall that $\tilde{y}$ is the expenditure (prior to normalization), and that $\tilde{p}_j$ is the price of \mbox{good $j$ (prior to normal-} ization).

\begin{table}
      \centering
      \caption{Proportion of observations by type.}
      \begin{tabular}{ccc}
            \midrule \midrule
                    & $x_2=0$   & $x_2>0$   \\ \midrule
            $x_1=0$ & 0.4298    & 0.2751    \\
            $x_1>0$ & 0.1642    & 0.1307    \\ \midrule \midrule
      \end{tabular}
      \label{table:data:2}
\end{table}

There are four regimes of observations: (i) zero expenditure on all goods, \mbox{(ii) zero} expenditure on good 1 and strictly positive expenditure on good 2, (iii) \mbox{strictly pos-} itive expenditure on good 1 and zero expenditure on good 2, and (iv) \mbox{strictly positive} expenditure on all goods. Table \ref{table:data:2} provides the proportion of observations \mbox{in each regi-} me, and shows a large proportion of observations with zero expenditure. Recall, under Assumption \ref{ass:2}, designs $z_{it}$ are drawn from a distribution. Therefore, we can interpret this result as a mass at zero in the marginal distribution of expenditure.

Figure \ref{fig:data:exp} displays the sample distribution of expenditure $\tilde{y}_{it}$ by regime: the distribution of expenditure $\tilde{y}_{it}$ conditional on $x_{it}>0$ is on the left; the sample distributions of expenditure $\tilde{y}_{it}$ for the two other regimes with positive expenditure are on the right. The shape of the sample distribution of expenditure $\tilde{y}_{it}$ does not appear to vary all that much with the regime. That being said, the sample distribution conditional on $x_{it}>0$ has more probability attributed to higher expenditures.

Figure \ref{fig:data:prices} compares the sample distributions of prices $\tilde{p}_j$ by regime: the \mbox{sample dis-} tributions of $\tilde{p}_1$ are on the left; the sample distributions of $\tilde{p}_2$ are on the right. Although the sample distribution of $\tilde{p}_1$ differs from the sample distribution of $\tilde{p}_2$, these distributions do not seem to be affected by the regime.

Figure \ref{fig:data:consumption} displays the sample distributions of (normalized) designs $z_{it}=(y_{it},p_{it})$ and the components of consumption $x_{it}$ given $x_{it}>0$. As expected, the components of consumption $x_{it}$ are increasing in expenditure $y_{it}$. Furthermore, the first component of consumption $x_{it}$ is more affected by changes in the price $p_{it}$ than the second component.

\begin{figure}
      \centering
      \begin{subfigure}[b]{0.48\linewidth}
      \centering
      \includegraphics[scale=0.4]{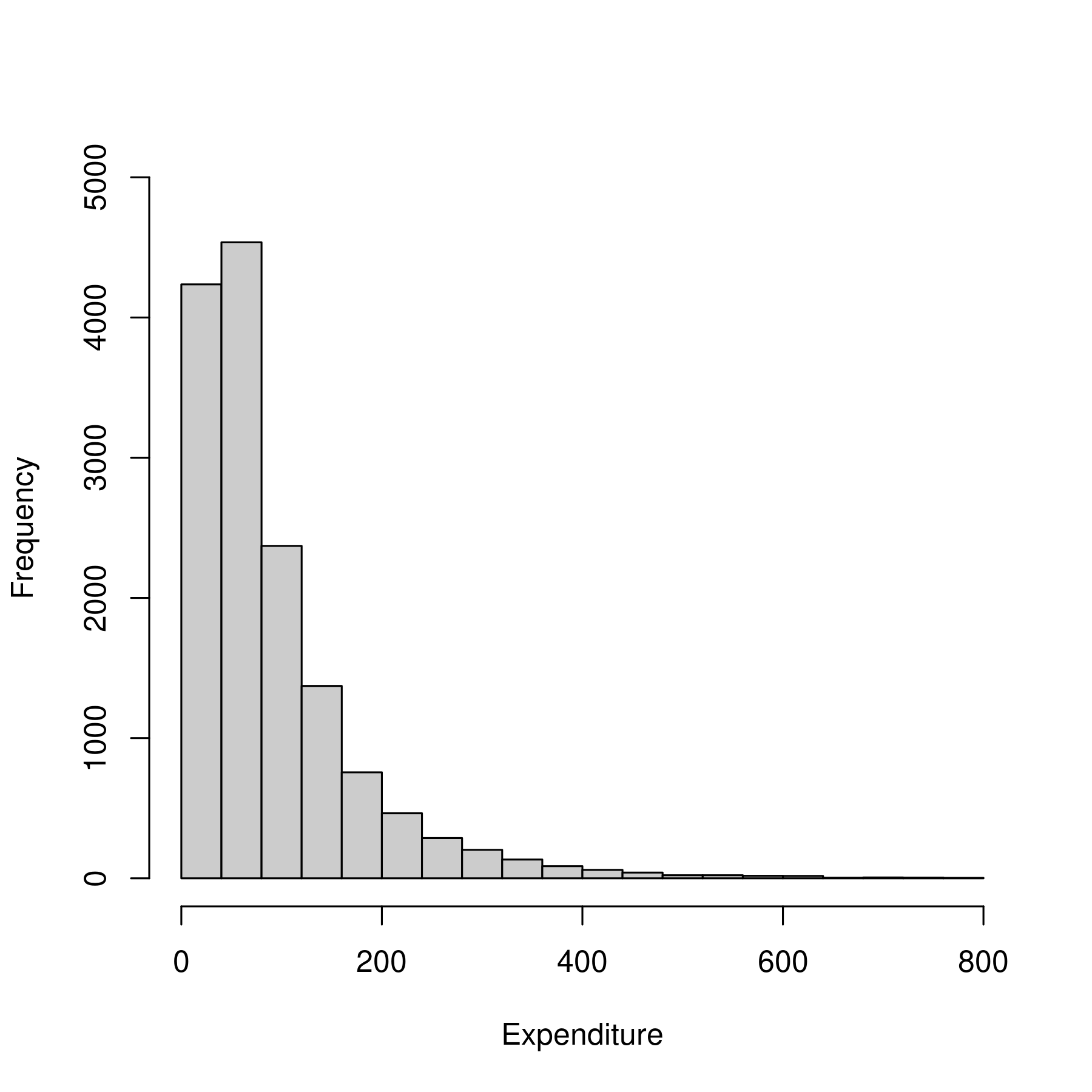}
      \end{subfigure}
      \begin{subfigure}[b]{0.48\linewidth}
      \centering
      \includegraphics[scale=0.4]{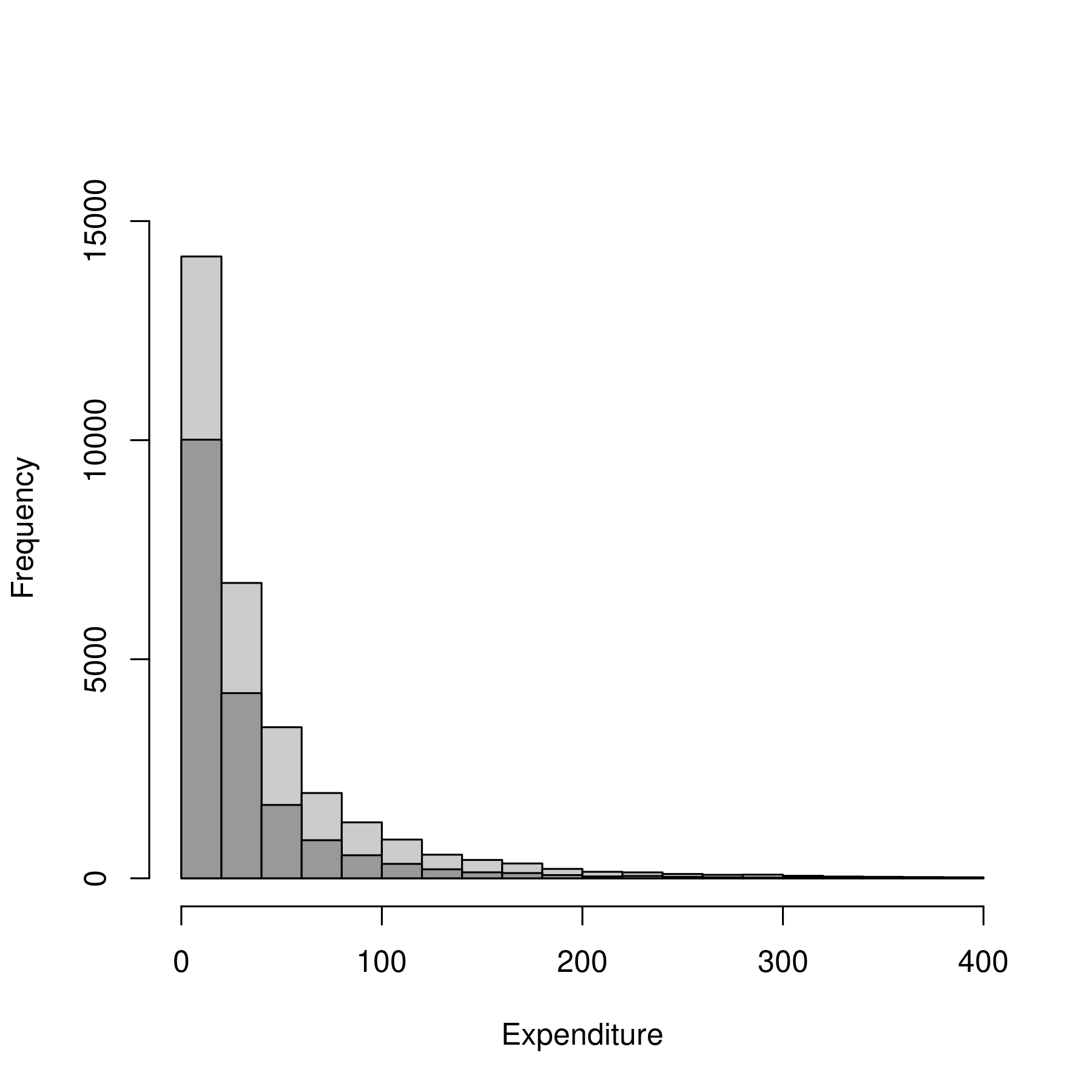}
      \end{subfigure}
      \caption{\textbf{Sample Distributions of Expenditure $\tilde{y}$ by Regime.} On the left, we illustrate the sample distribution conditional on $x_1>0$ and $x_2>0$; on the right, the light histogram illustrates the sample distribution conditional on $x_1=0$ and $x_2>0$, and the dark histogram illustrates the sample distribution conditional on $x_1>0$ and $x_2=0$.}
      \label{fig:data:exp}
\end{figure}

\begin{figure}
      \centering
      \begin{subfigure}[b]{0.48\linewidth}
      \centering
      \includegraphics[scale=0.4]{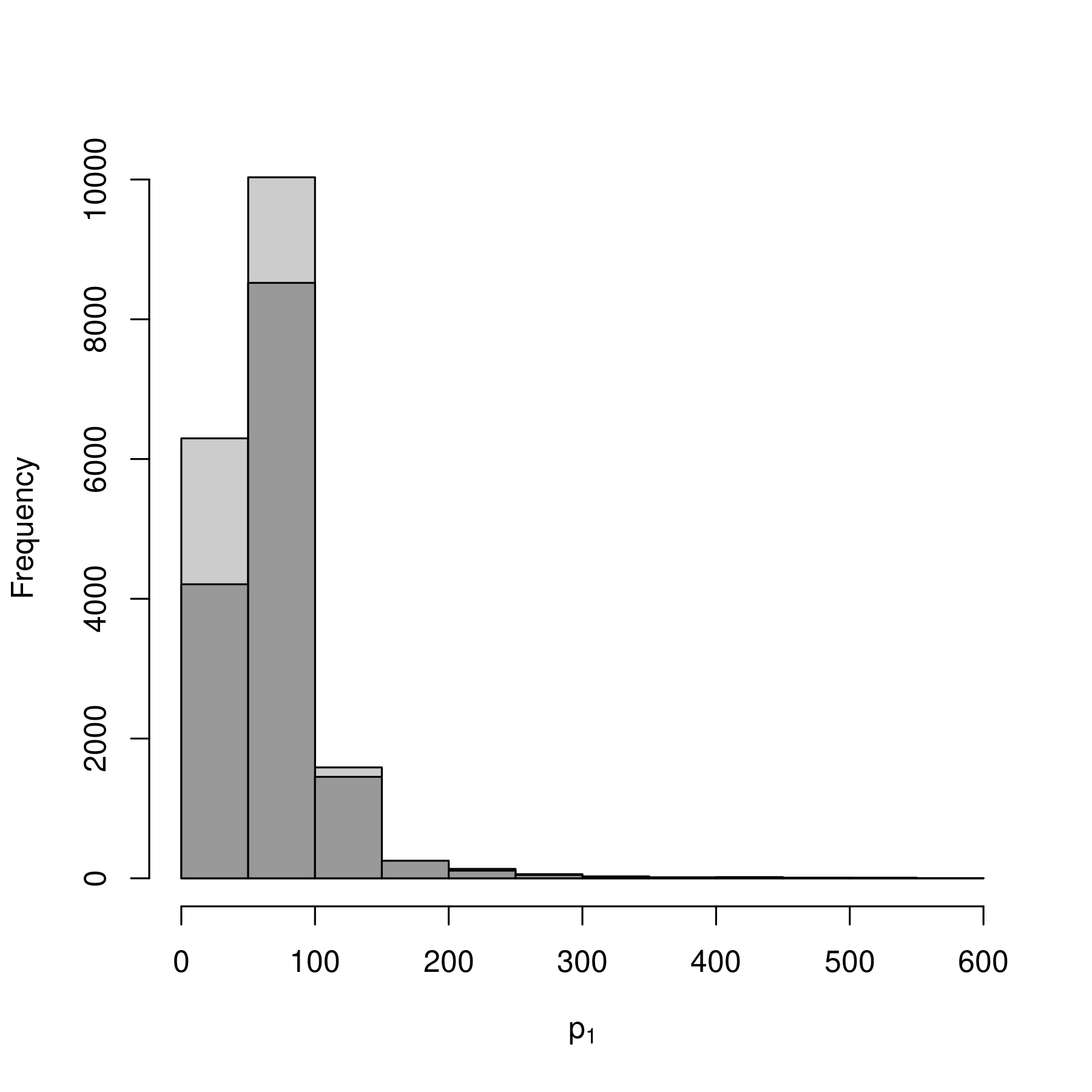}
      \end{subfigure}
      \begin{subfigure}[b]{0.48\linewidth}
      \centering
      \includegraphics[scale=0.4]{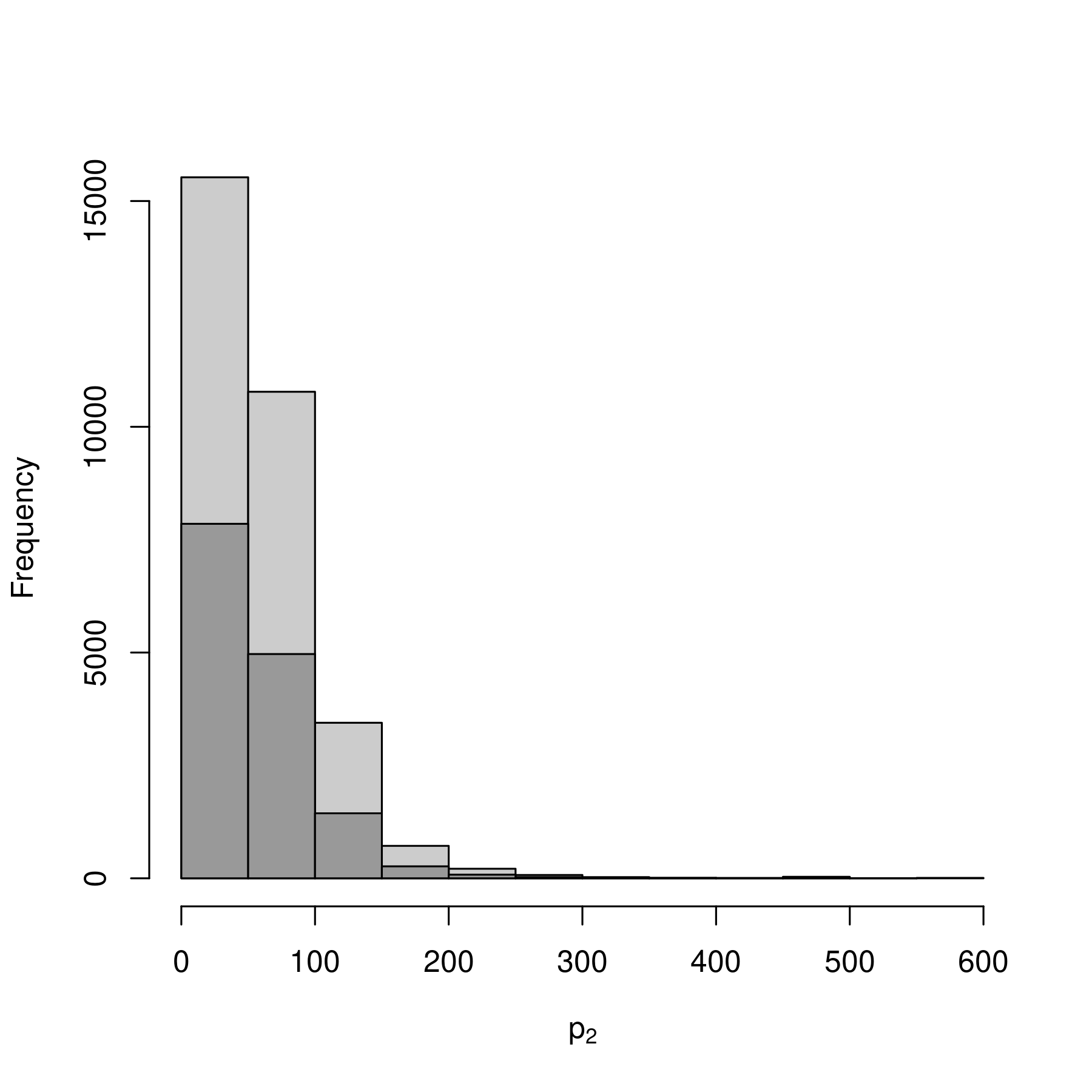}
      \end{subfigure}
      \caption{\textbf{Sample Distributions of Prices $\tilde{p}_j$ by Regime.} On the left, the light histogram illustrates the sample distribution of $\tilde{p}_1$ conditional on $x_1>0$ and $x_2=0$; on the right, the light histogram illustrates the sample distribution of $\tilde{p}_2$ conditional on $x_1=0$ and $x_2>0$; in each plot, the dark histogram illustrates the sample distribution conditional on \mbox{$x_1>0$ and $x_2>0$.}}
      \label{fig:data:prices}
\end{figure}

\begin{figure}
      \centering
      \begin{subfigure}[b]{0.48\linewidth}
      \centering
      \includegraphics[scale=0.4]{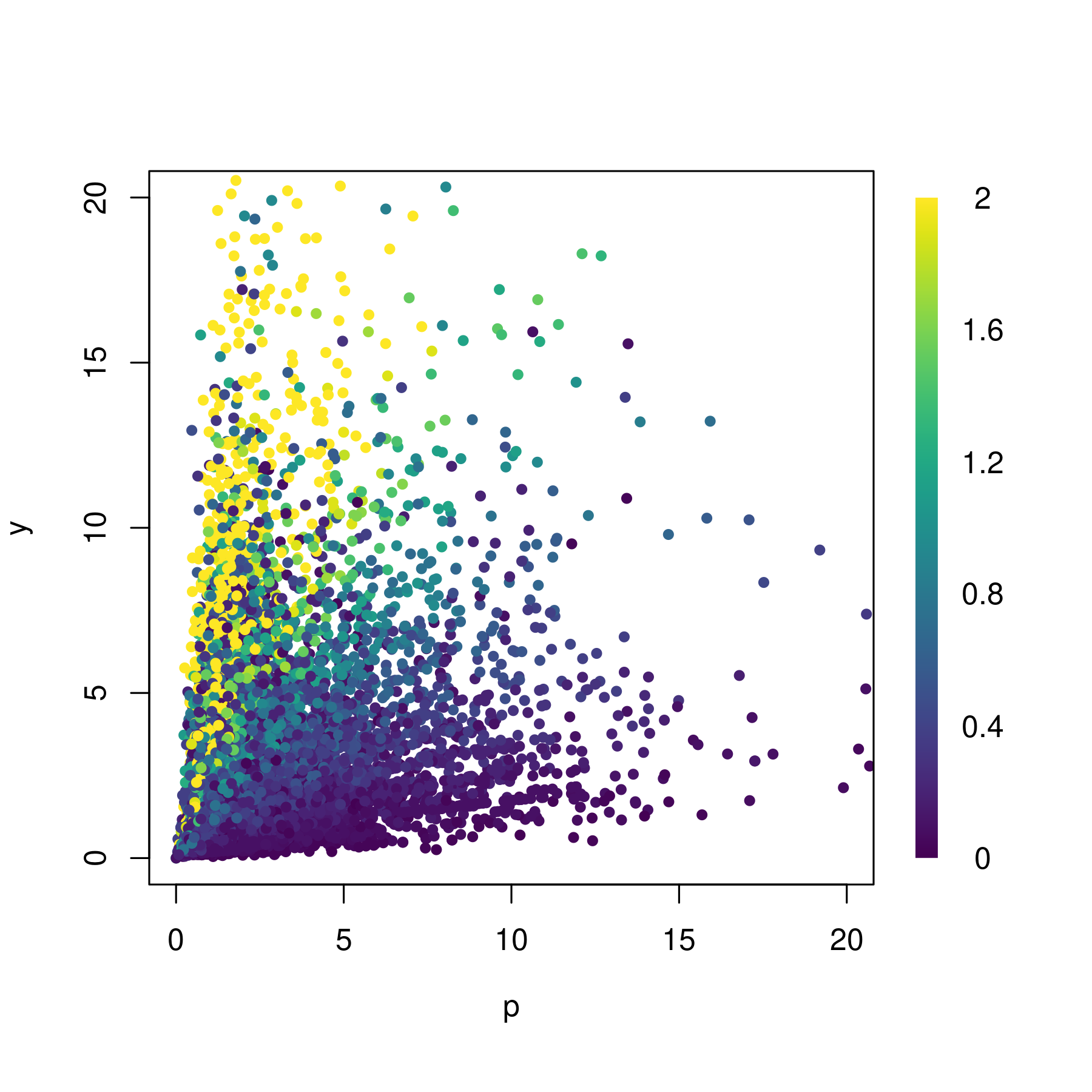}
      \end{subfigure}
      \begin{subfigure}[b]{0.48\linewidth}
      \centering
      \includegraphics[scale=0.4]{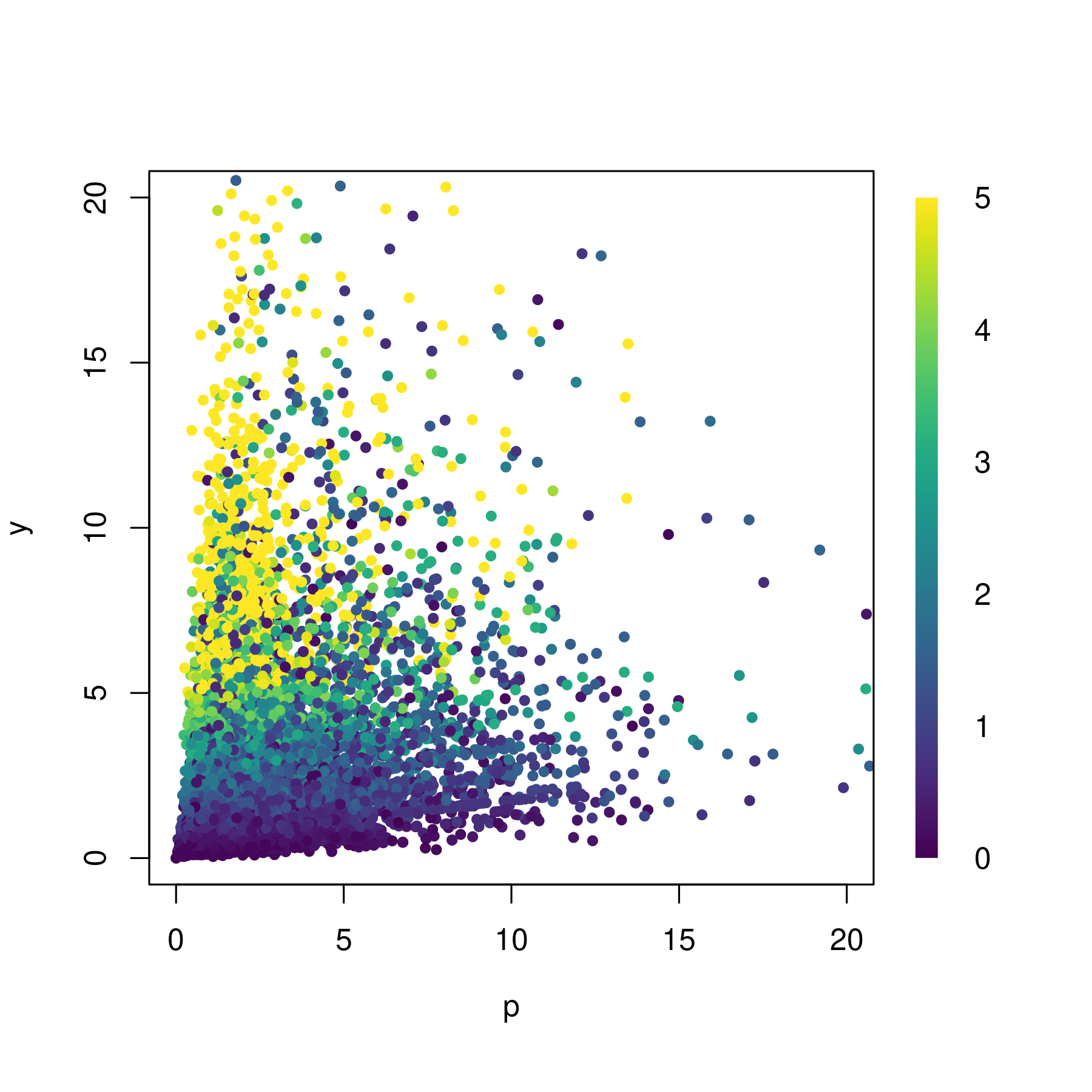}
      \end{subfigure}
      \caption{\textbf{Sample Distributions of Designs $z_{it}$ and Consumption.} These figures are conditional on $x_{it}>0$. On the left, colour describes the quantity of good 1; on the right, colour describes the quantity of good 2.}
      \label{fig:data:consumption}
\end{figure}

Since we consider a rather short window of time, we follow the segmented population approach. We segment the population by state. Large states (e.g. California) are segmented again by county. Specifically, a county is given its own segment if it has more than 70 observations with positive consumption and it is in a \mbox{state with more} than 1,000 observations with positive consumption. We are left with a total \mbox{of 65 seg-} ments, each corresponding to a state or county. The smallest segment is Wyoming, containing 15 observations with positive consumption; the largest state is Florida (after removing Broward, Hillsborough, Palm Beach, Pinellas, and Miami-Dade counties), containing 880 observations with positive expenditure; the mean number of observations with positive consumption per segment is approximately 226.

Figure \ref{fig:data:consumption-state} displays the sample distributions of (normalized) designs $z_{it}=(y_{it},p_{it})$ and the first component of consumption $x_{it}$ given $x_{it}>0$ in two of the larger segments: California (after removing Almeda, Los Angeles, Orange, Riverside, Sacramento, San Bernardino, and San Diego counties), and Florida (after removing Broward, Hillsborough, Palm Beach, Pinellas, and Miami-Dade counties).

\begin{figure}
      \centering
      \begin{subfigure}[b]{0.48\linewidth}
      \centering
      \includegraphics[scale=0.4]{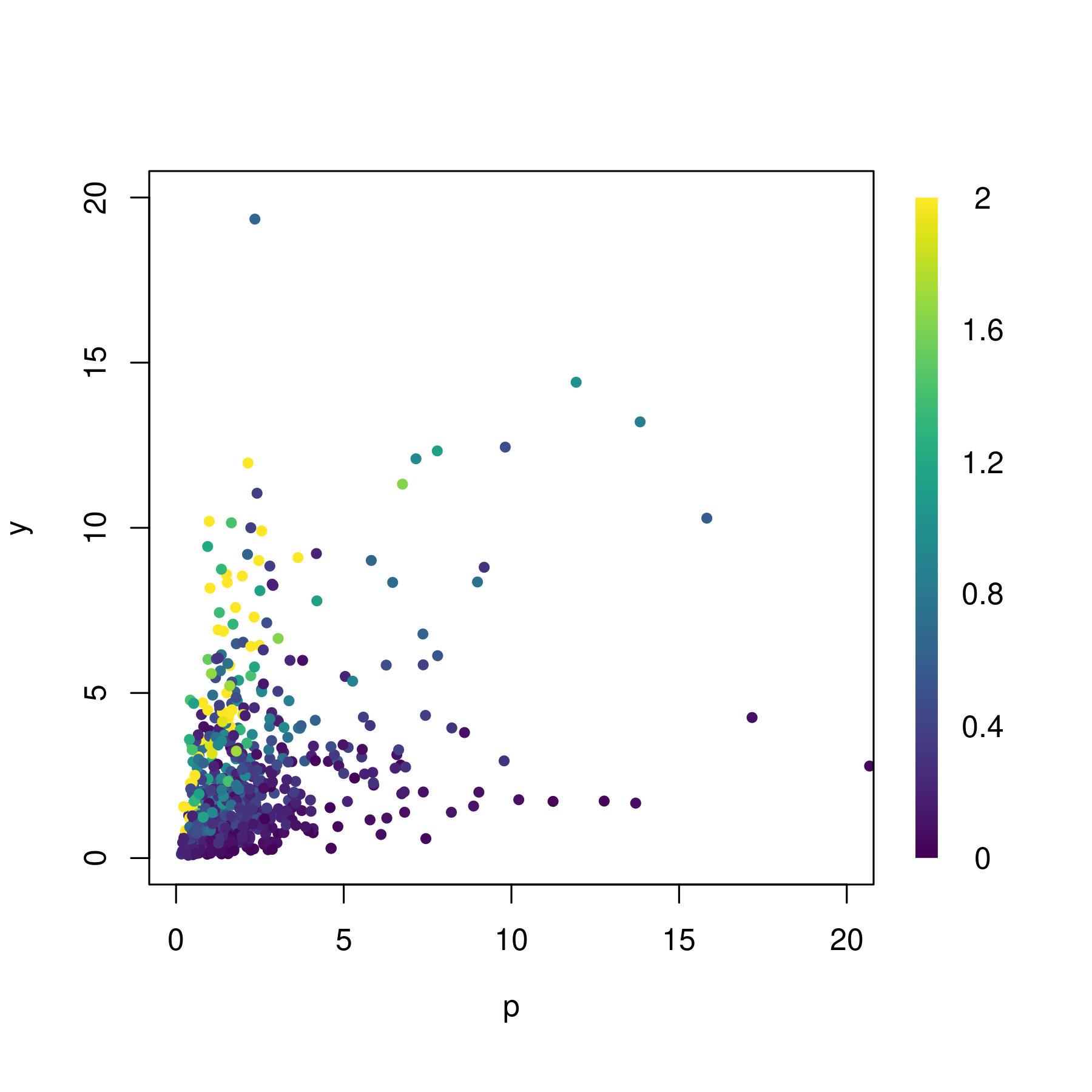}
      \end{subfigure}
      \begin{subfigure}[b]{0.48\linewidth}
      \centering
      \includegraphics[scale=0.4]{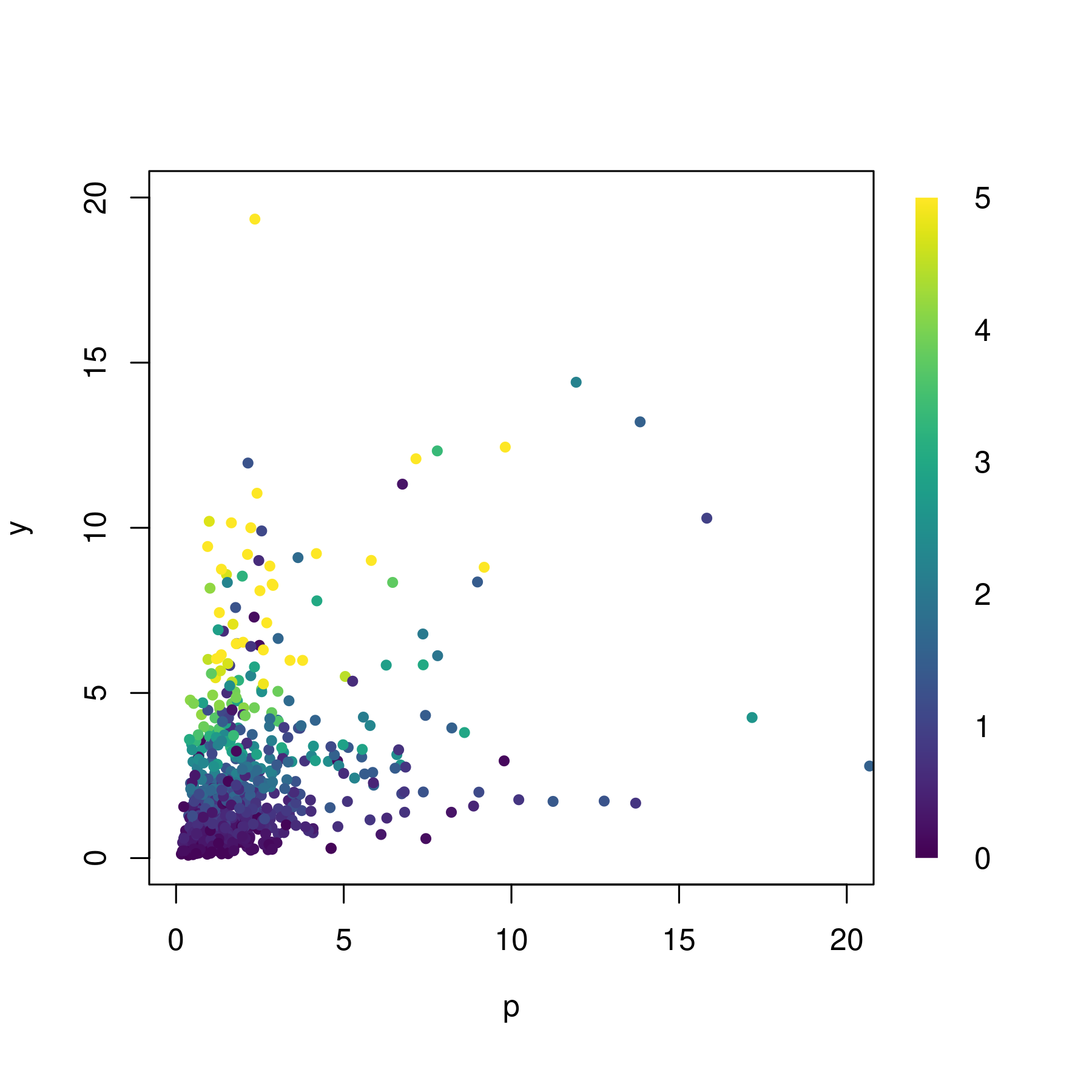}
      \end{subfigure}
      \begin{subfigure}[b]{0.48\linewidth}
      \centering
      \includegraphics[scale=0.4]{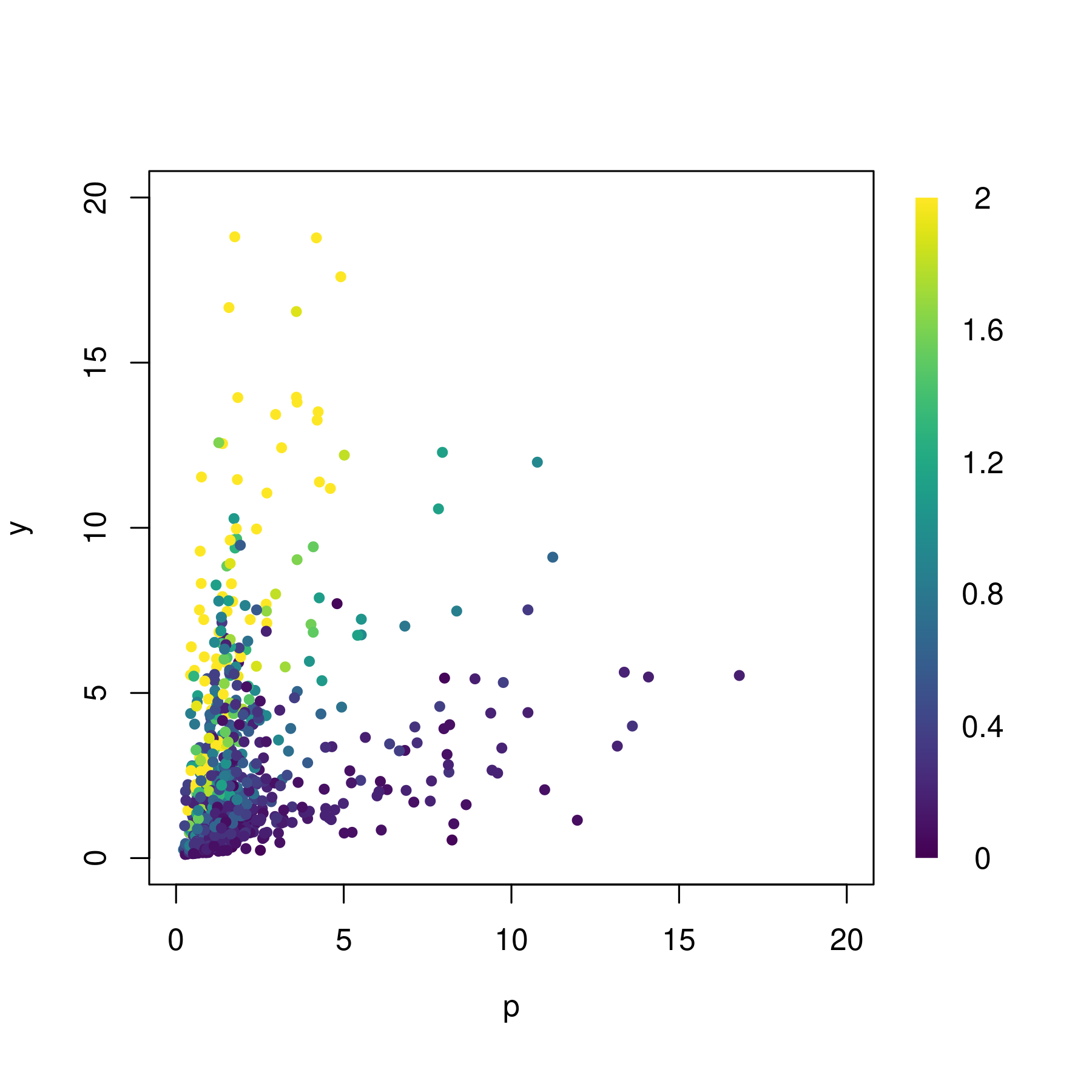}
      \end{subfigure}
      \begin{subfigure}[b]{0.48\linewidth}
      \centering
      \includegraphics[scale=0.4]{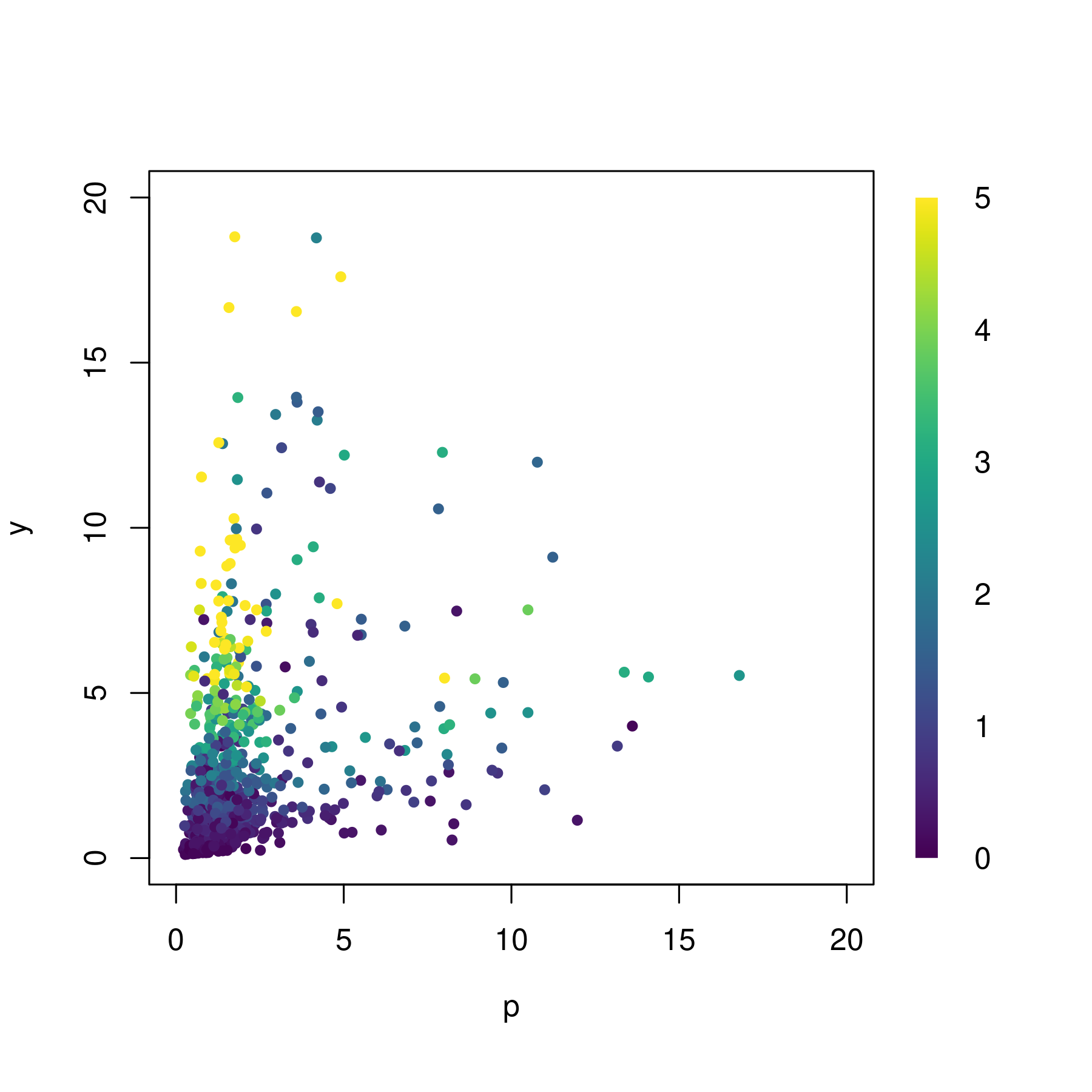}
      \end{subfigure}
      \caption{\textbf{Sample Distributions by State.} These figures are conditional on $x_{it}>0$. California is shown on the top, and Florida is shown on the bottom. On the left, colour describes the quantity of good 1; on the right, colour describes the quantity of good 2.}
      \label{fig:data:consumption-state}
\end{figure}

Figure \ref{fig:data:nw-state} displays the Nadaraya-Watson (kernel) estimates of the demand function for beer conditional on $x_{it}>0$ in California and Florida over a subset of the domain of designs. Demand for beer in California is lower and less responsive to price changes than in Florida. Figure \ref{fig:slice} displays Engel curves for good 1 in California and Florida given $p\equiv\tilde{p}_1/\tilde{p}_2=4$.\footnote{This price is chosen to be in a sufficiently dense region of the sample distribution (see Figure \ref{fig:data:consumption-state}).} These Engel curves cross.

\begin{figure}
      \centering
      \begin{subfigure}[b]{0.48\linewidth}
      \centering
      \includegraphics[scale=0.4]{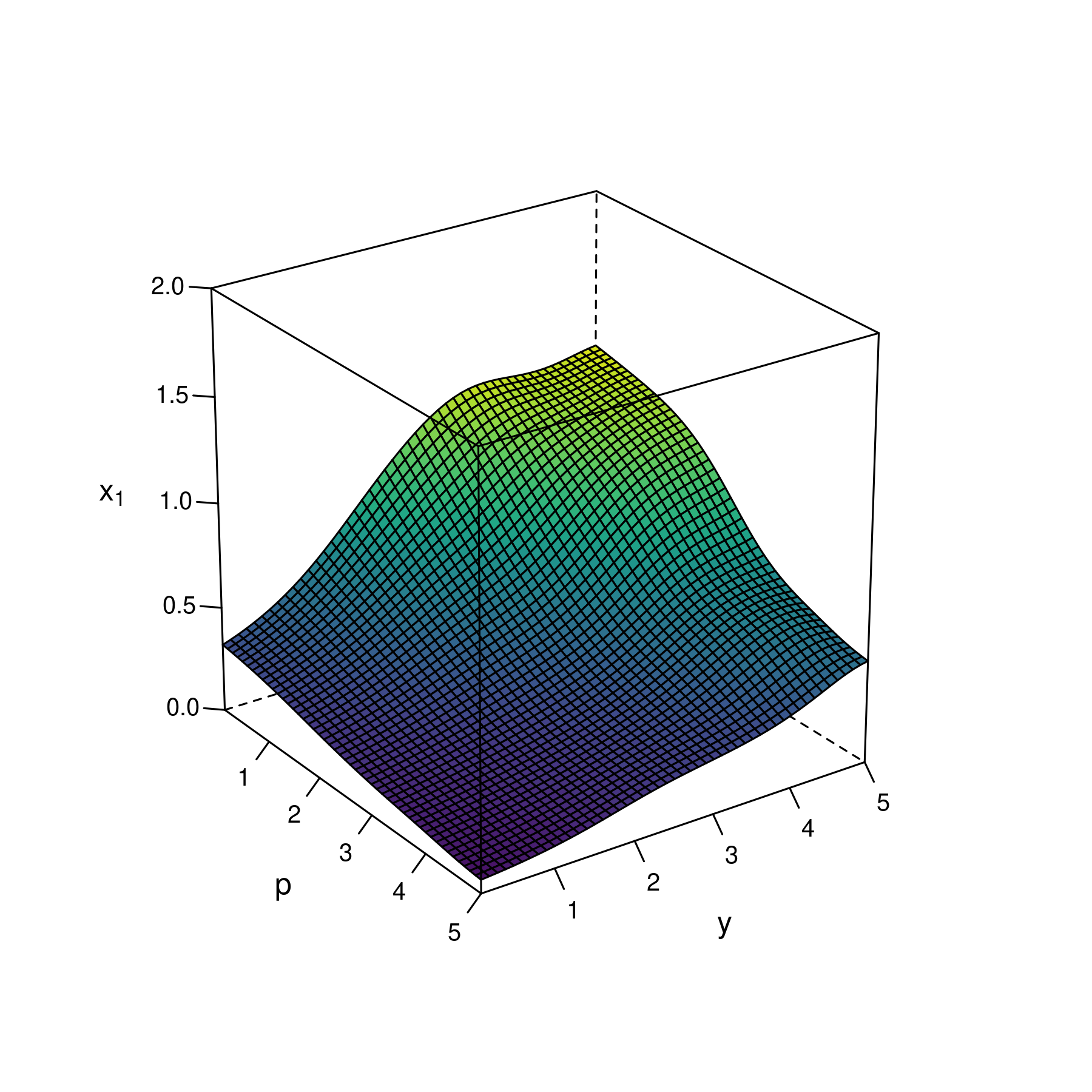}
      \end{subfigure}
      \begin{subfigure}[b]{0.48\linewidth}
      \centering
      \includegraphics[scale=0.4]{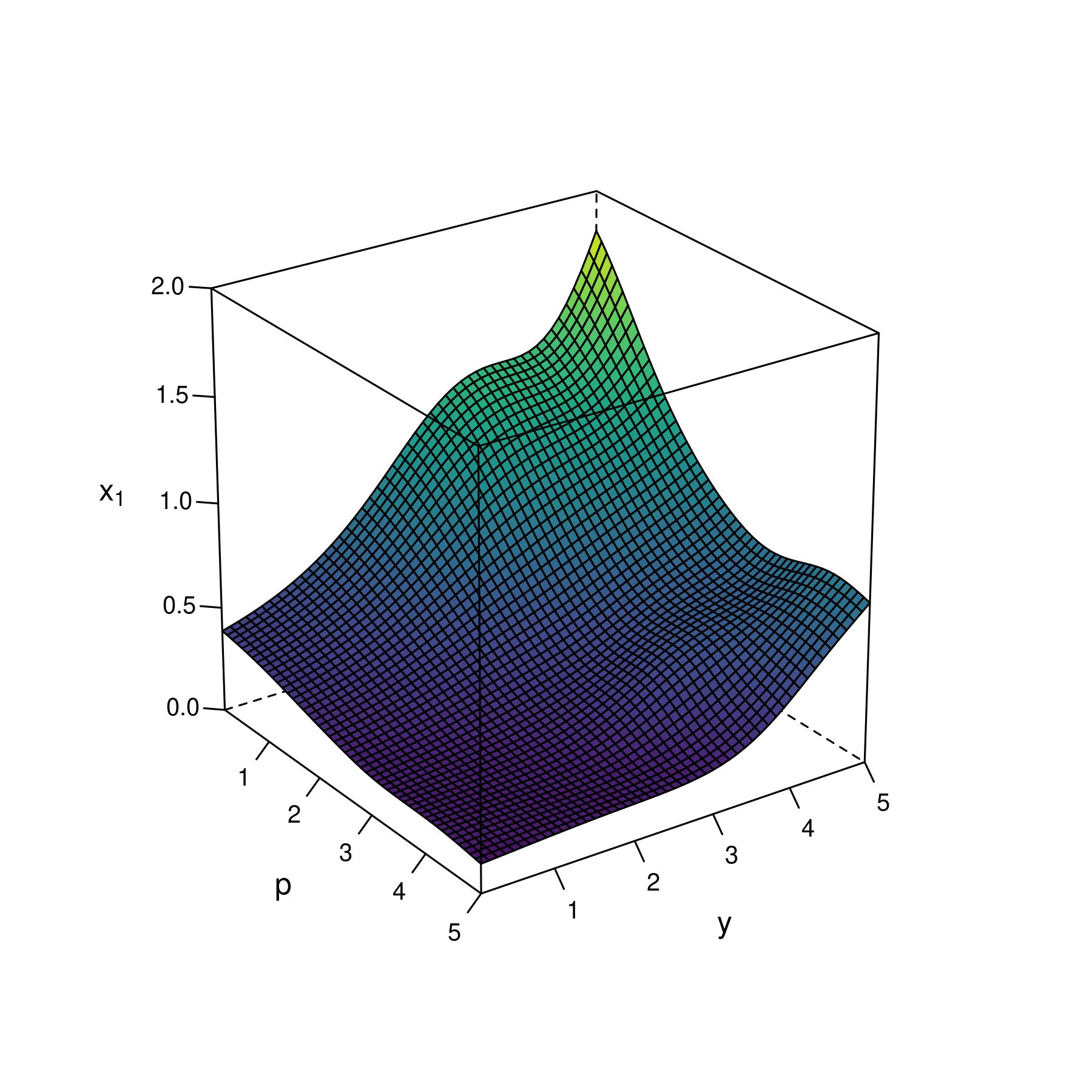}
      \end{subfigure}
      \caption{\textbf{Demand.} Nadaraya-Watson estimates of the demand function for good 1 conditional on $x_{it}>0$ in California (left) and Florida (right).}
      \label{fig:data:nw-state}
\end{figure}

\begin{figure}
      \centering
      \includegraphics[scale=0.5]{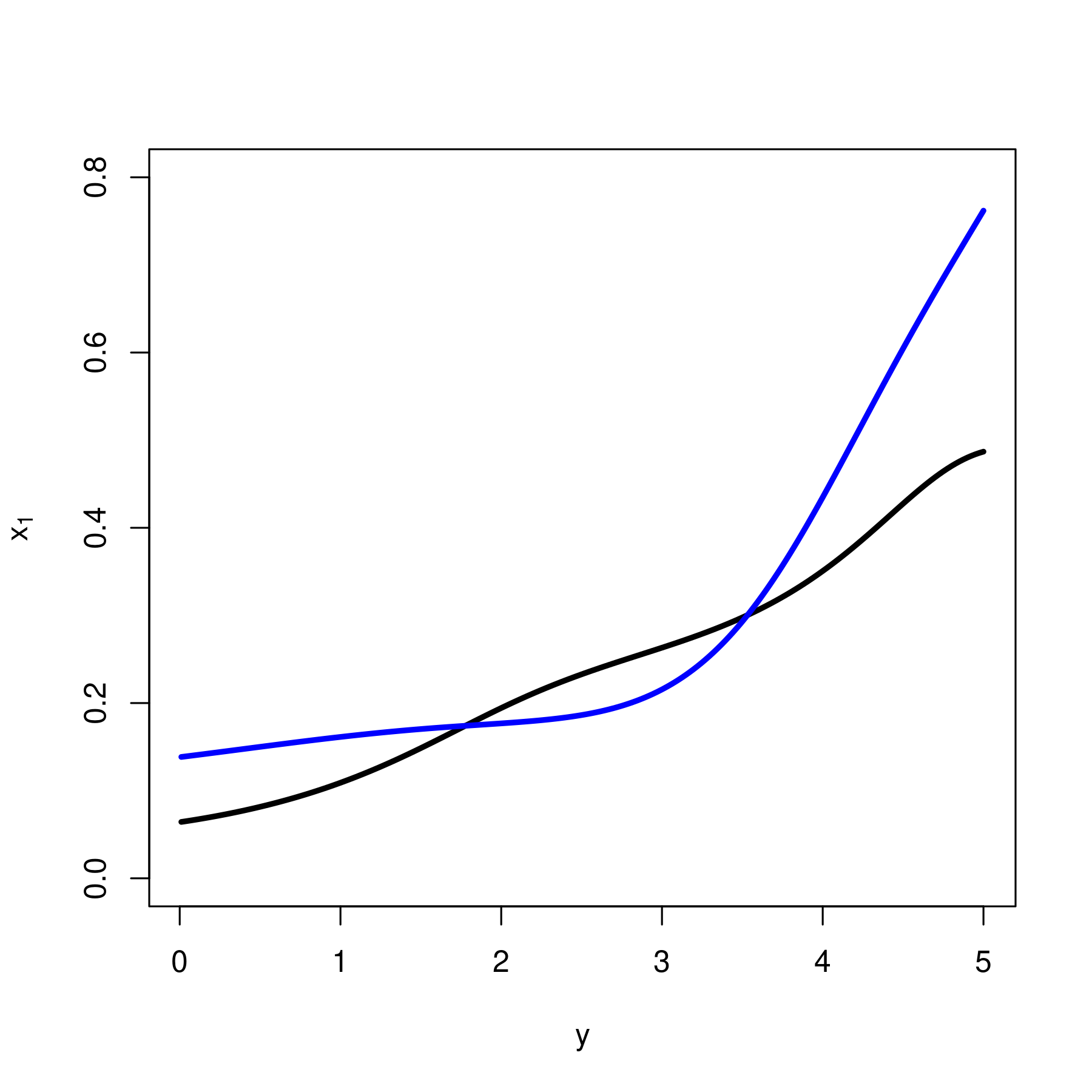}
      \caption{\textbf{Engel Curves.} Engel curves for good 1 in California (black) and Florida (blue). These curves violate monotonicity because they cross.}
      \label{fig:slice}
\end{figure}

\subsection{Estimation Results}

As an illustration, we consider the SARA model in the hyperparametric framework. We assume that the taste parameters, $A_1$ and $A_2$, are independent. \mbox{Under this assum-} ption, the taste uncertainty is characterized by the marginal distributions, $\pi_1$ and $\pi_2$. The marginal distribution $\pi_j$ of $A_j$ is independently drawn from a Dirichlet process $F_j$, $j=1,2$. The mean of $F_j$ is a log-normal distribution with parameters $\mu_j$ and $\sigma_j$, and the scale parameter of $F_j$ is $c_j$. The utility function corresponding to this log-normal mean distribution, say $\bar{\pi}_j$, has a quasi closed-form expression. \mbox{Indeed, under} this distribution, we obtain:
\[
	\log(A_j)=\mu_j+\sigma_j\ve_j, \; \; \forall j=1,2,
\]
where $\ve_j$ is distributed with respect to a standard normal distribution. Then:
\[
\begin{gathered}
	\mathbb{E}_{\bar{\pi}_j}[\exp(-A_jx_j)]=\mathbb{E}_{\bar{\pi}_j}[\exp(-\exp(\mu_j+\sigma_j\ve_j)x_j)], \\
	=\frac{1}{\sqrt{1+w(x_j\exp(\mu_j)\sigma_j^2)}}\exp\left\{-\frac{1}{2\sigma_j^2}w(x_j\exp(\mu_j)\sigma_j^2)^2-\frac{1}{\sigma^2}w(x_j\exp(\mu_j)\sigma_j^2)\right\},
\end{gathered}
\]
where $w(x)$ is the Lambert function, defined by the implicit equation:
\[
	w(x)\exp(w(x))=x,
\]
[see equation (1.3) in \citeauthor{laplace-lognormal}, \citeyear{laplace-lognormal}]. By drawing from the Dirichlet process, we will draw a stochastic utility function around the closed-form expression above. The hyperparameter $\theta$ has six components such that:
\[
	\theta=(\mu_1,\sigma_1,c_1,\mu_2,\sigma_2,c_2).
\]

\subsubsection{The Hyperparameter}
\label{sec:est:hyper}

As described in Section \ref{sec:hyper}, the first step involves estimating the hyperparameter $\theta$ using the Method of Simulated Moments (MSM). The hyperparameter $\theta$ is calibrated by using the following (sample and simulated) moments computed for all of the 63,936 observations with positive consumption:
\bi
	\item marginal moments of $(X_{it})$;
	\item cross-moments of $(\log X_{it},\log P_{it})$ and $(\log X_{it},\log Y_{it})$;
	\item cross-moments of $(X_{it}, P_{it})$, $(X_{it},Y_{it})$, $(X_{it},\log P_{it})$, and $(X_{it},\log Y_{it})$.
\ee
The moments in (ii) are the moments used in the Almost Ideal Demand System \citep[see][]{aids}; the moments in (iii) are introduced in order to capture risk effects by comparison with the moments in (ii). The optimum is found using a random search algorithm over a sufficiently big support.\footnote{Random search is more efficient than grid search in hyperparameter optimization \citep{bergstra}.}

To apply MSM, it is necessary to compute simulated consumption $x_{it}^s(\theta)$ for every observation, at each step of the optimization algorithm. This procedure is computati- onally costly. Note that, the number of simulated observations with positive consumption is stochastic, and not necessarily equal to the number of observations with positive consumption in the sample. This aspect has no impact on the \mbox{consistency of the} MSM estimator.

The estimated hyperparameter is:
\beq
	\hat{\theta}=(0.7987, 3.5516, 45.0951, 0.1201, 3.6597, 3.5544).
\eeq
Therefore, the median level of risk aversion for the mean of the Dirichlet process\footnote{This is not the absolute risk aversion of the utility function for the log-normal mean distribution which depends on the consumption level and has to be computed with a modified density.} for $A_1$ is $\exp(0.5495)\simeq 2.2226$, and the median level of risk aversion for the mean of the Dirichlet process for $A_2$ is $\exp(0.8738)\simeq 1.1276$. The fact that $\mu_1$ is smaller than $\mu_2$ is expected: Since quantities are measured in terms of volume of alcohol, this result is consistent with the faster overall intake of alcohol when consuming drinks with a higher ABV. Moreover, the distribution $\pi_1$ of $A_1$ is much more concentrated around its mean than the distribution $\pi_2$ of $A_2$, as the scaling parameter $c_1=45.0951$ for $\pi_1$ is much larger than the scaling parameter $c_2=3.5544$ for $\pi_2$.

We do not report any standard errors because they are automatically small from the large number of observations. Indeed, the standard significance test procedures (such as comparing a t-statistic to the critical value of a standard normal at the 5\% significance level) are not relevant in this big data framework. The highest degree of uncertainty concerns the filtered functional parameters $(\hat{\pi}_m)$ since $\pi_m$ is a high-dim- ensional parameter and the number of observations in each segment is much smaller.

The means of these Dirichlet processes are displayed in the left panel in Figure \ref{fig:lognorm}. The right panel displays the indifference curves associated with utility levels $-0.1000$, $-0.0800$, and $-0.0680$ for a draw from the Dirichlet process given $\hat{\theta}$.

\begin{figure}
      \centering
      \begin{subfigure}[b]{0.48\linewidth}
      \centering
      \includegraphics[scale=0.1]{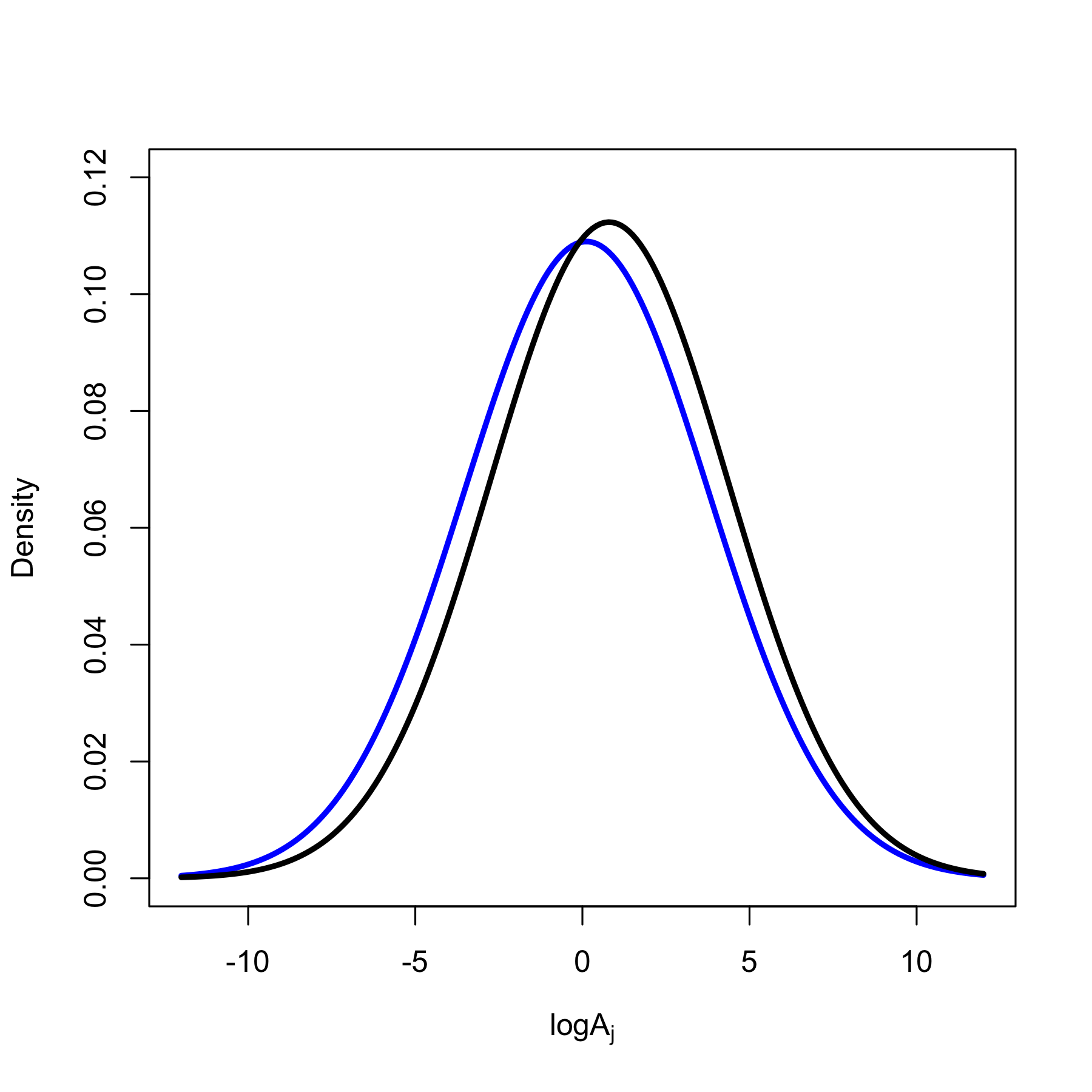}
      \end{subfigure}
      \begin{subfigure}[b]{0.48\linewidth}
      \centering
      \includegraphics[scale=0.1]{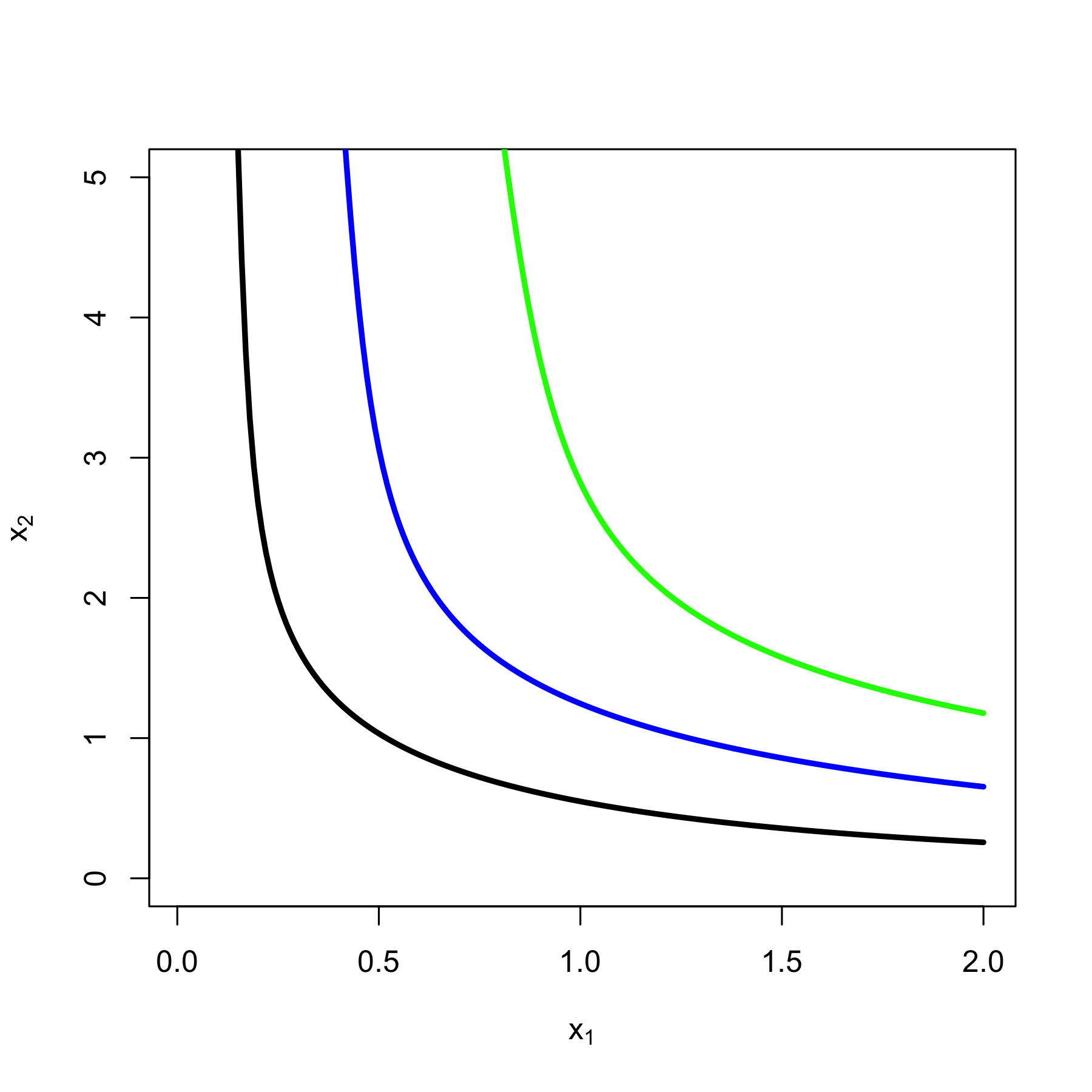}
      \end{subfigure}
      \caption{On the left, black shows the density of the mean of the Dirichlet process for $A_1$, and blue shows the density of the mean of the Dirichlet process for $A_2$. The $x$-axis is in log-scale. For scale: $\exp(-5)\simeq0.0067$ and $\exp(5)\simeq 148.4131$. The figure on the right displays indifference curves associated with these distributions.}
      \label{fig:lognorm}
\end{figure}

Figure \ref{fig:qqdir} displays the Q-Q plots for two draws $(\pi_1^s,\pi_2^s)$, $s=1,2$, from the Dirichlet process given $\hat{\theta}$. In particular, we plot the quantiles of the realization of the distribution $\pi_j$ of $\log(A_j)$ against the quantiles of the normal distribution given the estimated hyperparameters $(\hat{\mu}_j,\hat{\sigma}_j)$, for $j=1,2$. If these quantiles coincide exactly, they will lie on the 45-degree line. As expected, these Q-Q plots lie approximately around the 45-degree line. The draws $(\pi_1^s)$, $s=1,2$, for $\pi_1$ are closer the 45-degree line and ``more continuous'' than the draws $(\pi_2^s)$, $s=1,2$, for $\pi_2$ since $c_1>c_2$.

\begin{figure}
      \centering
      \begin{subfigure}[b]{0.48\linewidth}
      \centering
      \includegraphics[scale=0.1]{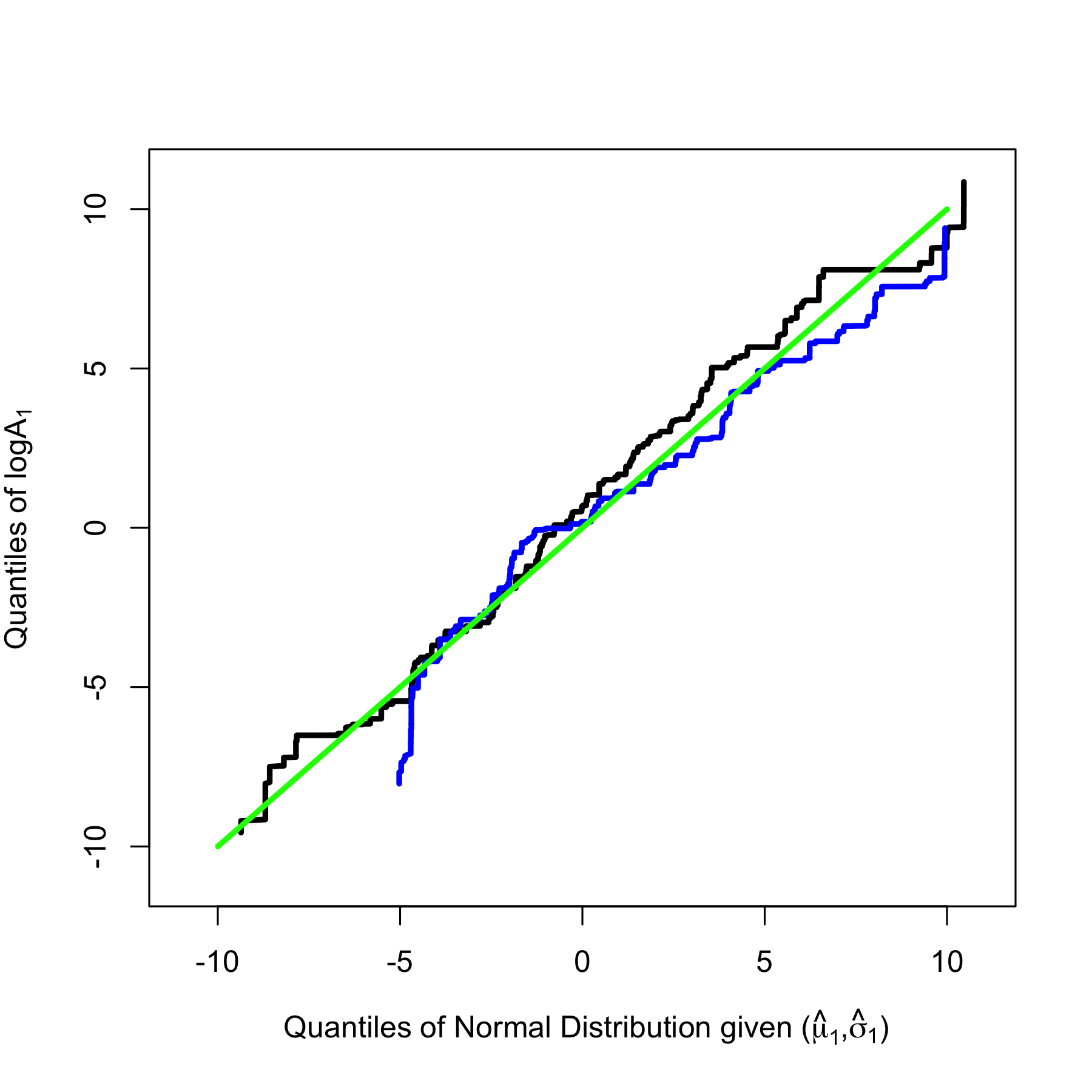}
      \end{subfigure}
      \begin{subfigure}[b]{0.48\linewidth}
      \centering
      \includegraphics[scale=0.1]{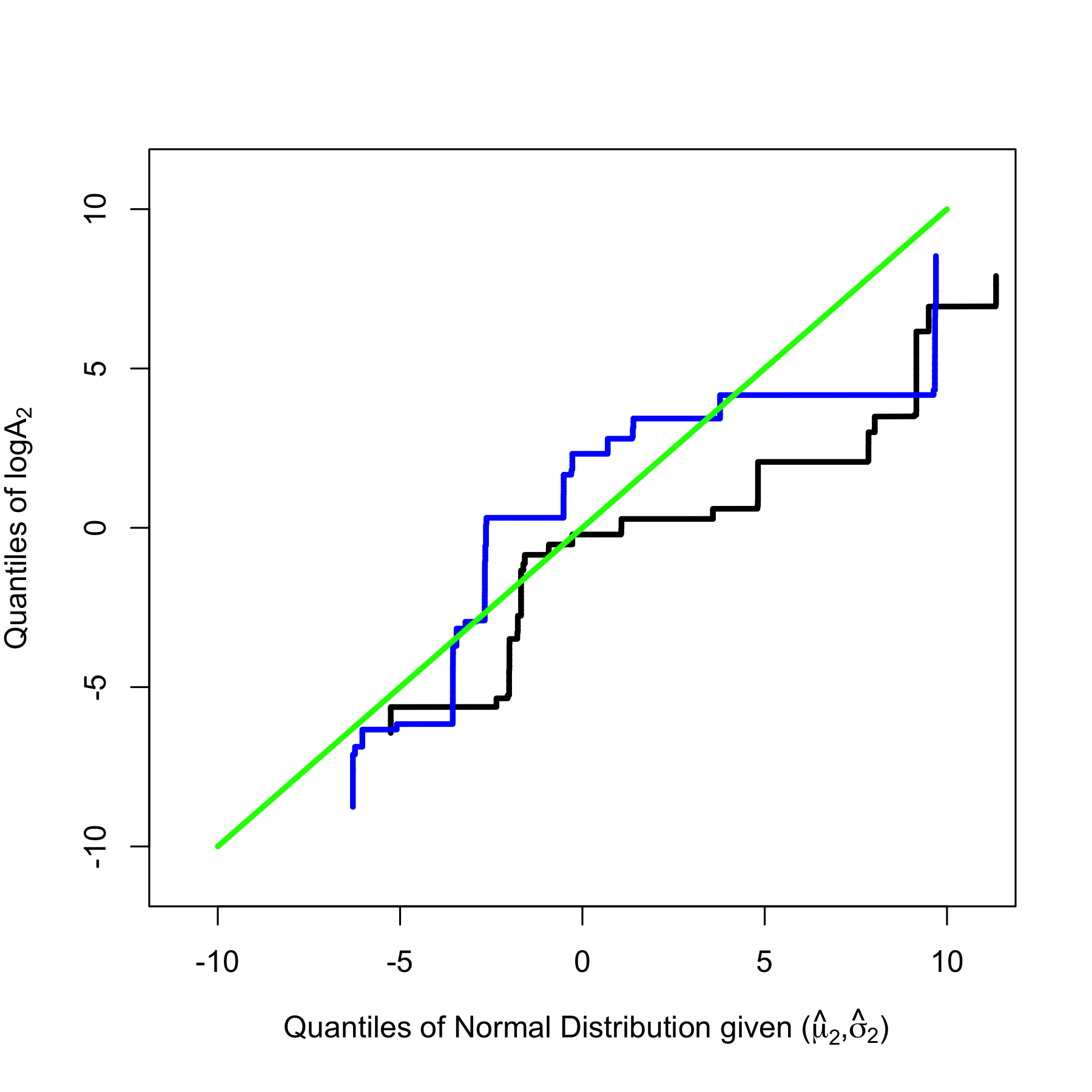}
      \end{subfigure}
      \caption{The Q-Q plots for two draws from the Dirichlet process given $\hat{\theta}$: On the left, the quantiles of $\log(A_1)$ are plotted against the quantiles of the normal distribution given $(\hat{\mu}_1,\hat{\sigma}_1)$; on the right, the quantiles of $\log(A_2)$ are plotted against the quantiles of the normal distribution given $(\hat{\mu}_2,\hat{\sigma}_2)$. In each figure, the green line is the 45-degree line.}
      \label{fig:qqdir}
\end{figure}

Figure \ref{fig:qqdir} illustrates how one might use the (estimated) hyperparameter for interpretation. Specifically, it is used to deduce the mean of the Dirichlet process, which is used as a benchmark for comparison with a drawn or filtered functional \mbox{parameter $\pi_m$.}

\subsection{Taste Distributions}

This section uses the filtering approach described in Section \ref{sec:filtration} to recover $\pi_m$. In the SARA model, the MRS restriction in \eqref{mrsrestrictions2} is:
\[
	\mathbb{E}_{\pi}[A_1\exp(-A'x_{it})]=p_{it}\, \mathbb{E}_{\pi}[A_2\exp(-A'x_{it})].
\]
When $A_1$ and $A_2$ are independent, this expression becomes:
\beq
\begin{gathered}
	\mathbb{E}_{\pi_1}[A_1\exp(-A_1x_{i1t})]\mathbb{E}_{\pi_2}[\exp(-A_2x_{i2t})] \\ =p_{it}\, \mathbb{E}_{\pi_1}[\exp(-A_1x_{i1t})]\mathbb{E}_{\pi_2}[A_2\exp(-A_2x_{i2t})].
\end{gathered}
\eeq
To filter $\pi_m$, these restrictions have to be imposed for every observation with positive consumption $x_{it}$ associated with segment $\Lambda_m$. In California, there are 688 MRS rest- rictions, and, in Florida, there are 880. Appendix \ref{app:filter} shows how to numerically solve the resulting optimization problem given the bilinearity of the MRS restrictions under independence.

The marginal taste distributions were filtered using a grid with 500 points between $\exp(-10)$ and $\exp(10)$, equally spaced on the log-scale. All draws from the estimated prior were simulated by the stick-breaking method given $J=1000$ \mbox{breaks (see Appen-} dix \ref{app:dir}). Exactly $S=100$ draws from the posterior were used to \mbox{filter each distribution.}

\begin{figure}
      \centering
      \begin{subfigure}[b]{0.48\linewidth}
      \centering
      \includegraphics[scale=0.4]{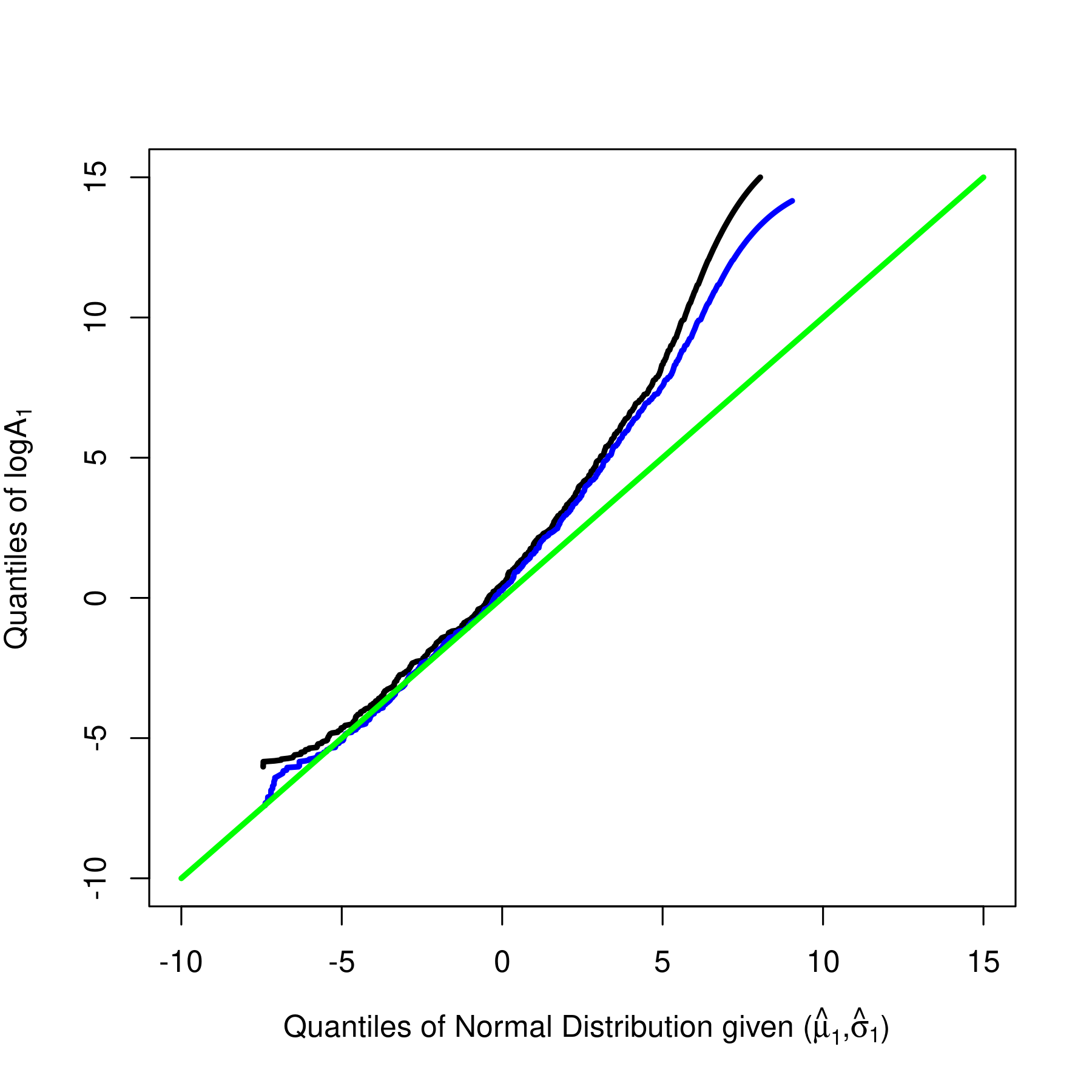}
      \end{subfigure}
      \begin{subfigure}[b]{0.48\linewidth}
      \centering
      \includegraphics[scale=0.4]{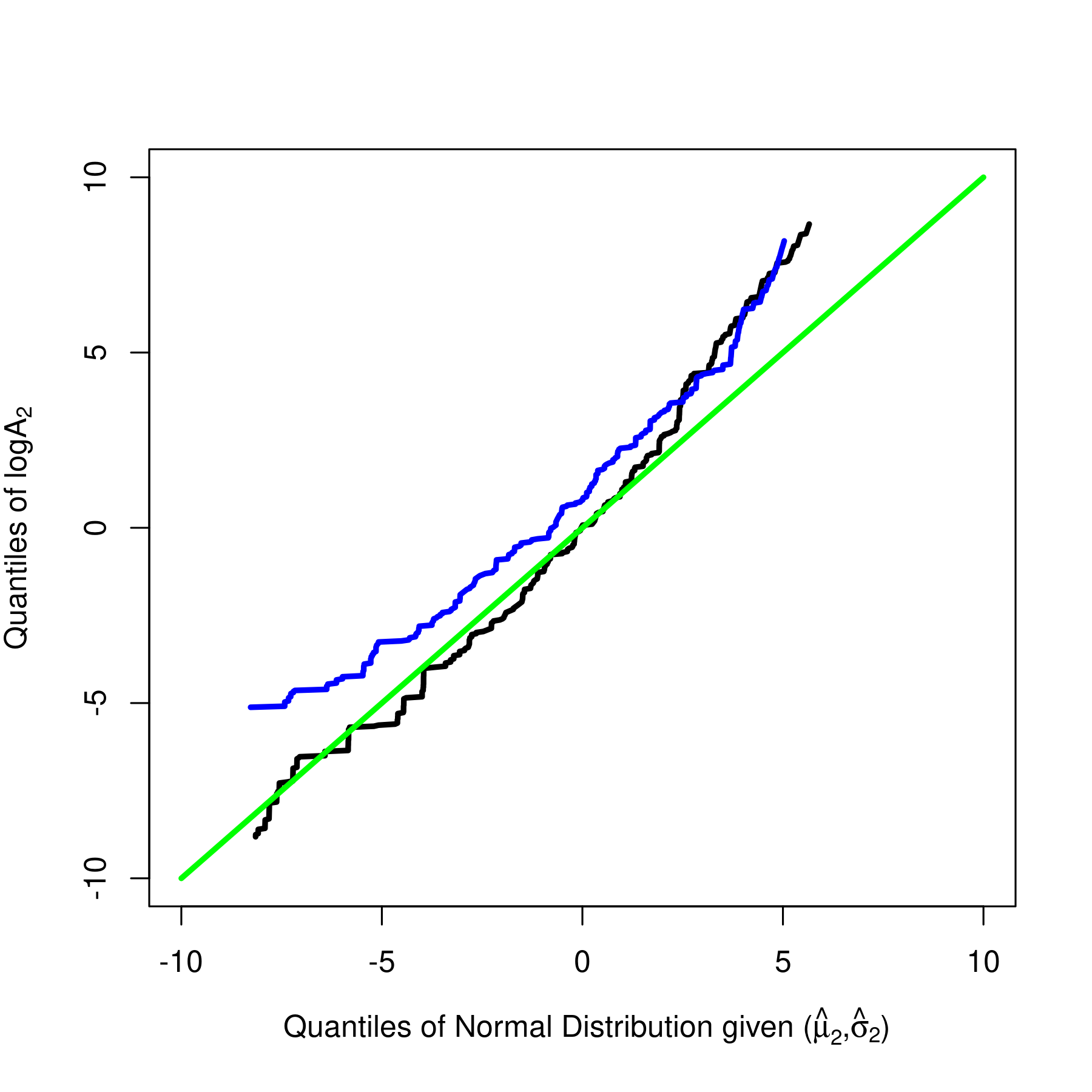}
      \end{subfigure}
      \caption{The Q-Q plots for the filtered taste distributions for California (black) and Florida (blue): On the left, the quantiles of $\log(A_1)$ are plotted against the quantiles of the normal distribution given $(\hat{\mu}_1,\hat{\sigma}_1)$; on the right, the quantiles of $\log(A_2)$ are plotted against the quantiles of the normal distribution given $(\hat{\mu}_2,\hat{\sigma}_2)$. In each figure, the green line is the 45-degree line.}
      \label{fig:qqstate}
\end{figure}

Figure \ref{fig:qqstate} displays the Q-Q plots for the filtered taste parameters $\hat{\pi}_m$ for California and Florida. As in Figure \ref{fig:qqdir}, the (estimated) hyperparameter is used to construct a benchmark for comparison. As expected, the filtered taste parameters \mbox{are rather diff-} erent from this benchmark. Here, the role of the estimated prior distribution diminishes with the number of observations. In both states, the slope on the left is steeper than the 45-degree line, suggesting that the posterior mean distribution for $A_1$ is more ``dispersed'' than its estimated prior mean distribution. The convexity of \mbox{these curves} also suggests fatter tails.

For the structural interpretation of these plots, assume that (i) the \mbox{preferences are} SARA, (ii) the taste parameters are independent, (iii) the marginal \mbox{distribution of $A_1$} is the same in both states, and (iv) the marginal distribution of $A_2$ ``shifts'' such that $\pi_2^*(A_2) = \pi_2(cA_2)$, where $\pi_2$ and $\pi_2^*$ denote the marginal distributions of $A_2$ in these states. Under these assumptions:
\beq
	U(x_1,x_2;\pi^*)=U(x_1,cx_2;\pi),
\eeq
and solving the utility maximization problem in \eqref{opt} yields:
\beq
\label{trans1}
	X_1(z;\pi^*)=X_1(cz;\pi) \; \; \text{and} \; \; X_2(z;\pi^*)=\left(\frac{1}{c}\right)X_2(cz;\pi).
\eeq
Similarly, if there is a ``shift'' in the marginal distribution of $A_1$ and the \mbox{marginal dist-} ribution of $A_2$ is the same in both states, we obtain:
\beq
\label{trans2}
	X_1(z;\pi^*)=\left(\frac{1}{c}\right)X_1\left(y,\frac{p}{c};\pi\right) \; \; \text{and} \; \; X_2(z;\pi^*)=X_2\left(y,\frac{p}{c};\pi\right).
\eeq
The relationships given in \eqref{trans1} and \eqref{trans2} suggest that there exists a complicated non-linear relationship between such demand functions. Therefore, we cannot immediately deduce from Figure \ref{fig:qqstate} which state has a higher demand for beer. For a more formal analysis, the utility functions associated with each posterior mean taste distribution must be used to derive a posterior MRS, or a posterior demand function.

\begin{figure}
	\centering
	\begin{subfigure}[b]{0.48\linewidth}
      \centering
      \includegraphics[scale=0.4]{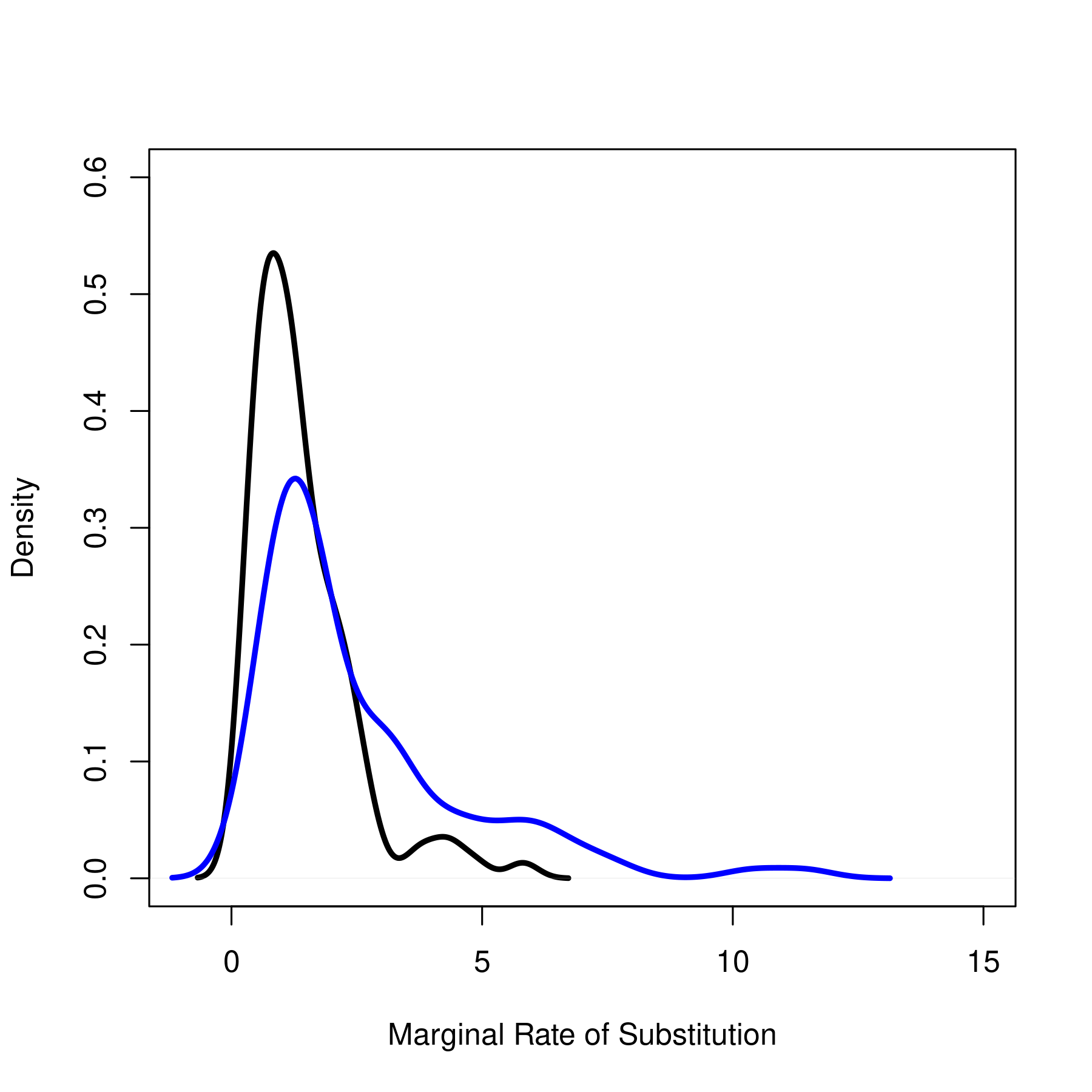}
      \end{subfigure}
      \begin{subfigure}[b]{0.48\linewidth}
      \centering
      \includegraphics[scale=0.4]{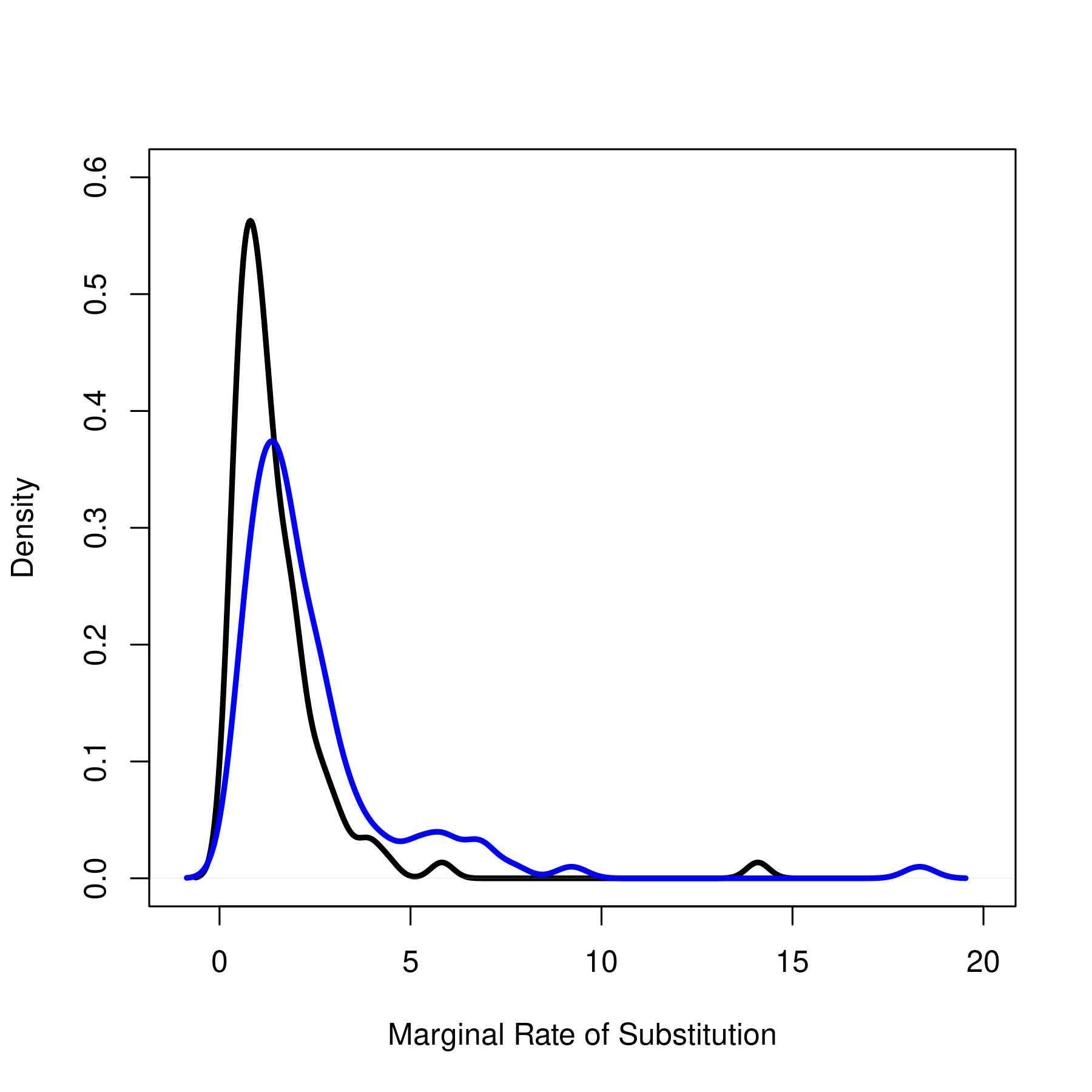}
      \end{subfigure}
	\caption{\textbf{Posterior Marginal Rate of Substitution.} The posterior distributions for $\text{MRS}(1,1;\pi)$ (black) and $\text{MRS}(1,2;\pi)$ (blue) for California (left) and Florida (blue).}
	\label{fig:postmrs}
\end{figure}

This analysis has to be completed with a discussion of accuracy. In \mbox{this non-param-} etric framework, the posterior distributions of $\pi_1$ and $\pi_2$ are infinite-indimesional and cannot be represented. However, posterior distributions of any scalar transformation of $\pi_1$ and $\pi_2$ can be derived using simulation. In this respect, it is important to know which scalar objects are of interest. Typically, we are interested in \mbox{the MRS evalu-} ated at a specific bundle, say $x_0$, or counterfactual demand, corresponding to a particular design, say $z_0=(y_0,p_0)$. Figure \ref{fig:postmrs} displays the posterior distributions of the MRS, evaluated at two bundles, $(1,1)$ and $(1,2)$, for California and Florida. In both states, these distributions are approximately log-normal (with is consistent with \citeauthor{dobronyi}, \citeyear{dobronyi}), and the posterior for $\text{MRS}(1,2;\pi)$ has a much longer tail than the posterior for $\text{MRS}(1,1;\pi)$, implying that, the quantity of good 2 that must be given to the consumer in order to compensate her for one unit of good 2 (and keep her just as happy) is larger, on average, when she has more of good 2. This tail is longer in California.

The filtered taste distributions in Figure \ref{fig:qqstate} are obtained by applying the algorithm in Appendix \ref{app:filter} and forcing the density $\pi_m$ to be non-negative at each iteration. The existence of negative ``probabilities'' can be a result of numerical uncertainty, the choice of grid, or misspecification. Specifically, it can arise if the consumer in segment $\Lambda_m$ does not maximize her SARA/SSF utility function (or any utility function) subject to the linear \mbox{budget constraint.} By analyzing these negative probabilities, we can construct a measure of the deviation from rationality. To illustrate, let $\pi_k^+=\max\{0,\pi_k\}$ and $\pi_k^-=\max\{0,-\pi_k\}$, respectively, denote the positive and negative components of the elementary probability $\pi_k$ associated with the $k^{th}$ grid point. The following ratio:
\beq
	\text{BR}=\frac{\sum_k\pi_k^-}{\sum_k(\pi_k^-+\pi_k^+)},
\eeq
is a measure of bounded rationality. This ratio ranges between 0 and 1. The closer this ratio is to 1, the less compatible the data are with the hundreds of MRS restrictions imposed by the chosen model. This ratio is related to a subset of the literature conce- rned with such measures. Existing measures include Afriat's Efficiency Index (\citeauthor{afriat}, \citeyear{afriat}; \citeauthor{varian-1990}, \citeyear{varian-1990}), and the Money Pump Index (\citeauthor{echenique}, \citeyear{echenique}). In general, these indices are used to measure a single consumer's deviation from rationality by evaluating how ``close'' her choices are to satisfying the Generalized Axiom of Revealed Preference (GARP), a necessary and sufficient condition for a finite number of choices to be consistent with the maximization of any locally non-satiated utility function. In our framework, the BR ratio can be used to measure the violation of the homogeneous segment assumption. Table \ref{tab:br} displays the BR ratios for California and Florida. The BR ratio for $\pi_1$ is smaller than the ratio for $\pi_2$ in each state; these ratios are roughly the same across states.

\begin{table}
      \centering
      \caption{BR ratios for California and Florida.}
      \begin{tabular}{ccc}
            \midrule \midrule
            State & $\pi_1$ & $\pi_2$ \\ \midrule
            California  & 0.15  & 0.20 \\
            Florida  & 0.17   & 0.21 \\ \midrule \midrule
      \end{tabular}
      \label{tab:br}
\end{table}

\section{Concluding Remarks}
\label{sec:conclusion}

This paper is one among pioneering papers attempting to tackle the challenges of performing structural demand analysis with scanner data (see also \citeauthor{burda-2008}, \citeyear{burda-2008}, \citeyear{burda-2012}, \citeauthor{craw-pol}, \citeyear{craw-pol}, \citeauthor{guha-ng}, \citeyear{guha-ng}, \citeauthor{cher-newey}, \citeyear{cher-newey}, and \hyperlinkcite{dobronyi}{Dobronyi and Gouri\'{e}roux}, \citeyear{dobronyi}). The recent availability of scanner \mbox{data permits} new developments in the analysis of consumer behaviour. Here, we have shown that, by introducing homogeneous segments of consumers, we can consider a model of consumption with non-parametric preferences and infinite-dimensional heterogeneity, not only from a theoretical point-of-view, but also from a practical one. The distribution of individual heterogeneity in the population can be estimated, and the underlying non-parametric preferences can be filtered by using appropriate algorithms.

We developed an analysis for \emph{two goods} for exposition. This feature of \mbox{our analysis} leaves the question: Can the methods developed in this paper be extended to a framework with, say, 100 goods? A completely \emph{unconstrained} non-parametric analysis wou- ld encounter the curse of dimensionality. Specifically, we would need to estimate the distribution of the utility function (a non-parametric function with, in this scenario, 100 arguments). This task would be infeasible, even in our big data framework. But, the SARA model with independent taste parameters is a \emph{constrained} non-parametric model. The structure of the SARA model reduces the non-parametric dimension of the problem, making it feasible. Indeed, when taste parameters are independent, we only need to estimate 100 one-dimensional distributions. A similar remark \mbox{applies to} the algorithm used to filter the taste distributions: The two steps based on the bilinear form of the MRS restrictions in a two good setting can be replaced with 100 successive steps based on the multilinear form of MRS restrictions in a 100 good setting, without increasing the numerical complexity.

Many of the results in this paper require taste parameters to be independent, but this requirement can be relaxed. For example, we can always consider a SARA model with the following form:
\beq
	U(x;A)=-\exp(-(A_c+A_{1})x_1-(A_c+A_{2})x_2),
\eeq
where $A_c$ is a common component, and $A_j$ is a good-specific taste parameter, for each $j=1,2$. In such a framework, independence between $A_c$, $A_1$, and $A_2$ does not imply independence between the parameters:
\beq
	A_1^*=A_c+A_1 \; \; \text{and} \; \; A_2^*=A_c+A_2,
\eeq
but it does reduce the dimensionality of the problem: Instead of introducing a joint distribution $\pi$ on a space of dimension 2, the model only depends on three distributions on a space of dimension 1. This specification avoids the curse of dimensionality. (See Appendix \ref{app:dependence} for a discussion of identification in this case with taste dependence.)

In this paper, consumers are assumed to be rational, and divided into homogeneous segments. Since, in each segment, the demand function can be non-parametrically estimated over a subset of its domain, the analysis can be continued to develop a test of the homogeneity of each segment, or, more generally, a non-parametric method for constructing homogeneous segments.

The approach developed in this paper uses standard ideas from consumer theory to make inference. This approach is appropriate when both quantities and prices have continuous supports. This feature makes this approach valid for some \mbox{levels of good,} consumer, and date aggregation. Hence, this approach can be used for, say, evaluating the effect of alcohol tax on alcohol consumption, but unreasonable for analyzing how a particular consumer will choose between hundreds of different \mbox{brands of whiskey. To} our knowledge, the tools needed to solve such a problem have not been developed yet.

\nocite{rock,manning-et-al,brown,karni-1983,grant,blundell-wp,cher-newey,ng,guha-ng,bilinear,bilinear-vanrosen,ledoit,heyde,laplace-lognormal,blundell}

\bibliography{references}

\appendix

\section{The Dirichlet Process}
\label{app:dir}

In this appendix, we briefly review the definition and properties of the Dirichlet proc- ess, and then describe how to simulate from the Dirichlet process (see \citeauthor{ferguson}, \citeyear{ferguson}, \citeauthor{rolin}, \citeyear{rolin}, \citeauthor{sethuraman}, \citeyear{sethuraman}, \citeauthor{lin-dirichlet}, \citeyear{lin-dirichlet}, and \citeauthor{li-2019}, \citeyear{li-2019}).

\subsection{Definition and Properties of the Dirichlet Process}

For exposition, let us describe the Dirichlet distribution, then the Dirichlet process:
\bi
\item \textbf{Dirichlet Distribution:}

Let $D_J(\alpha)$ denote the $J$-dimensional Dirichlet distribution with density:
\beq
		f_{\alpha}(q)=\frac{\Gamma\big(\sum_{j=1}^J\alpha_j\big)\prod_{j=1}^J q_j^{\alpha_j}}{\prod_{j=1}^J\Gamma(\alpha_j)},
\eeq
for every $q\in [0,1]^J$ such that $\sum_{j=1}^Jq_j=1$, where $\alpha\in \mathbb{R}_{++}^J$ denotes a $J$-dimensional vector of positive parameters. If a random vector $(Q_1,\dots,Q_J)$ has a Dirichlet distribution $D_J(\alpha)$, then:
\beq
\label{mean:var:dir}
	\mathbb{E}[Q_j]=\bar{\alpha}_j \quad \text{and} \quad V(Q_j)=\frac{\bar{\alpha}_j(1-\bar{\alpha}_j)}{1+\sum_{j=1}^J\alpha_j},	
\eeq
where $\bar{\alpha}_j=\alpha_j/\sum_{j=1}^J\alpha_j$.

\item \textbf{Dirichlet Process:}

In the SARA and SSF models, there are two taste parameters, $A_1$ and $A_2$. The probability distribution $\pi$ of $(A_1,A_2)$ is defined on $\mathbb{R}_+^2$. Therefore, in this section, we describe the Dirichlet process in this special case. Let $\mathscr{B}_0$ denote the Borel sets associated with $\mathbb{R}_+^2$, $\mathscr{F}$ denote the set of probability measures defined on $(\mathbb{R}_+^2,\mathscr{B}_0)$, and $\mathscr{B}_1$ denote the $\sigma$-algebra consisting of the Borel sets associated with the topology of weak convergence on $\mathscr{F}$. Let $\mu$ denote a (deterministic) probability measure defined on $(\mathbb{R}_+^2,\mathscr{B}_0)$, and let $c$ denote a strictly positive scalar. A process $G$ with values in $\mathscr{F}$ is a Dirichlet process with functional parameter $\mu$ and scaling parameter $c$ if, for every finite and measurable partition $\{C_1,\dots,C_J\}$ of $\mathbb{R}_+^2$, the random vector $\big[G(C_1),\dots,G(C_J)\big]'$ has a $J$-dimensional Dirichlet distribution given $\alpha=\big[c\mu(C_1),\dots,c\mu(C_J)\big]'$. There exists a Dirichlet process for every probability measure $\mu$ defined on $(\mathbb{R}_+^2,\mathscr{B}_0)$ and scaling parameter $c$. The distribution of the Dirichlet process is a probability measure defined on $(\mathscr{F},\mathscr{B}_1)$, whose realizations are almost surely \emph{discrete} probability measures defined on $(\mathbb{R}_{+}^2,\mathscr{B}_0)$, assigning probability one to the set of all discrete probability measures defined on $(\mathbb{R}_{+}^2,\mathscr{B}_0)$. The support of the distribution of the Dirichlet process is a set of distributions with support contained in the support of $c\mu$ \citep{ferguson}. The functional parameter $\mu$ (sometimes called the base distribution) can be thought of as the mean of the Dirichlet process---indeed, for any measurable set $C$ in $\mathbb{R}_+^2$, the mean of the Dirichlet distribution in \eqref{mean:var:dir} yields $\mathbb{E}[G(C)]=\mu(C)$. Therefore, in our framework, $\mu$ represents the expected uncertainty on taste parameters. Intuitively, the scaling parameter $c$ describes the ``strength'' of discretization: When $c$ is large, the realizations of the Dirichlet process are concentrated around $\mu$; loosely, as $c$ tends to infinity, the realizations become ``more continuous.''
\ee

\subsection{Simulating a Dirichlet Process}

A Dirichlet process is easy to simulate given $\mu$ and $c$. There are a number of ways to simulate a realization---this section outlines the \emph{stick-breaking method}, appropriate for drawing under the independence of $A_1$ and $A_2$, based on the construction of the Dirichlet process in \citet{sethuraman}.

Let $B(\alpha_1,\alpha_2)$ denote the beta distribution with continuous density:
\beq
	f(q)=\frac{\Gamma(\alpha_1+\alpha_2)q_1^{\alpha_1}q_2^{\alpha_2}}{\Gamma(\alpha_1)\Gamma(\alpha_2)},
\eeq
on the simplex $\{(q_1,q_2)\geq 0:q_1+q_2=1\}$, in which $\Gamma$ denotes the gamma function, and $\alpha_1,\alpha_2>0$ are positive scalar parameters. Under the independence of $A_1$ and $A_2$, it is sufficient to be able to make a draw from a Dirichlet process whose realizations are distributions on $[0,\infty)$. Let $\mu^*$ and $c^*$ denote the mean and scaling parameter of this Dirichlet process. We can simulate from this process by using the \mbox{following steps:}
\begin{steps}
	\item For large $L\geq 1$, independently simulate $W_1,\dots,W_L\sim B(1,c^*)$.
	\item Compute $W_1^*=W_1$, and:
	\beq
		W_{\ell}^*=W_{\ell}\prod_{j=1}^{\ell-1}(1-W_j), \; \; \forall \ell=2,\dots,L.
	\eeq
	\item Independently simulate $V_1,\dots,V_J\sim \mu^*$.
	\item Define:
	\beq
		G(C)=\sum_{\ell=1}^LW_{\ell}^*\delta_{V_{\ell}}(C), \; \; \forall C\subseteq\mathbb{R}_{+}^2,
	\eeq
	where $\delta_v$ denotes a point mass at $v\in\mathbb{R}_{+}^2$.
\end{steps}
Theoretically, if we could simulate an infinite number of draws, then this procedure would produce a realization of the Dirichlet process associated with functional parameter $\mu^*$ and scaling parameter $c^*$. Since $L$ is finite, the resulting probability measure $G$ is a \emph{truncated} approximation of a realization of such a process. Figure \ref{fig:dirsim} displays a simulated realization from the Dirichlet process given log-normal $\mu^*$ with mean 0 and standard deviation 1 (where these parameters are interpreted on the log-scale) and scaling parameter $c=100$. This realization was simulated using the stick-breaking method given $L=100$.

\begin{figure}
	\centering
	\includegraphics[scale=0.5]{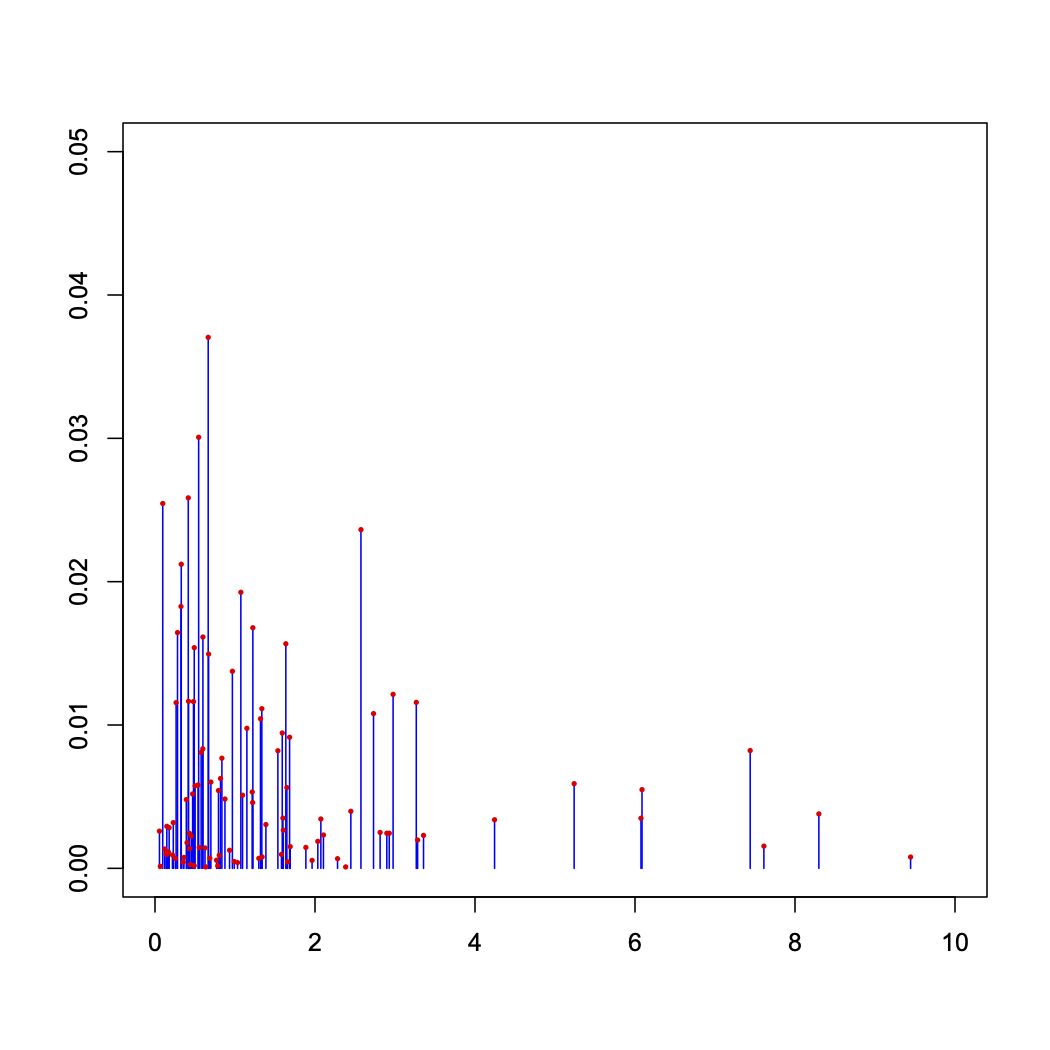}
	\caption{A simulated realization from the Dirichlet process given log-normal $\mu^*$ with mean 0 and standard deviation 1 (where these parameters are interpreted on the log-scale) and scaling parameter $c=100$. This realization was simulated using the stick-breaking method given $L=100$.}
	\label{fig:dirsim}
\end{figure}

\section{Integrability}
\label{app:int}

In each model, the Hessian of the utility function $U(x;\pi)$ is negative definite, implying that the demand function $X(z;\pi)$ is invertible in the second regime (see Proposition 2 in \citeauthor{dobronyi}, \citeyear{dobronyi}). Invertibility can also be analyzed \mbox{using the indi-} fference curves of $U(x;\pi)$, or the Slutsky coefficient $\Delta_x(z)$. As shown \mbox{below, this anal-} ysis can, for instance, yield new properties of the moment generating function (MGF).

\subsection{SARA Model}

Suppose $A_1$ and $A_2$ are independent. Let $\Psi_j$ denote the Laplace transform of $A_j$, for $j=1,2$. With this notation, we can write:
\beq
	\log U(x;\pi)=\log\Psi_1(x_1)+\log\Psi_2(x_2).
\eeq
The indifference curve $g_\pi(\cdot,u)$ associated with $U(x;\pi)$ is obtained by solving:
\beq
	\log\Psi_1(x_1)+\log\Psi_2(x_2)=\log u,
\eeq
for $x_2$. This procedure leads to:
\beq
	g_{\pi}(x_1,u)=\left(\log\Psi_2\right)^{-1}\left(\log u - \log \Psi_1(x_1)\right).
\eeq
In general, demand is invertible if the indifference curves are strictly convex such that:
\beq
	\frac{\partial^2 g_{\pi}(x_1,u)}{\partial x_1^2}>0,
\eeq
for every $x_1>0$, and every attainable $u<0$. When preferences are SARA, we obtain:
\begin{align}
\begin{split}
	\frac{d^2}{dv^2} &\left[\left(\log\Psi_2\right)^{-1}\right]\left(\log u - \log \Psi_1(x_1)\right)\left(\frac{d\log\Psi_1(x_1)}{dx_1}\right)^2 \\
	- \frac{d}{dv} &\left[\left(\log\Psi_2\right)^{-1}\right]\left(\log u - \log \Psi_1(x_1)\right)\frac{d^2\log\Psi_1(x_1)}{dx_1^2}>0,
\end{split}
\end{align}
for every $x_1>0$, and every attainable $u<0$. These inequalities, involving two MGFs, are always satisfied. Consequently, we have derived a new property of the MGF, as described in the introduction of this appendix.

\subsection{SSF Model}

If preferences are SSF, it is rather challenging to derive a closed-form expression for the indifference curve. We can, instead, write the integrability condition using the condition on the bordered Hessian in Lemma 1 in \citeauthor{dobronyi} (\citeyear{dobronyi}), but, for both brevity and exposition, let us simply restrict our attention to the general specification of utility in the example in Section \ref{example:2} and check that integrability holds for any Laplace transform. Because the strict convexity of the indifference curves is equivalent to the strict negativity of the Slutsky coefficient $\Delta_x(z)$, it is sufficient to check whether $\Delta_x(z)$ is strictly negative. We obtain:
\begin{align}
\begin{split}
	\Delta_x(z)&=\frac{\partial X_1(z;\pi)}{\partial p} + X_1(z;\pi) \frac{\partial X_1(z;\pi)}{\partial y} = -\frac{1}{\lambda p^3}\frac{d}{dv}\left(\frac{d\log\Psi}{dv}\right)^{-1}\left(-\frac{1}{p}\right).
\end{split}
\end{align}
It is sufficient to show that:
\beq
\label{positiveforslutsky}
	\frac{d}{dv}\left(\frac{d\log\Psi}{dv}\right)^{-1}\left(-\frac{1}{p}\right),
\eeq
is strictly positive. To do this, consider the following derivatives:
\beq
\begin{gathered}
	\frac{d\log\Psi(v)}{dv}=-\frac{\mathbb{E}[A_1\exp(-A_1v)]}{\mathbb{E}[\exp(-A_1v)]}, \\ \text{and} \; \; \frac{d^2\log\Psi(v)}{dv^2}=\frac{\mathbb{E}[A_1^2\exp(-A_1v)]}{\mathbb{E}[\exp(-A_1v)]}-\left(\frac{\mathbb{E}[A_1\exp(-A_1v)]}{\mathbb{E}[\exp(-A_1v)]}\right)^2=V_{\tilde{\pi}}(A_1)>0,
\end{gathered}
\eeq
where the variance is with respect to the transformed density:
\beq
	\frac{\exp(-A_1v)}{\mathbb{E}[\exp(-A_1v)]}\pi(v).
\eeq
Therefore, $\frac{d\log\Psi}{dv}$ is increasing, and so is its inverse $\left(\frac{d\log\Psi}{dv}\right)^{-1}$. Thus, $\Delta_x(z)$ is negative.

\section{Numerical Optimization}
\label{app:filter}

The optimization problem for filtering can be written as:
\beq
\label{numericaloptimization}
\begin{gathered}
	\min_{\pi_1,\pi_2} \;  (\pi_1-\hat{\pi}_{1})'(\pi_1-\hat{\pi}_{1})+(\pi_2-\hat{\pi}_{2})'(\pi_2-\hat{\pi}_{2}) \\
	\text{s.t.} \; \; \text{MRS restrictions \eqref{mrsrestrictions2}, $e'\pi_1=1$, and $e'\pi_2=1$,}
\end{gathered}
\eeq
where $\pi_1$ and $\pi_2$ are written on a sufficiently large discrete grid for $A_1$ and $A_2$, and $e=(1,\dots,1)'$. This optimization problem can be difficult due to the dimension of the problem. The objective function is minimized with respect to the total number $2J$ of grid points in $\pi_1$ and $\pi_2$, which is intentionally chosen to be very large (at least several hundred), and the number of constraints is $N_m$, where $N_m$ denotes the number of observations with positive consumption $x_{it}$ in segment $\Lambda_m$, which is typically around 1,000. Note, $2J$ has to be larger than $N_m$ for identification. Therefore, it is important to find a tractable algorithm for such a problem.

We can use the fact that the MRS restrictions are bilinear in $\pi_1$ and $\pi_2$. Indeed, these constraints can be written as:
\beq
\label{bilinear:ab}
	A_1(\pi_2)\pi_1=b_1(\pi_2) \; \; \text{or} \; \; A_2(\pi_1)\pi_2=b_2(\pi_1).
\eeq
To illustrate, consider the SARA model, and let $a_{1j}$ and $a_{2j}$, $j=1,\dots,J$, denote the locations of the points in the grids for $A_1$ and $A_2$, respectively. Moreover, let $\pi_1=(\pi_{1j})$ and $\pi_2=(\pi_{2j})$ denote the elementary probabilities on $(a_{1j})$ and $(a_{2j})$, respectively. Under the independence of $A_1$ and $A_2$, the MRS restrictions \mbox{have the form:}
\[
\begin{gathered}
	\sum_{j=1}^J[\pi_{1j}a_{1j}\exp(-a_{1j}x_{i1t})]\sum_{j=1}^J[\pi_{2j}\exp(-a_{2j}x_{i2t})] \\ -p_{it}\sum_{j=1}^J[\pi_{1j}\exp(-a_{1j}x_{i1t})]\sum_{j=1}^J[\pi_{2j}a_{2j}\exp(-a_{2j}x_{i2t})]=0,
\end{gathered}
\]
for every $i\in\Lambda_m$ and every $t$ with $x_{it}>0$. The closed-form expressions for $A_1(\pi_2)$, $b_1(\pi_2)$, $A_2(\pi_1)$, and $b_2(\pi_1)$ can be easily deduced. The unit mass restrictions can also be explicitly written as:
\[
	\sum_{j=1}^J\pi_{1j}=1 \; \; \text{and} \; \; \sum_{j=1}^J\pi_{2j}=1.
\]

The equivalent expressions in \eqref{bilinear:ab} can be used to solve the optimization problem in \eqref{numericaloptimization} by using a succession of optimization problems with smaller dimensions (see \citeauthor{bilinear}, \citeyear{bilinear}, and \citeauthor{bilinear-vanrosen}, \citeyear{bilinear-vanrosen}). Precisely, let $\pi_1(k)$ and $\pi_2(k)$ denote the solutions for $\pi_1$ and $\pi_2$ at the $k^{th}$ step of the optimization algorithm. Given $\pi_2(k)$, $\pi_1(k+1)$ is defined as the solution to:
\beq
\label{eq:smallopt1}
	\min_{\pi_1} \; (\pi_1-\hat{\pi}_1)'(\pi_1-\hat{\pi}_1) \; \; \text{s.t. $A_1[\pi_2(k)]\pi_1=b_1[\pi_2(k)]$ and $e'\pi_1=1$},
\eeq
and, similarly, $\pi_2(k+1)$ is defined as the solution to:
\beq
	\min_{\pi_2} \; (\pi_2-\hat{\pi}_2)'(\pi_2-\hat{\pi}_2) \; \; \text{s.t. $A_2[\pi_1(k+1)]\pi_2=b_2[\pi_1(k+1)]$ and $e'\pi_2=1$}.
\eeq
If this algorithm numerically converges, then the limit is the solution to the original optimization problem in \eqref{numericaloptimization}. Moreover, $\pi_1(k)$ and $\pi_2(k)$ have closed-form solutions:

\begin{proposition}
	The solution to \eqref{eq:smallopt1} is equal to:
	\beq
	\label{solution}
		\pi_1(k+1)=\hat{\pi}_1+A_1^*[\pi_2(k)]'\big\{A_1^*[\pi_2(k)]A_1^*[\pi_2(k)]'\big\}^{-1}\big\{b_1^*[\pi_2(k)]-A_1^*[\pi_2(k)]\hat{\pi}_1\big\},
	\eeq
	where $A_j^*$ and $b_j^*$ encompass the MRS constraint and the unit mass contraint together.
\end{proposition}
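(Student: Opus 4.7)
The plan is to recognize this as a standard equality-constrained convex quadratic program and solve it via the method of Lagrange multipliers. Write $A \equiv A_1^*[\pi_2(k)]$ and $b \equiv b_1^*[\pi_2(k)]$ for brevity, so that the problem reads $\min_{\pi_1} (\pi_1 - \hat{\pi}_1)'(\pi_1 - \hat{\pi}_1)$ subject to $A\pi_1 = b$ (combining the MRS constraints and the unit-mass constraint into a single linear system). The objective is strictly convex in $\pi_1$ and the feasible set is an affine subspace, so the KKT conditions are both necessary and sufficient for a global minimum.

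Next, form the Lagrangian
\begin{equation*}
L(\pi_1, \lambda) = (\pi_1 - \hat{\pi}_1)'(\pi_1 - \hat{\pi}_1) - 2\lambda'(A\pi_1 - b),
\end{equation*}
and take the stationarity condition $\partial L/\partial \pi_1 = 2(\pi_1 - \hat{\pi}_1) - 2A'\lambda = 0$, which yields $\pi_1 = \hat{\pi}_1 + A'\lambda$. Substituting into the constraint gives $A\hat{\pi}_1 + AA'\lambda = b$, hence $\lambda = (AA')^{-1}(b - A\hat{\pi}_1)$. Plugging back produces the stated closed form
\begin{equation*}
\pi_1(k+1) = \hat{\pi}_1 + A'(AA')^{-1}(b - A\hat{\pi}_1),
\end{equation*}
which, upon restoring the notation $A = A_1^*[\pi_2(k)]$ and $b = b_1^*[\pi_2(k)]$, matches \eqref{solution}.

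The only technical hurdle is ensuring that $AA'$ is invertible, which is equivalent to $A_1^*[\pi_2(k)]$ having full row rank. This is precisely the reason for the condition that $2J$ be larger than $N_m$ noted just before the proposition: the grid must be chosen fine enough that the MRS restrictions together with the unit-mass normalization remain linearly independent (generically true under the bilinear structure in \eqref{bilinear:ab}). If $AA'$ is singular at some iteration, one replaces the inverse with the Moore--Penrose pseudoinverse, and the same derivation identifies the minimum-norm correction that brings $\hat{\pi}_1$ onto the constraint set. A brief remark would also note that the analogous closed form for $\pi_2(k+1)$ follows by symmetry, by swapping the roles of the two subscripts.
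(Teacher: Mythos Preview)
Your proof is correct and follows essentially the same route as the paper: both reduce the problem to a generic equality-constrained least-squares program, apply the Lagrange multiplier (KKT) conditions, solve the linear stationarity equation for the primal variable, substitute into the constraint to recover the multiplier, and arrive at the same closed form. Your added remarks on the invertibility of $AA'$ and the pseudoinverse fallback parallel the paper's own post-proof remark on regularizing $AA'$ by shrinkage.
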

\begin{proof}
	The optimization problem in \eqref{numericaloptimization} is of the following type:
	\[
		\min_w \; (w-w_0)'(w-w_0) \; \; \text{s.t.} \; \; Aw=b.
	\]
	Let us introduce a Lagrange multiplier $\lambda$. The first-order conditions are, then:
	\beq
	\label{system}
		2(w-w_0)-A'\lambda=0 \; \; \text{and} \; \; Aw=b.
	\eeq
	The first condition can be written as:
	\beq
	\label{w}
		w=w_0+\frac{1}{2}A'\lambda.
	\eeq
	By plugging this expression for $w$ into the second condition, we obtain:
	\beq
	\label{lambda2}
		\frac{\lambda}{2}=(AA')^{-1}(b-Aw_0).
	\eeq
	Together, \eqref{w} and \eqref{lambda2} imply:
	\[
		w=w_0+A'(AA')^{-1}(b-Aw_0).
	\]
	This expression is exactly the form of the solution in the statement of this proposition.
\end{proof}

\begin{remark}
	Instead of minimizing the $\ell_2$-distance between $\pi_j$ and $\hat{\pi}_j$, we could use an information criterion, as in \citet{kitamura-stutzer}. However, we would no longer obtain a closed-form solution for $\pi_1(k)$ and $\pi_2(k)$, and we would have to solve a non-linear system in $\lambda$ with dimension $N_m$.
\end{remark}

\begin{remark}
	The inversion of $AA'$ is numerically feasible, but can be made more robust numerically by including a regularization. In particular, it can be replaced with the inversion of $AA'+\ve I$, where $\ve>0$ is a small regularization parameter. This regularization by shrinkage (see, for example, \citeauthor{ledoit}, \citeyear{ledoit}) is preferable to the machine learning practice which replaces $AA'$ with the diagonal matrix made up of the diagonal elements of $AA'$. In practice, it can also be easier to solve the system in \eqref{system}, instead of using \eqref{solution}.
\end{remark}

\begin{remark}
	The optimization problem in \eqref{numericaloptimization} has not explicitly accounted for the positivity of $\pi_1$ and $\pi_2$. We can incorporate positivity by adjusting after each step of the algorithm.
\end{remark}

\section{The Nielsen Database}
\label{app:d}

In this appendix, we provide more information about the Nielsen Homescan Consumer Panel (NHCP). First, we describe the individual records, then the representativeness of our restricted sample.

\subsection{Individual Records}

As mentioned in the text, all purchases are continuously recorded by each consumer. The left panel in Figure \ref{fig:expdaily} displays the daily (total and alcohol-specific) expenditures of a given consumer in October of 2016. During this month, this consumer purchased 166 units of 97 distinct goods (prior to aggregation). The right panel in Figure \ref{fig:expdaily} displays the daily number of units purchased by this consumer. These \mbox{purchases were all} made at three distinct retailers. 

\begin{figure}
	\centering
	\begin{subfigure}[b]{0.48\linewidth}
      \centering
      \includegraphics[scale=0.4]{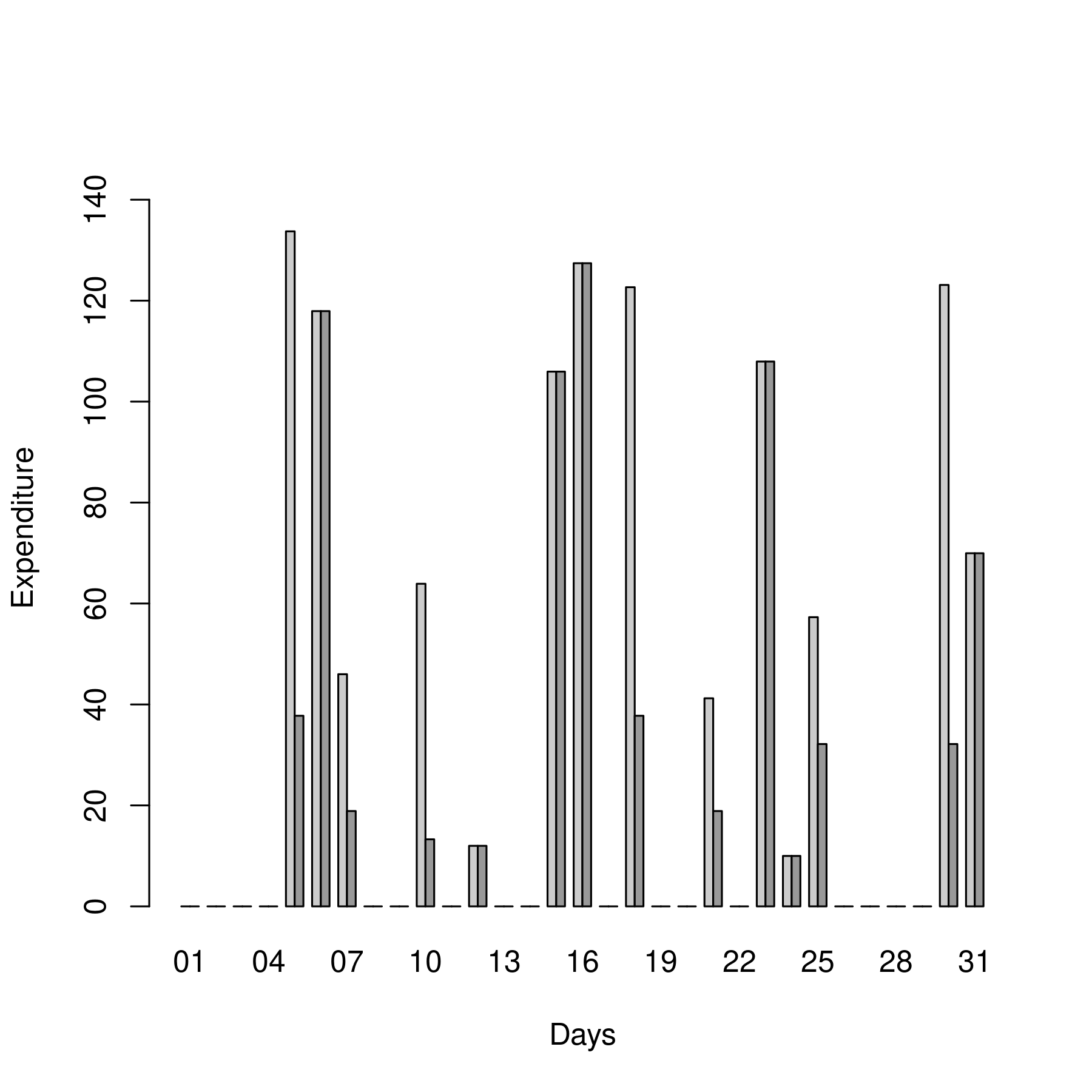}
      \end{subfigure}
      \begin{subfigure}[b]{0.48\linewidth}
      \centering
      \includegraphics[scale=0.4]{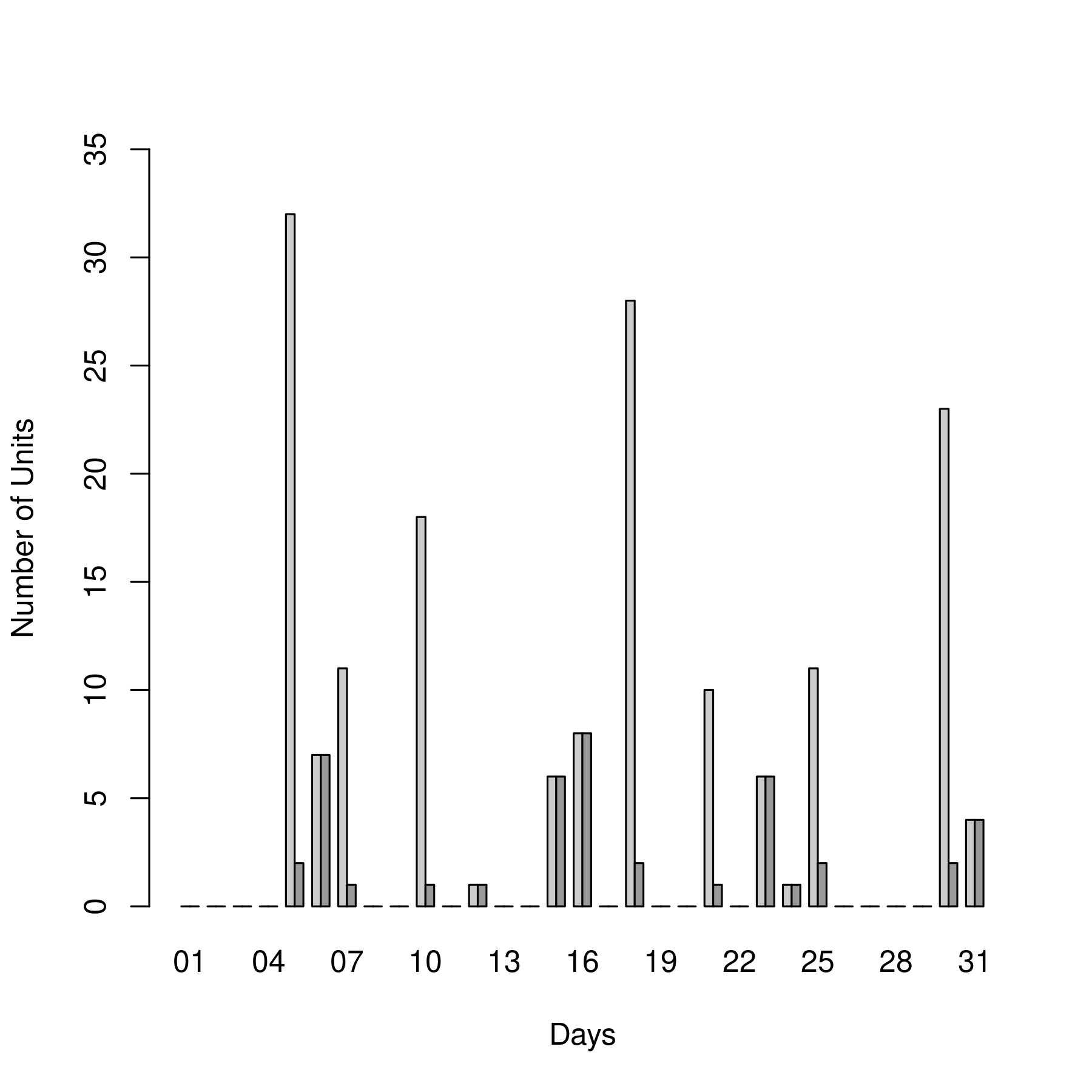}
      \end{subfigure}
	\caption{\textbf{Daily Purchases.} On the left, we illustrate daily expenditure for a single consumer in October of 2016. On the right, we illustrate the number of units purchased by this consumer. Light shaded bars represent all purchases and dark shaded bars represent alcohol-related purchases.}
	\label{fig:expdaily}
\end{figure}

\subsection{Representativeness of Sample}

Let us now report the demographics of the households in our data and compare these demographics with the Current Population Survey (CPS). See \citet{guha-ng} and \citet{dobronyi} for additional summary statistics.

Table \ref{sum:1} gives the distribution of household size in our sample and the CPS. These distributions are similar. Our sample has a slightly smaller proportion \mbox{of households} with a single member, and a slightly larger proportion of households with two members. This difference can be explained by single-member households simply \mbox{buying less} alcohol.

\begin{table}
      \centering
      \caption{Household size in our sample and in the 2017 Annual Social and Economic Supplement (ASEC) of the CPS. CPS numbers are in thousands.}
      \begin{tabular}{crrrr}
            \midrule \midrule
                 & \multicolumn{2}{c}{Sample} & \multicolumn{2}{c}{CPS} \\ \cmidrule{2-5}
            Size & Number & Proportion & Number & Proportion \\ \midrule
            1  & 5,862  & 0.2090 & 35,388 & 0.2812 \\
            2  & 12,768 & 0.4554 & 42,785 & 0.3400 \\
            3  & 4,121  & 0.1469 & 19,423 & 0.1543 \\
            4  & 3,395  & 0.1210 & 16,267 & 0.1292 \\
            5  & 1,288  & 0.0459 & 7,548  & 0.0599 \\
            6  & 422    & 0.0150 & 2,813  & 0.0223 \\
            7+  & 180   & 0.0064 & 1,596  & 0.0126 \\ \midrule
            Total & 28,036 & 1.0000 & 125,819 & 1.0000 \\ \midrule \midrule
      \end{tabular}
      \label{sum:1}
\end{table}

Table \ref{sum:2} describes the distribution of household income in our sample and the CPS. Once again, these two distributions are quite similar, but our sample has a higher proportion of households earning between \$70,000 and \$99,999.

\begin{table}
      \centering
      \caption{Annual household income in our sample and in the 2017 Annual Social and Economic Supplement (ASEC) of the CPS. CPS numbers are in thousands.}
      \begin{tabular}{crrrr}
            \midrule \midrule
                 & \multicolumn{2}{c}{Sample} & \multicolumn{2}{c}{CPS} \\ \cmidrule{2-5}
            Income & Number & Proportion & Number & Proportion \\ \midrule
            Under \$5,000         & 265   & 0.0094 & 4,138 & 0.0327 \\
            \$5,000 to \$9,999    & 274    & 0.0097 & 3,878 & 0.0307 \\
            \$10,000 to \$14,999  & 638   & 0.0227 & 6,122 & 0.0485 \\
            \$15,000 to \$19,999  & 694   & 0.0247 & 5,838 & 0.0462 \\
            \$20,000 to \$24,999  & 1,147   & 0.0409 & 6,245 & 0.0494 \\ 
            \$25,000 to \$29,999  & 1,282   & 0.0457 & 5,939 & 0.0470 \\
            \$30,000 to \$34,999  & 1,480  & 0.0527 & 5,919 & 0.0468 \\
            \$35,000 to \$39,999  & 1,432   & 0.0510 & 5,727 & 0.0453 \\
            \$40,000 to \$44,999  & 1,449   & 0.0516 & 5,487 & 0.0434 \\
            \$45,000 to \$49,999  & 1,637   & 0.0583 & 5,089 & 0.0403 \\
            \$50,000 to \$59,999  & 2,878   & 0.1026 & 9,417 & 0.0746 \\
            \$60,000 to \$69,999  & 2,380   & 0.0848 & 8,213 & 0.0650 \\
            \$70,000 to \$99,999  & 6,459  & 0.2303 & 19,249 & 0.1524 \\
            \$100,000+            & 6,021  & 0.2147 & 34,963 & 0.2769 \\ \midrule
            Total & 28,036 & 1.0000 & 126,224 & 1.0000 \\ \midrule \midrule
      \end{tabular}
      \label{sum:2}
\end{table}

Tables \ref{sum:3:eldest} gives the distribution of the age of the eldest head of the household in our sample and the age of the householder in the CPS. There is no direct comparison for these statistics, as the eldest head may differ from the householder. This aspect of the data can explain why our sample seems to be older than the \mbox{general population.}

\begin{table}
      \centering
      \caption{Age of eldest household head in our sample and the householder in the 2017 Annual Social and Economic Supplement (ASEC) of the CPS. CPS numbers are in thousands.}
      \begin{tabular}{crrrr}
            \midrule \midrule
                 & \multicolumn{2}{c}{Sample} & \multicolumn{2}{c}{CPS} \\ \cmidrule{2-5}
            Age & Number & Proportion & Number & Proportion \\ \midrule
            Under 20  &  4  & 0.0001 & 753    & 0.0059 \\
            20 to 24  &  54  & 0.0019 & 5,608  & 0.0445 \\
            25 to 29  &  476  & 0.0169 & 9,453  & 0.0751 \\
            30 to 34  &  1,201  & 0.0428 & 10,594 & 0.0842 \\
            35 to 39  &  1,817  & 0.0648 & 10,651 & 0.0846 \\ 
            40 to 44  &  1,893  & 0.0675 & 10,571 & 0.0840 \\
            45 to 49  &  2,398  & 0.0855 & 11,115 & 0.0883 \\
            50 to 54  &  3,058  & 0.1090 & 12,180 & 0.0968 \\
            55 to 64  &  7,869  & 0.2806 & 23,896 & 0.1899 \\
            65 to 74  &  6,507  & 0.2320 & 17,551 & 0.1394 \\
            75+       &  2,759  & 0.0984 & 13,448 & 0.1068 \\ \midrule
            Total     & 28,036 & 1.0000 & 125,819 & 1.0000 \\ \midrule \midrule
      \end{tabular}
      \label{sum:3:eldest}
\end{table}

There may also exist another source of non-representativeness: A consumer might behave differently because she is being observed. For example, she might increase her expenditure to give the impression that she she is richer. This type of behaviour can be observed when the period of observation is short, but is not usually sustainable in the long term. This effect should be negligible over the four months considered in the illustration in Section \ref{sec:illustration}.

\section{SARA Model with Taste Dependence}
\label{app:dependence}

Consider the SARA model with taste dependence described in Section \ref{sec:conclusion}. \mbox{In this mod-} el, dependence is introduced using a ``common'' stochastic taste parameter such that:
\beq
	U(x;\pi)=-\mathbb{E}_{\pi}\big[\exp(-(A_c+A_1)x_1-(A_c+A_2)x_2)\big],
\eeq
where $A_c$, $A_1$, $A_2$ are independent non-negative taste parameters with distributions $\pi_c$, $\pi_1$, and $\pi_2$. This utility function can be written in terms of the Laplace transforms of these taste parameters:
\beq
	U(x;\pi)=-\Psi_c(x_1+x_2)\Psi_1(x_1)\Psi_2(x_2),
\eeq
where $\Psi_c$, $\Psi_1$, and $\Psi_2$ are the Laplace transforms of $A_c$, $A_1$, and $A_2$. Therefore, its marginal rate of substitution has the form:
\beq
\label{eq:mrs:common}
	\text{MRS}(x;\pi)=\frac{d\log \Psi_c(x_1+x_2)/dx+d\log \Psi_1(x_1)/dx}{d\log \Psi_c(x_1+x_2)/dx+d\log \Psi_2(x_2)/dx}.
\eeq
Let us now discuss the possibility to identify the distributions of $A_c$, $A_1$, and $A_2$ from the knowledge of the utility function (not observable), and then from the knowledge of the marginal rate of substitution (which can be obtained by inverting the demand).

\subsection{Identification from the Utility Function}

We first ask whether the knowledge of the utility function is equivalent to the knowledge of the distributions $\pi_c$, $\pi_1$, and $\pi_2$. If the utility function $U(x;\pi)$ is \mbox{known, then:}
\beq
	\log (-U(x;\pi))=\log \Psi_c(x_1+x_2)+\log\Psi_1(x_1)+\log\Psi_2(x_2),
\eeq
is known. By taking the cross-derivative of this expression with respect to $x_1$ and $x_2$, we can also obtain knowledge of:
\beq
	\frac{d^2\log\Psi_c(x_1+x_2)}{dx^2},
\eeq
for any $x_1,x_2>0$. Consequently, $\log \Psi_c(x)$ is known up to an affine function $\alpha x + c$. Moreover, since $\log \Psi_c(0)=0$, we can identify $\log \Psi_c(x)$ up to a multiplicative factor $\alpha$. Equivalently, $A_c$ can be replaced with $A_c-\alpha$, and $A_j^*$ can be replaced with $A_j^*+\alpha$, for $j=1,2$, without changing the utility function. This reasoning leads to the follow- ing result:

\begin{proposition}
\label{app:prop:supp}
	If preferences are SARA with a common stochastic taste parameter, $A_c$, $A_1$, and $A_2$ are independent, and the distributions $\pi_c$, $\pi_1$, and $\pi_2$ have support $(0,\infty)$, then these distributions are identified from the observation of utility $U(x;\pi)$.
\end{proposition}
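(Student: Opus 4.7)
The plan is to work directly with the factorisation
\[
\log(-U(x;\pi)) = \log\Psi_c(x_1+x_2)+\log\Psi_1(x_1)+\log\Psi_2(x_2),
\]
already noted in the excerpt, and peel off the three Laplace transforms one at a time. First I would apply $\partial^2/\partial x_1\partial x_2$ to the identity above. The two one-dimensional terms are annihilated, so
\[
\frac{\partial^2 \log(-U(x;\pi))}{\partial x_1\,\partial x_2} \;=\; (\log\Psi_c)''(x_1+x_2),
\]
which determines the function $(\log\Psi_c)''(s)$ for every $s>0$. Integrating twice with the normalisation $\log\Psi_c(0)=0$ yields $\log\Psi_c(s)$ up to an additive term of the form $\alpha s$, i.e.\ $\Psi_c$ is pinned down up to multiplication by an exponential $e^{\alpha s}$.

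The main obstacle is precisely this one-parameter ambiguity: replacing $(\Psi_c,\Psi_1,\Psi_2)$ by $(e^{\alpha\cdot}\Psi_c,e^{-\alpha\cdot}\Psi_1,e^{-\alpha\cdot}\Psi_2)$ leaves $U(x;\pi)$ unchanged, because the factor $e^{\alpha(x_1+x_2)}e^{-\alpha x_1}e^{-\alpha x_2}$ collapses to $1$. At the level of random variables this corresponds to the substitution $A_c\mapsto A_c-\alpha$, $A_j\mapsto A_j+\alpha$ ($j=1,2$). I would eliminate this ambiguity using the support assumption: if $\alpha>0$, then $A_c-\alpha$ puts mass on the negative half-line (since $\mathrm{ess\,inf}\,A_c=0$ because $\mathrm{supp}(\pi_c)=(0,\infty)$), so its Laplace transform is not that of a distribution on $(0,\infty)$; and if $\alpha<0$, the same argument rules out $A_j+\alpha$ for $j=1,2$. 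Hence $\alpha=0$ is forced, and $\Psi_c$ is uniquely determined.

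With $\Psi_c$ in hand, I would set $x_2=0$ in the factorisation. Using $\Psi_2(0)=1$, this gives
\[
\Psi_1(x_1) \;=\; \frac{-U(x_1,0;\pi)}{\Psi_c(x_1)},
\]
and symmetrically $\Psi_2(x_2)=-U(0,x_2;\pi)/\Psi_c(x_2)$. Thus $\Psi_1$ and $\Psi_2$ are also recovered from $U(\,\cdot\,;\pi)$. Finally, because Laplace transforms of non-negative random variables are in one-to-one correspondence with their distributions (Theorem 1a, Ch.~13 in \citeauthor{feller-1968}, \citeyear{feller-1968}, as cited in Section \ref{within:SARA}), the triple $(\pi_c,\pi_1,\pi_2)$ is identified, completing the proof. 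The only subtle step is the second paragraph; everything else is routine calculus and the standard uniqueness theorem for Laplace transforms.
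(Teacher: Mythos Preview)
Your proposal is correct and follows essentially the same route as the paper: take the cross-derivative of $\log(-U)$ to isolate $(\log\Psi_c)''$, integrate twice to recover $\log\Psi_c$ up to a linear term $\alpha s$, observe that this residual ambiguity corresponds to the shift $A_c\mapsto A_c-\alpha$, $A_j\mapsto A_j+\alpha$, and then use the full-support assumption on $(0,\infty)$ to force $\alpha=0$. Your write-up is simply more explicit than the paper's one-line proof---in particular, you spell out the recovery of $\Psi_1$ and $\Psi_2$ by evaluating at $x_2=0$ and $x_1=0$, which the paper leaves implicit in ``The identification follows.''
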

\begin{proof}
	If $\pi_c$, $\pi_1$, and $\pi_2$ have support $(0,\infty)$, then $\alpha=0$. The identification follows.
\end{proof}

Proposition \ref{app:prop:supp} shows that a condition on the supports of the taste distributions is needed for identification.

\subsection{Identification Issue}

Let us now consider the possibility of two distinct sets of taste distributions resulting in the same preference ordering (equivalently, the same marginal rate of substitution).

\begin{proposition}
	If preferences are SARA with a common stochastic taste parameter, and $A_c$, $A_1$, and $A_2$ are independent, then $(\Psi_c,\Psi_1,\Psi_2)$ and $(\Psi_c^{\nu},\Psi_1^{\nu},\Psi_2^{\nu})$ lead to the same preference ordering, for all positive scalars $\nu>0$.
\end{proposition}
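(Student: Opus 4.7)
The plan is to mimic, almost verbatim, the argument used for the earlier proposition that $\Phi(x;\pi)$ and $\Phi(x;\pi)^{\nu}$ generate the same preference ordering in the standard SARA model. The only additional ingredient is the product structure of the utility when the common component is present.

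First, I would write the utility explicitly in terms of the three Laplace transforms as $U(x;\pi)=-\Psi_c(x_1+x_2)\Psi_1(x_1)\Psi_2(x_2)$, which is exactly the factorization already given in the excerpt. Let $\tilde{U}(x;\pi)$ denote the utility associated with $(\Psi_c^{\nu},\Psi_1^{\nu},\Psi_2^{\nu})$. The key observation is that, because the power is common across the three factors,
\begin{equation*}
  \tilde{U}(x;\pi)
  = -\Psi_c(x_1+x_2)^{\nu}\Psi_1(x_1)^{\nu}\Psi_2(x_2)^{\nu}
  = -\bigl[\Psi_c(x_1+x_2)\Psi_1(x_1)\Psi_2(x_2)\bigr]^{\nu}
  = \phi_{\nu}\bigl(U(x;\pi)\bigr),
\end{equation*}
where $\phi_{\nu}(u)=-(-u)^{\nu}$ is strictly increasing on $u<0$. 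Since $U(x;\pi)<0$ on $R$, $\tilde{U}$ is a strictly monotonic transformation of $U$, and so the two utilities represent the same preference ordering.

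As a sanity check consistent with the display labelled \eqref{eq:mrs:common}, one can equivalently verify invariance at the level of the MRS: replacing each $\Psi$ by $\Psi^{\nu}$ sends every $d\log\Psi/dx$ to $\nu \cdot d\log\Psi/dx$, so the common factor $\nu$ cancels from the numerator and denominator of \eqref{eq:mrs:common}, leaving the MRS unchanged. This gives an independent route to the same conclusion and confirms that the result really hinges on the fact that all three Laplace transforms are raised to the \emph{same} power; if the powers differed across components, the cancellation would fail.

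There is essentially no obstacle here. The only thing to be careful about is the domain issue for $\phi_{\nu}$: one must note that $U(x;\pi)\in(-1,0)$ (or at least is strictly negative on the relevant set), so that $(-U)^{\nu}$ is well defined and $\phi_{\nu}$ is strictly increasing there. Since each $\Psi_j$ is the Laplace transform of a non-negative random variable evaluated at a non-negative argument, $\Psi_j(\cdot)\in(0,1]$, which delivers $U(x;\pi)\in[-1,0)$ automatically, closing the argument.
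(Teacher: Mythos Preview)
Your proposal is correct. The paper's own proof is a one-liner that is exactly your ``sanity check'': it simply observes that replacing each $\Psi$ by $\Psi^{\nu}$ multiplies every $d\log\Psi/dx$ in \eqref{eq:mrs:common} by $\nu$, which cancels from numerator and denominator, leaving the MRS unchanged. Your primary argument via the strictly increasing transformation $\phi_{\nu}(u)=-(-u)^{\nu}$ mirrors the earlier Proposition for the standard SARA model and is an equally valid (and arguably more transparent) route; it has the mild advantage of making explicit that the result rests on ordinal equivalence of utilities, while the paper's MRS shortcut is terser but relies on the reader having the MRS expression in front of them.
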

\begin{proof}
	The proof is a direct consequence of the expression for $\text{MRS}(x;\pi)$ in \eqref{eq:mrs:common}.
\end{proof}

This type of identification issue has already been encountered in the SARA model with independent taste parameters, as described in Section \ref{within:SARA}. It is not surprising that we have a similar result in this model.

\subsection{Recursive Case}

We are left with the question: Is the identification issue described above the only type of issue that we encounter in this model? First, let us consider the case in which $A_1=0$. In this case, the total taste parameter for good 1 is $A_c$, and the total taste parameter for good 2 is $A_c+A_2$. Therefore, the consumer's risk aversion for \mbox{drinks wi-} th high ABV is systematically larger than her risk aversion for drinks with low ABV.

\begin{proposition}
	If preferences are SARA with a common stochastic taste parameter, $A_1=0$, $A_c$ and $A_2$ are independent, and the distributions $\pi_c$ and $\pi_2$ have support $(0,\infty)$, then these distributions are identified up to a power transform of their Laplace transforms.
\end{proposition}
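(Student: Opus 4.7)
The plan is to mimic the strategy of Proposition 6 after first specializing the utility function and its MRS to the recursive setting, exploiting the factorization induced by $A_1=0$ together with the independence of $A_c$ and $A_2$. With these restrictions, the utility function becomes
\beq
	U(x;\pi)=-\mathbb{E}_{\pi}\big[\exp(-A_c(x_1+x_2)-A_2x_2)\big]=-\Psi_c(x_1+x_2)\Psi_2(x_2),
\eeq
so the MRS can be expressed in terms of the log-derivatives $\psi_c(s)\equiv d\log\Psi_c(s)/ds$ and $\psi_2(s)\equiv d\log\Psi_2(s)/ds$. A direct differentiation yields
\beq
	\text{MRS}(x;\pi)=\frac{\psi_c(x_1+x_2)}{\psi_c(x_1+x_2)+\psi_2(x_2)}.
\eeq
Under the full-support condition on $\pi_c$ and $\pi_2$, both $\psi_c$ and $\psi_2$ are strictly negative on $(0,\infty)$, so the denominator does not vanish and $\text{MRS}\in(0,1)$.

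Next, I would isolate the sharpest identifiable consequence of the MRS by rearranging the display above to obtain
\beq
	\frac{\text{MRS}(x;\pi)}{1-\text{MRS}(x;\pi)}=\frac{\psi_c(x_1+x_2)}{\psi_2(x_2)},
\eeq
which is a known function of $x$ whenever the MRS is identified. Suppose now that $(\Psi_c,\Psi_2)$ and $(\tilde{\Psi}_c,\tilde{\Psi}_2)$ generate the same MRS on $R$. Then, for every $x_1,x_2>0$,
\beq
	\frac{\psi_c(x_1+x_2)}{\tilde{\psi}_c(x_1+x_2)}=\frac{\psi_2(x_2)}{\tilde{\psi}_2(x_2)}.
\eeq
The left-hand side depends only on $s\equiv x_1+x_2$ and the right-hand side depends only on $x_2$; since these can be varied independently on $R$, both sides equal a common constant $\nu$. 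Positivity of $\nu$ follows because numerator and denominator on each side have the same (negative) sign under full support. Integrating from zero and using the normalization $\Psi_c(0)=\tilde{\Psi}_c(0)=1$ (similarly for $\Psi_2$), I get $\log\tilde{\Psi}_c(s)=\nu^{-1}\log\Psi_c(s)$ and $\log\tilde{\Psi}_2(s)=\nu^{-1}\log\Psi_2(s)$. Equivalently, $\Psi_c$ and $\Psi_2$ are identified up to a common positive power transform, which is exactly the claim.

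The only genuine obstacle is bookkeeping rather than substance: one must verify that the full-support assumption on $\pi_c$ and $\pi_2$ is strong enough to guarantee that $\psi_c$ and $\psi_2$ never vanish on $(0,\infty)$, so that the division above is legitimate and the ``constant'' $\nu$ is well-defined as a positive number. A subtler point, worth flagging, is that the shift non-identification discussed in the paragraph preceding Proposition \ref{app:prop:supp}---whereby $A_c$ can absorb an additive shift from $A_1$ and $A_2$---is automatically ruled out here by the constraint $A_1=0$: a shift $A_c\mapsto A_c-\alpha$, $A_1\mapsto A_1+\alpha$ would require $A_1$ to cease being identically zero, contradicting the assumption. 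Hence only the ``common power'' indeterminacy, familiar from Proposition 6 and from the SARA identification analysis in Section \ref{within:SARA}, survives.
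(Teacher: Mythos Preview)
Your argument is correct and essentially coincides with the paper's: both extract from the MRS the ratio $\psi_c(x_1+x_2)/\psi_2(x_2)$, show that equality of MRS forces $\psi_c/\tilde\psi_c$ and $\psi_2/\tilde\psi_2$ to be a common positive constant $\nu$, and then integrate using $\Psi(0)=1$; the paper pins down $\nu$ by evaluating at $x_2=0$, while you use the separation-of-variables device from Proposition~\ref{prop:6}. One small bookkeeping fix: $s=x_1+x_2$ and $x_2$ cannot be varied fully independently on $R$ (one always has $s>x_2$), so the clean way to finish is to fix $x_2$ and vary $x_1>0$ to see the left side is constant on $(x_2,\infty)$ for every $x_2>0$, hence constant on all of $(0,\infty)$.
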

\begin{proof}
	If $A_1=0$, then the knowledge of the MRS implies the knowledge of:
	\beq
		\frac{d\log\Psi_c(x_1+x_2)/dx}{d\log \Psi_2(x_2)/dx}.
	\eeq
	By considering $x_2=0$, we have to solve the equation:
	\beq
		\frac{d\log\Psi_c(x_1)}{dx}=\frac{d\log\Psi_2(0)/dx}{d\log\Psi_2^*(0)/dx}\cdot \frac{d\log\Psi_c^*(x_1)}{dx}
	\eeq
	Therefore, there is a positive scalar $\nu$ such that:
	\beq
		\frac{d\log\Psi_c(x_1)}{dx}=\nu \frac{d\log\Psi_c^*(x_1)}{dx},
	\eeq
	and the result follows by integration, using $\Psi_c(0)=\Psi_c^*(0)=1$.

\end{proof}

\subsection{General Case}

Let us now consider the general case. The proof of the following result uses the limiting behaviour of the Laplace transform of a positive random variable as $x$ \mbox{tends to infinity.}

\begin{proposition}
	If preferences are SARA with a common stochastic taste parameter, $A_c$, $A_1$, and $A_2$ are independent, and the distributions $\pi_c$, $\pi_1$, and $\pi_2$ have support $(0,\infty)$, then $(\Psi_c,\Psi_1,\Psi_2)$ and $(\Psi_c^*,\Psi_1^*,\Psi_2^*)$ lead to the same preference ordering if, and only if, for some $\nu>0$, we have:
	\beq
		\Psi_c=(\Psi_c^*)^{\nu}, \; \; \Psi_1=(\Psi_1^*)^{\nu}, \; \; \text{and} \; \; \Psi_2=(\Psi_2^*)^{\nu}.
	\eeq
\end{proposition}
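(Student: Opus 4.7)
The plan is to rewrite the preference equivalence as a functional equation in the log-Laplace transforms $L_j=\log\Psi_j$, differentiate it, and then exploit the asymptotic behaviour of the Laplace transform of a positive random variable with support $(0,\infty)$ to force the three scaling exponents to be a common value $\nu$. The ``if'' direction is immediate: when $\Psi_j=(\Psi_j^*)^{\nu}$ for all $j\in\{c,1,2\}$, one has $-U(x;\pi)=\Psi_c(y)\Psi_1(x_1)\Psi_2(x_2)=[\Psi_c^*(y)\Psi_1^*(x_1)\Psi_2^*(x_2)]^{\nu}=(-U(x;\pi^*))^{\nu}$, so $U$ is a strictly increasing transformation of $U^*$ and the preference orderings coincide.

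For the ``only if'' direction, suppose the two preference orderings coincide. Since $U$ and $U^*$ take values in $[-1,0)$, there exists a strictly increasing $\tilde{\phi}:(0,1]\to(0,1]$ with $\tilde{\phi}(1)=1$ such that $-U^*=\tilde{\phi}(-U)$. Set $\psi(u)=\log\tilde{\phi}(e^u)$; then $\psi$ is strictly increasing on $(-\infty,0]$ with $\psi(0)=0$, and, writing $y=x_1+x_2$,
\[
L_c^*(y)+L_1^*(x_1)+L_2^*(x_2)=\psi\bigl(L_c(y)+L_1(x_1)+L_2(x_2)\bigr).
\]
Differentiating this identity in $x_1$ and in $x_2$, with $\lambda\equiv\psi'(L_c(y)+L_1(x_1)+L_2(x_2))$, yields
\[
L_c^{*\prime}(y)+L_j^{*\prime}(x_j)=\lambda\bigl[L_c'(y)+L_j'(x_j)\bigr],\qquad j=1,2.
\]

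The key step invokes the hint. For any $A$ with support $(0,\infty)$, one has $L'(x)=-\mathbb{E}[A\,e^{-Ax}]/\mathbb{E}[e^{-Ax}]\to 0$ as $x\to\infty$: splitting the integrals at an arbitrary $\varepsilon>0$ and using $\mathbb{E}[e^{-Ax}]\ge e^{-\varepsilon x}P(A<\varepsilon)$ (with $P(A<\varepsilon)>0$ by the support assumption) shows that the conditional law under the weight $e^{-Ax}$ concentrates on $\mathrm{essinf}(A)=0$. Applying this to $A_c,A_1,A_2$ and their starred analogues and letting $x_2\to\infty$ in the $j=1$ equation with $x_1$ fixed gives $L_c^{*\prime}(y), L_c'(y), L_2^{*\prime}(x_2), L_2'(x_2)\to 0$, and the equation collapses to $L_1^{*\prime}(x_1)=\bar\lambda\,L_1'(x_1)$, where $\bar\lambda=\lim_{u\to-\infty}\psi'(u)$ is a constant (the limit exists and is independent of $x_1$ because $L_1'(x_1)<0$ and the RHS of the limiting relation is a fixed finite number). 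Since $L_1^*(0)=L_1(0)=0$, integration yields $L_1^*=\bar\lambda L_1$, i.e.\ $\Psi_1^*=\Psi_1^{\bar\lambda}$. The symmetric limit $x_1\to\infty$ in the $j=2$ equation gives $\Psi_2^*=\Psi_2^{\bar\lambda}$, with the same $\bar\lambda$.

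To conclude that $\Psi_c^*=\Psi_c^{\bar\lambda}$, substitute these two relations into the functional equation above and restrict to the curve $x_1+x_2=y$ for a fixed $y$. Differentiating along this curve gives $\bar\lambda[L_1'(x_1)-L_2'(y-x_1)]=\psi'(u)[L_1'(x_1)-L_2'(y-x_1)]$. Because $L_j''>0$ (each $L_j''$ equals a weighted variance of $A_j$, which is nondegenerate under the support assumption), the factor $L_1'(x_1)-L_2'(y-x_1)$ is strictly monotone in $x_1$ and vanishes at most at a single point; hence $\psi'(u)=\bar\lambda$ on a dense subset of its range, and by continuity everywhere. Together with $\psi(0)=0$ this forces $\psi(u)=\bar\lambda u$, and substituting back gives $L_c^*=\bar\lambda L_c$. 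Setting $\nu=1/\bar\lambda>0$ (positive because $\psi$ is strictly increasing, so $\bar\lambda>0$) yields $\Psi_j=(\Psi_j^*)^{\nu}$ for $j\in\{c,1,2\}$. The main obstacle is the verification that $L'(x)\to 0$ as $x\to\infty$ from the support condition; once this limit is available, every other step of the argument is routine.
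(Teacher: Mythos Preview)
Your argument is correct and shares the same core idea as the paper: both hinge on the asymptotic $L_j'(x)\to 0$ as $x\to\infty$ (valid precisely because each $\pi_j$ has support $(0,\infty)$), and both send one coordinate to infinity to decouple the good-specific transforms and force $L_1^{*\prime}/L_1'$ and $L_2^{*\prime}/L_2'$ to be a common constant.

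The packaging, however, differs in a way worth noting. The paper works directly with the MRS equality and reads off ratios of log-derivatives: once $L_j'=\nu L_j^{*\prime}$ for $j=1,2$ is established, substituting back into the cross-multiplied MRS identity yields $(L_c'-\nu L_c^{*\prime})(L_2^{*\prime}-L_1^{*\prime})=0$, so $L_c'=\nu L_c^{*\prime}$ for all $z$ by a genericity argument. You instead introduce the ordinal transformation $\psi$ between the two utilities, derive $L_c^{*\prime}(y)+L_j^{*\prime}(x_j)=\psi'(u)[L_c'(y)+L_j'(x_j)]$, identify $\bar\lambda=\lim_{u\to-\infty}\psi'(u)$ from the same limit argument, and then run a separate step (differentiating along $x_1+x_2=y$ and using the strict convexity $L_j''>0$) to upgrade this to $\psi'\equiv\bar\lambda$ globally, which finally gives $L_c^*=\bar\lambda L_c$. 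Your route is slightly longer but more transparent about what exactly is being identified at each stage; the paper's route is shorter once one has the MRS identity in hand. One small point: your ``dense subset of its range'' claim tacitly requires varying $y$ as well as $x_1$, since for fixed $y$ the argument $u=L_c(y)+L_1(x_1)+L_2(y-x_1)$ only sweeps a bounded interval; this is harmless but should be stated.
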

\begin{proof}
Consider the following two steps to identification:
\bi
	\item For any positive variable $A$, the logarithmic derivative of its Laplace transform $\Psi(x)$ equals:
	\beq
		\frac{d\log\Psi(x)}{dx}=-\frac{\mathbb{E}[A\exp(-Ax)]}{\mathbb{E}[\exp(-Ax)]}=\mathbb{E}_{Q_x}[A],
	\eeq
	where $Q_x$ is the deformed probability distribution with density:
	\beq
		\frac{\exp(-Ax)}{\mathbb{E}[\exp(-Ax)]},
	\eeq
	with respect to the distribution of $A$. This deformed distribution tends to the positive point mass at $0$, and $d\log\Psi(x)/dx$ tends to 0, when $x$ tends to infinity, under the assumption that $A$ has full support.\footnote{Without this assumption, we get $\lim_{x\rightarrow\infty}\frac{d\log\Psi(x)}{dx}=\text{ess inf}\, A$, where $\text{ess inf}$ denotes the essential infimum of the distribution of $A$ (see Theorem 13.2.5 and Remark 13.3 in \citeauthor{informationtheory}, \citeyear{informationtheory}).} This result can be applied to $A_c$, $A_1$, and $A_2$.

	\item We can use the MRS in \eqref{eq:mrs:common} to identify:
	\beq
		\frac{d\log\Psi_1(x_1)/dx-d\log\Psi_2(x_2)/dx}{d\log\Psi_c(x_1+x_2)/dx}.
	\eeq
	Therefore, we can look for the solution(s) to the equality:
	\begin{align}
	\begin{split}
		&\left(\frac{d\log\Psi_1(x_1)}{dx}-\frac{d\log\Psi_2(x_2)}{dx}\right)\frac{d\log\Psi_c^*(x_1+x_2)}{dx} \\
		=&\left(\frac{d\log\Psi_1^*(x_1)}{dx}-\frac{d\log\Psi_2^*(x_2)}{dx}\right)\frac{d\log\Psi_c(x_1+x_2)}{dx}.
	\end{split}
	\end{align}
	Let us now assume that $x_2$ tends to infinity, and denote $z=x_1+x_2$. We obtain:
	\beq
		\frac{d\log\Psi_1(x_1)}{dx}\frac{d\log\Psi_c^*(z)}{dx}\simeq \frac{d\log\Psi_1^*(x_1)}{dx}\frac{d\log\Psi_c(z)}{dx},
	\eeq
	for any $x_1$ and large $z$. Consequently:
	\beq
		\frac{d\log\Psi_1(x_1)/dx}{d\log\Psi_1^*(x_1)/dx}\sim \frac{d\log\Psi_c(z)/dx}{d\log\Psi_c^*(z)/dx}.
	\eeq
	Since the left-hand side of this equivalence is fixed for $z$ tending to infinity, this limit exists:
	\beq
		\lim_{z\rightarrow\infty}\frac{d\log\Psi_c(z)/dx}{d\log\Psi_c^*(z)/dx}=\frac{d\log\Psi_1(x_1)/dx}{d\log\Psi_1^*(x_1)/dx}.
	\eeq
	Moreover, by definition, this limit is independent of $x_1$. We deduce that:
	\beq
		\frac{d\log\Psi_1(x_1)}{dx}=\nu\frac{d\log\Psi_1^*(x_1)}{dx},
	\eeq
	for some positive scalar $\nu$.
	\item The same reasoning can be used as $x_1$ tends to infinity. This procedure yields:
	\beq
		\nu=\lim_{z\rightarrow\infty}\frac{d\log\Psi_c(z)/dx}{d\log\Psi_c^*(z)/dx}=\frac{d\log\Psi_2(x_2)/dx}{d\log\Psi_2^*(x_2)/dx}.
	\eeq
	Therefore, we obtain:
	\beq
		\frac{d\log\Psi_c(x)}{dx}=\nu\frac{d\log\Psi_c^*(x)}{dx} \; \; \text{and} \; \; \frac{d\log\Psi_j(x)}{dx}=\nu\frac{d\log\Psi_j^*(x)}{dx},
	\eeq
	for $j=1,2$, and the result in this proposition follows by integration.
\ee
\end{proof}

\end{document}